\def\colorful{1}
\newcommand{\pparagraph}[1]{\bigskip \noindent {\bf {#1}}}
\newcommand{\yes}{\mathtt{YES}}
\newcommand{\no}{\mathtt{NO}}
\newcommand{\A}{\mathcal{A}}
\newcommand{\D}{\mathcal{D}}
\newcommand{\mpc}{\mbox{\textsf{MPC}}}
\newcommand{\ampc}{\mbox{\textsf{AMPC}}}
\newcommand{\ceil}[1]{\lceil #1 \rceil}
\newcommand{\OCTC}{\textsc{1v2-Cycle}} 
\newcommand{\OCKC}{\textsc{1vk-Cycle}} 
\newtheorem*{theorem*}{Theorem}
\begin{document}

\title{New lower bounds for Massively Parallel Computation \\
from query complexity}

\author{Moses Charikar \and \hspace{-12pt} Weiyun Ma \vspace{10pt} \\
\hspace{-17pt} {\small {\sl Stanford University}}
\and Li-Yang Tan}

\date{\small{\today}}

\maketitle

\begin{abstract}
Roughgarden, Vassilvitskii, and Wang {\footnotesize{(JACM 18)}} recently introduced a novel framework for proving lower bounds for Massively Parallel Computation using techniques from boolean function complexity.  We extend their framework in two different ways, to capture two common features of Massively Parallel Computation: 
\begin{itemize}
\item[$\circ$] \textsl{Adaptivity}, where machines can write to and adaptively read from shared memory throughout the execution of the computation.   Recent work of Behnezhad et al.~{\footnotesize{(SPAA 19)}} showed that adaptivity enables significantly improved round complexities for a number of central graph problems.

\item[$\circ$] \textsl{Promise problems}, where the algorithm only has to succeed on certain inputs.  These inputs may have special structure that is of particular interest, or they may be representative of hard instances of the overall problem. \end{itemize} 

Using this extended framework, we give the first unconditional lower bounds on the complexity of distinguishing whether an input graph is a cycle of length $n$ or two cycles of length $n/2$.  This promise problem, $\OCTC$, has emerged as a central problem in the study of Massively Parallel Computation.  We prove that any adaptive algorithm for the $\OCTC$ problem with I/O capacity $O(n^{\eps})$ per machine requires $\Omega(1/\eps)$ rounds, matching a recent upper bound of Behnezhad et al.

In addition to strengthening the connections between Massively Parallel Computation and boolean function complexity, we also develop new machinery to reason about the latter.   At the heart of our proofs are optimal lower bounds on the query complexity and approximate certificate complexity of the $\OCTC$ problem.  
\end{abstract}

\thispagestyle{empty}

\newpage
\setcounter{page}{1}

\section{Introduction} 
In recent years, there has been a surge of effective parallel computation platforms for processing large-scale data. Examples include MapReduce~\cite{Dean, Dean2}, Spark~\cite{Zaharia}, and Hadoop~\cite{White}. As a theoretical counterpart, the {\sl Massively Parallel Computation} ($\mpc$) model~\cite{Karloff, Goodrich, Andoni, Beame} has been proposed to capture the central features shared by these practical platforms, and is by now a standard framework in the modern study of parallelism.  A partial listing of works on the $\mpc$ model appearing within the past two years includes:~\cite{Ahn, Andoni2, Boroujeni, Brandt, Czumaj, Ghaffari, Harvey, Onak, Roughgarden, Yaroslavtsev, Assadi2, Assadi4, Assadi3, Andoni3, Behnezhad2, Behnezhad, Behnezhad3, Gamlath, Ghaffari4, Ghaffari3, Ghaffari2, Lacki}. Most relevant to this work, a significant amount of attention has been devoted to graph problems, especially graph connectivity and its variants~\cite{Karloff, Rastogi, Beame, Andoni2, Roughgarden, Assadi3, Andoni3, Behnezhad2, Behnezhad}.

In an $\mpc$ computation, an input of length $N$ is partitioned (arbitrarily) and distributed to a collection of machines. The computation proceeds in synchronous rounds. In each round, each machine performs a computation on the messages it receives from the previous round, and then communicates the results of its computation to other machines as input for the subsequent round.  An important feature of the $\mpc$ model is that no restrictions are placed on the computational power of each machine: its messages to the other machines in the next round are an arbitrary function of the messages it receives from the previous round.  The only restriction is on the {\sl I/O capacity} of each machine: in each round, the total size of the messages that any machine receives or sends is at most $S$ bits, where $S$ is smaller than $N$.\footnote{I/O capacity is commonly also referred to as ``space'' in the $\mpc$ literature.} The principal complexity measure in the $\mpc$ model is the {\sl number of rounds} it takes to finish the computation.

\pparagraph{Our contributions.}  Roughgarden, Vassilvitskii, and Wang~\cite{Roughgarden} recently introduced a novel framework for proving $\mpc$ lower bounds using techniques from boolean function complexity.  We extend their framework in two different ways, to capture two common features of Massively Parallel Computation: {\sl adaptivity} and {\sl promise problems}.   Using our extended framework, we give an unconditional, optimal lower bound on the complexity of distinguishing whether an input graph is a cycle of length $n$ or two cycles of length $n/2$ in the adaptive $\mpc$ model of Behnezhad, Duhlipala, Esfandiari, \L\k{a}cki, Shudy, and Mirrokni~\cite{Behnezhad}.  This is a promise problem that has emerged as a central problem in the study of Massively Parallel Computation as it captures an essential bottleneck in the design of efficient graph algorithms.  In addition to strengthening the connections between Massively Parallel Computation and boolean function complexity, we also develop new machinery to reason about the complexity of boolean functions.   

\subsection{Background and motivation}
\label{sec:background}
\vspace{-8pt} 

\pparagraph{The power of $\mpc$ computation.}  While there have been a great number of works designing efficient $\mpc$ algorithms for various problems, there have been significantly fewer {\sl hardness results}---namely, lower bounds on round complexity in the $\mpc$ model.   Intuitively, there should be close parallels between round complexity in the $\mpc$ model and  {\sl depth complexity} in the study of boolean circuits.  The latter has long been a major focus of research in circuit complexity, and by now a range of techniques has been developed for proving depth lower bounds for various types of circuits.  However, compared to circuits, the power and generality of the $\mpc$ model makes proving lower bounds significantly more challenging. The following two central features of the $\mpc$ model, already alluded to above, exemplify this contrast:

\label{two-features} 
\begin{itemize}[leftmargin=0.5cm]
\item[$\circ$] {\sl Input-dependent communication pattern}: the communication pattern among the machines---which machines send messages to which in a given round---can depend on the outcome of each machine's computation (and hence can differ for different inputs).  In circuit complexity, on the other hand, the topology of the circuit---which gates are connected to which via wires---is fixed and the same for all possible inputs. 

\item[$\circ$] {\sl Computationally unbounded machines:}  In each round, each machine can compute an arbitrary function on the message it receives from the previous round. Equivalently, we may view each machine as an arbitrary function $M : \zo^S \to \zo^S$, with no constraints on its computational complexity.  In circuit complexity, on the other hand, the primary focus is on circuits composed of computationally simple gates (such as {\sc And}, {\sc Or}, {\sc Not}, and {\sc Majority}). There has been some work on circuits comprising gates that compute arbitrary functions, but lower bounds against such circuits have been notoriously difficult to prove (see e.g.~\cite{Valiant} and Chapter 13 of \cite{Jukna}).
\end{itemize} 

\vspace{-10pt} 

\pparagraph{The~\cite{Roughgarden} framework: lower bounds via the polynomial method.}  Recent work of Roughgarden, Vassilvitskii, and Wang~\cite{Roughgarden} opens up a new avenue towards proving $\mpc$ lower bounds.  Their work introduces a simple and elegant model for $\mpc$ computation that captures the key features discussed above (input-dependent communication and computationally unbounded machines), and they draw a connection between lower bounds in this model and boolean function complexity \cite{Jukna}.  Specifically, they show that functions computable by efficient $\mpc$ algorithms can be represented as {\sl low-degree polynomials}.  This allows them to leverage a large body of work and techniques on the complexity of polynomial representations, often referred to as ``the polynomial method'' in complexity theory (see e.g.~\cite{Bei93,Aar08,Wil14}), to prove lower bounds on round complexity in the $\mpc$ model.

\label{RVW-framework}
In more detail,~\cite{Roughgarden} shows that a function $g : \zo^N \to \zo$ that is computed by an $R$-round $\mpc$ algorithm using machines with I/O capacity $S$ can be represented by a polynomial of degree $S^R$ (over the reals).  Their techniques extend to randomized algorithms, in which case $g$ has an {\sl approximate} polynomial representation of degree $S^R$: a polynomial $p$ such that $|p(x) - g(x)| \le \frac1{3}$ for all $x \in \zo^N$.  As the main application of their framework and techniques,~\cite{Roughgarden} give the first lower bounds on the $\mpc$ round complexity of basic graph connectivity problems. In particular, they prove the following lower bound for deciding the connectivity of undirected $n$-node graphs:\footnote{\cite{Roughgarden} applies their framework to four graph connectivity problems: undirected connectivity, undirected st-connectivity, and their directed versions.  For concreteness, we will focus on the simplest case of undirected connectivity.}

\hypertarget{RVW}{}
\begin{theorem*}[\cite{Roughgarden}'s lower bound for {\sc Connectivity}]  
Any $\mpc$ algorithm for {\sc Connectivity} using machines with I/O capacity $S = n^{\eps}$ requires $\Omega(1/\eps)$ rounds. 
\end{theorem*}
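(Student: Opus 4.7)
The plan is to combine two ingredients: (i) the structural theorem stated in the excerpt, which says that any $R$-round $\mpc$ algorithm with I/O capacity $S$ computing $g : \zo^N \to \zo$ yields a real polynomial $p$ of degree at most $S^R$ with $|p(x) - g(x)| \le \tfrac{1}{3}$ for all $x$; and (ii) a polynomial lower bound on the approximate degree of the {\sc Connectivity} function on $n$-vertex graphs, whose input length is $N = \binom{n}{2}$.

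For ingredient (ii), the cleanest route is to embed $\mathrm{AND}_{n-1}$ into {\sc Connectivity} via a restriction. Fix a Hamiltonian-path ordering $v_0, v_1, \dots, v_{n-1}$ of the vertices, associate the $i$-th input bit with the potential edge $(v_{i-1}, v_i)$ for $i \in [n-1]$, and hardwire every other potential edge of the graph to be absent. Under this restriction the graph is connected iff every bit equals $1$, so the restriction computes $\mathrm{AND}_{n-1}$. Since approximate degree is non-increasing under variable restriction, and the classical Nisan--Szegedy/Paturi bound gives $\widetilde{\mathrm{deg}}(\mathrm{AND}_m) = \Theta(\sqrt{m})$, we conclude $\widetilde{\mathrm{deg}}(\textsc{Conn}) = \Omega(\sqrt{n})$.

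Combining the two ingredients, any randomized $\mpc$ algorithm with capacity $S = n^\eps$ must satisfy $S^R \ge c\sqrt{n}$ for an absolute constant $c > 0$, i.e.\ $n^{\eps R} \ge c\sqrt{n}$, which forces $R \ge \tfrac{1}{2\eps} - O(1/(\eps \log n)) = \Omega(1/\eps)$, as desired.

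The main obstacle is ingredient (i), the $\mpc$-to-polynomial simulation itself. Its proof proceeds by induction on the number of rounds: the state of each machine at the end of round $r$ is an arbitrary function of the $S$ bits it receives, and we wish to approximate it by a polynomial of degree $S$ in the round-$(r-1)$ approximations. Two subtleties dominate here. First, the input-dependent communication pattern means that the identity of the $S$ bits each machine reads is itself a function of the previous round's state, which must be encoded into the polynomial without inflating its degree. Second, composing approximate polynomials across rounds naively degrades the approximation quality, so one must simultaneously boost accuracy and preserve the $S^R$ degree bound; \cite{Roughgarden} handles this via a symmetrization/error-reduction argument internal to each round. Once ingredient (i) is in place, the embedding argument from AND and the combination described above complete the proof.
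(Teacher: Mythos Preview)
The paper does not prove this theorem; it is cited from \cite{Roughgarden}, and the paper only sketches the approach via the structural lemma (degree $\le S^R$) together with Fact~\ref{fact:connectivity-degree}. Your outline matches that sketch exactly: combine the $\mpc$-to-polynomial simulation with an approximate-degree lower bound for \textsc{Connectivity}, then solve $S^R \ge n^{\Omega(1)}$ for $R$.

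Your route to ingredient (ii) is different and in fact slightly stronger: the Hamiltonian-path restriction embeds $\mathrm{AND}_{n-1}$ and gives $\widetilde{\deg}(\textsc{Conn}) = \Omega(\sqrt{n})$, whereas the paper cites only $\Omega(n^{1/3})$ from \cite{Roughgarden}. Either bound yields $\Omega(1/\eps)$ rounds, so this is fine.

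Your discussion of ingredient (i), however, misdescribes the actual argument. There is no ``composition of approximate polynomials across rounds'' and hence no need for symmetrization or error reduction. The simulation is exact in the deterministic case: each machine's output bit in a given round is a function of at most $S$ bits from the previous round, hence an \emph{exact} multilinear polynomial of degree at most $S$ in those bits; substituting the exact degree-$S^{r-1}$ polynomials for those bits gives an exact degree-$S^r$ polynomial, and induction finishes. The randomized statement then follows in one line: a randomized algorithm is a distribution over deterministic ones, each represented exactly by a degree-$S^R$ polynomial, and the convex combination of these exact polynomials is a degree-$S^R$ polynomial whose value on any input equals the acceptance probability, hence lies within $\tfrac{1}{3}$ of $g(x)$. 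The input-dependent communication pattern is also not a genuine obstacle in \cite{Roughgarden}'s model, since regardless of routing each machine still computes a function of at most $S$ received bits. So your final paragraph overstates the subtlety of (i); once you treat it as a black box from \cite{Roughgarden}, the rest of your argument is correct.
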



Qualitatively, this shows that the round complexity of {\sc Connectivity} has to scale with the I/O capacity of the machines.\footnote{We note that there have been a number of works in the $\mpc$ literature focusing on algorithms with round complexities that are an absolute constant, independent of I/O capacity (e.g.~\cite{Afrati, Koutris}).}   This lower bound also implies, for example, that if the I/O capacity is subpolynomial in $n$ (i.e.~$S = n^{o(1)}$), then solving {\sc Connectivity} requires a superconstant number of rounds.   A notable strength of~\cite{Roughgarden}'s lower bound is that it is independent of the number of machines: such a lower bound holds even if the algorithm is allowed an exponential number of machines in each round.

At the heart of~\cite{Roughgarden}'s proof are results on the complexity of representing {\sc Connectivity} as polynomials: in the deterministic case, they prove an ${n\choose 2}$ lower bound on the degree of {\sc Connectivity}, and in the randomized case, they prove an $\Omega(n^{1/3})$ lower bound on the approximate degree of {\sc Connectivity}:

\begin{fact}
\label{fact:connectivity-degree} 
$\deg({\textsc{Connecitivity}}) \ge {n\choose 2}$ and $\wt{\deg}(\textsc{Connectivity}) \ge \Omega(n^{1/3})$.
\end{fact}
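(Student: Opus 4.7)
My plan for the exact-degree lower bound is to compute the coefficient of the top-degree monomial in the unique multilinear expansion of $\textsc{Connectivity} : \zo^{\binom{n}{2}} \to \zo$. Since $\deg(f)$ equals the largest $|S|$ for which $c_S \neq 0$ in the representation $f(x) = \sum_S c_S \prod_{e \in S} x_e$, it suffices to show $c_S \neq 0$ for $S = \binom{[n]}{2}$. By M\"obius inversion,
\[
c_{\binom{[n]}{2}} \;=\; (-1)^{\binom{n}{2}} \sum_{G \subseteq K_n} (-1)^{|E(G)|}\,[G \text{ is connected}],
\]
where the sum ranges over all labeled graphs on $[n]$ and $[\cdot]$ is the Iverson bracket.

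To evaluate this signed sum, I would invoke the exponential formula from enumerative combinatorics. Let $A(x,y) = \sum_{n \geq 0} \frac{x^n}{n!}(1+y)^{\binom{n}{2}}$ be the bivariate EGF of labeled graphs (by vertices and edges) and $C(x,y)$ the corresponding EGF of labeled connected graphs. The exponential formula gives $A(x,y) = \exp(C(x,y))$. Setting $y = -1$ annihilates the factor $(1+y)^{\binom{n}{2}}$ for every $n \geq 2$, so $A(x,-1) = 1 + x$. Hence $C(x,-1) = \log(1+x) = \sum_{n \geq 1} (-1)^{n-1}\,x^n/n$, and reading off the coefficient of $x^n/n!$ shows that the signed sum above equals $(-1)^{n-1}(n-1)!$, which is nonzero for all $n \geq 1$.

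For the approximate-degree bound, I would exploit the fact that $\wt{\deg}$ is monotone under subcube restrictions: fixing some input bits to constants can only decrease approximate degree. The plan is to restrict $\textsc{Connectivity}$ to a subcube on which it computes a well-understood hard function. Partition $[n]$ into two halves $A$ and $B$ of size $n/2$, set every intra-part edge variable (edges inside $A$ and inside $B$) to $1$, and leave the $(n/2)^2$ cross edges free. On this restriction the two intra-part cliques are always present, so the graph is connected iff at least one cross edge appears---i.e., the restricted function is exactly $\textsc{OR}$ on the cross edges. Invoking the classical Nisan--Szegedy $\Omega(\sqrt{m})$ approximate-degree lower bound for $\textsc{OR}_m$ with $m = (n/2)^2$ yields $\wt{\deg}(\textsc{Connectivity}) = \Omega(n)$, which is even stronger than the claimed $\Omega(n^{1/3})$.

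The main conceptual obstacle is the exact-degree calculation: recognizing that the alternating sum of connected-graph counts admits the closed form $(-1)^{n-1}(n-1)!$ via the exponential formula is the decisive move, and a direct inductive or inclusion--exclusion computation would be substantially messier. By contrast, the approximate-degree step reduces to spotting that intra-part edges are the right variables to fix and invoking an off-the-shelf lower bound.
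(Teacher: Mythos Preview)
The paper does not prove this statement; it is quoted as a fact from \cite{Roughgarden}, so there is no in-paper argument to compare against. That said, both halves of your proposal are correct as written.

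For the exact degree, your computation of the top monomial coefficient via M\"obius inversion and the exponential formula is clean and standard: the identity $\sum_{G\text{ connected on }[n]} (-1)^{|E(G)|} = (-1)^{n-1}(n-1)!$ is exactly what is needed, and your derivation via $C(x,-1)=\log(1+x)$ is the right way to obtain it.

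For the approximate degree, your restriction argument is not only correct but yields a strictly stronger conclusion than the $\Omega(n^{1/3})$ stated in the fact: fixing all intra-part edges to $1$ and leaving the $(n/2)^2$ cross edges free reduces \textsc{Connectivity} to $\textsc{OR}_{n^2/4}$, and $\wt{\deg}(\textsc{OR}_m)=\Theta(\sqrt{m})$ gives $\wt{\deg}(\textsc{Connectivity})=\Omega(n)$. The $\Omega(n^{1/3})$ figure quoted from \cite{Roughgarden} presumably arises from a more generic chain of inequalities (e.g.\ passing through block sensitivity or certificate complexity with lossy polynomial relationships), whereas your direct reduction to $\textsc{OR}$ avoids that loss entirely. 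You might note explicitly in the write-up that you are in fact proving more than the stated bound.
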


\pparagraph{The $\OCTC$ problem and the logarithmic round conjecture.}  
\hyperlink{RVW}{\cite{Roughgarden}'s lower bound} for {\sc Connectivity} is not known to be tight.  It is widely believed that for machines with I/O capacity $S = n^{\eps}$, the number of rounds required is actually $\Omega_\eps(
\log n)$, i.e.~logarithmically many rounds for constant~$\eps$~\cite{Karloff, Rastogi, Beame, Andoni2, Yaroslavtsev, Assadi3, Behnezhad2}.   

In fact, such a lower bound is conjectured to hold even for the simpler promise problem of distinguishing whether an input graph is a cycle of length $n$ or two cycles of length $n/2$ \cite{Yaroslavtsev}.  Formally, the partial function $\OCTC: \Delta_\OCTC \to \zo$ is defined on the domain $\Delta_\OCTC \subset \zo^{{n \choose 2}}$, which consists of all $n$-node graphs that is either a cycle of length $n$ or two disjoint cycles each of length $n/2$, and 
\[ \OCTC(G) = 
\begin{cases}
1 & \text{if $G$ is a cycle of length $n$} \\
0 & \text{if $G$ is two disjoint cycles each of length $n/2$.}
\end{cases}    \]

We observe that {\sc Connectivity} extends $\OCTC$,\footnote{Meaning that the $1$-inputs of $\OCTC$ are a subset of the $1$-inputs of {\sc Connectivity}, and likewise for the $0$-inputs.} so indeed it can only be easier to solve the $\OCTC$ problem (or equivalently, lower bounds against $\OCTC$ yield lower bounds against {\sc Connectivity}).  As mentioned above, it has been conjectured that {\sc Connectivity} requires logarithmically many rounds even when restricted to the promise instances of $\OCTC$; we call this the ``Logarithmic-round $\OCTC$ conjecture:''

\hypertarget{OCTC-conjecture}{} 
\begin{conjecture}[Logarithmic round $\OCTC$ conjecture \cite{Yaroslavtsev}]
\label{OCTC-conjecture}
Any $\mpc$ algorithm for the $\OCTC$ problem using machines with I/O capacity $S = n^{\eps}$ requires $\Omega_\eps(\log n)$ rounds. 
\end{conjecture}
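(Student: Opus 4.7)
The plan is to attack \hyperlink{OCTC-conjecture}{Conjecture~\ref{OCTC-conjecture}} by first identifying a barrier inside the \cite{Roughgarden} polynomial-method framework and then circumventing it with an information-theoretic argument. Observe that the \cite{Roughgarden} representation of any $R$-round, I/O-capacity-$S$ protocol is a polynomial of degree $S^R$. With $S = n^\eps$, even the maximum conceivable polynomial-degree lower bound on $\OCTC$---bounded by the number of variables $\binom{n}{2}$---can only force $S^R \ge \binom{n}{2}$, giving $R = \Theta(1/\eps)$; the approximate-degree analogue is weaker still. No purely polynomial-method argument can therefore reach $\Omega_\eps(\log n)$ rounds, so a fundamentally new ingredient is required.

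First I would recast $\OCTC$ as a pointer-chasing problem. Fix an arbitrary vertex $v$ and a canonical local orientation of the two edges incident to each vertex; then $\OCTC(G) = 1$ iff the $(n/2)$-th successor of $v$ under this orientation is not $v$ itself. The target is to show that each round of MPC computation at most doubles the ``reachable radius'' along the cycle from any anchor vertex, so that $R$ rounds expose at most $2^R$ successors of $v$, forcing $2^R \gtrsim n/2$ and hence $R = \Omega(\log n)$. I would establish this by a hybrid/round-elimination argument: for each round $r$, show that every machine's outgoing messages are statistically close to a function of the edges lying within an $O(2^r)$-hop neighborhood of some small anchor set jointly determined by the transcripts of the first $r-1$ rounds, with the anchor set itself growing by at most a $\mathrm{poly}(n^\eps)$ factor per round.

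The main obstacle is precisely the pair of features highlighted in Section~\ref{two-features}: computationally unbounded machines and input-dependent routing. A machine in round $r$ may receive $n^\eps$ bits of messages that encode an arbitrary function of edges at arbitrarily far-away vertices, and nothing in the round structure alone prevents the reachable radius from growing by a factor of $n^\eps$ (rather than $2$) per round---an analysis that would merely reproduce the $\Omega(1/\eps)$ bound. A causality/information-theoretic lemma is therefore required of the following shape: on the uniform distribution over $\OCTC$ instances, any bit of any machine's round-$r$ state is statistically close to a function of the edges within $O(2^r)$ of the anchor set. Classical pointer-chasing lower bounds in two-party communication complexity supply a template, but generalizing them to the many-machine, many-round, input-partitioned MPC setting is exactly the step that has eluded prior work.

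A natural intermediate milestone, which I would pursue first, is to prove the conjecture for protocols whose round-$r$ communication graph is \emph{oblivious}---fixed in advance, independently of the input---thereby isolating the input-dependent-routing difficulty. In that restricted model the hybrid argument appears tractable: one can use the input partitioning to define the anchor set for round $1$ and then inductively track information flow along the fixed communication graph, invoking the polynomial-method degree bound only to bound the single-round information spread per machine. Removing obliviousness is the step I expect to be genuinely hard, and is plausibly where any proof of the full conjecture must make its essential new contribution; indeed, I would view a successful resolution of the oblivious case as the strongest evidence that input-dependent routing is the true source of difficulty in \hyperlink{OCTC-conjecture}{Conjecture~\ref{OCTC-conjecture}}.
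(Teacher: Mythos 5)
This statement is a \emph{conjecture}, not a theorem: the paper does not prove it, and to date no one has. The paper's actual results are (a) an $\Omega(1/\eps)$ lower bound, which is exponentially weaker than the conjectured $\Omega_\eps(\log n)$, and (b) the observation, via \cite{Behnezhad}, that the conjectured bound is \emph{false} once adaptivity is allowed. So there is no ``paper proof'' to compare against, and your submission is, by its own admission, a research program rather than a proof. The central step --- the ``causality/information-theoretic lemma'' asserting that every machine's round-$r$ state is statistically close to a function of edges within $O(2^r)$ hops of a slowly growing anchor set --- is exactly the content of the conjecture, and you leave it unproved. Your own text concedes that nothing in the round structure prevents the reachable radius from growing by a factor of $n^{\eps}$ per round; absent a proof that it cannot, the argument yields only the $\Omega(1/\eps)$ bound you set out to surpass. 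The oblivious-routing milestone is likewise asserted to ``appear tractable'' without an actual inductive proof, and even there the claim that a fixed communication graph confines information to a doubling radius does not follow from anything you write: a computationally unbounded machine on a fixed wire can still forward a digest of far-away edges computed upstream.

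Two further points you should confront. First, any correct proof must use non-adaptivity in an essential way, since the $\ampc$ algorithm of \cite{Behnezhad} solves $\OCTC$ in $O(1/\eps)$ rounds by adaptively chasing $n^{\eps}$ pointers per round; your proposed lemma, as stated, does not identify which feature of non-adaptive $\mpc$ blocks this, and a lemma that applied verbatim to $\ampc$ would be false. Second, your diagnosis of the polynomial-method barrier is correct and worth stating (degree is capped at $\binom{n}{2}$, so $S^R \ge \binom{n}{2}$ gives only $R = \Omega(1/\eps)$), and the pointer-chasing framing is the standard folklore intuition --- but identifying the right intuition and the right obstacle is not a proof. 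As it stands, the proposal establishes nothing beyond what is already known, and the conjecture remains open.
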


This is by now a widely accepted conjecture in the $\mpc$ literature. Based on this conjecture, a number of works have shown conditional hardness results for a variety of problems in $\mpc$~\cite{Yaroslavtsev, Andoni3, Behnezhad2, Ghaffari4, Lacki}.

\pparagraph{The adaptive $\mpc$ model of~\cite{Behnezhad}.} Very recently, Behnezhad, Duhlipala, Esfandiari, \L\k{a}cki, Shudy, and Mirrokni~\cite{Behnezhad} introduced an {\sl adaptive} extension of the $\mpc$ model, which they call the $\ampc$ model. Motivated by the practical success of the low-latency remote direct memory access framework, the natural notion of {\sl adaptivity} allows machines to adaptively query a shared memory that stores all messages produced in the previous round. In this setting, each message consists of a constant number of words, and the analogue of I/O capacity becomes the following: in any round, each machine can query for at most $S$ messages from the shared memory of the previous round and write at most $S$ messages to that of the current round.~\Cref{fig:TwoModelsComparison} illustrates how adaptivity changes the way in which a machine participates in the computation.

\begin{figure}[htb]
\begin{center}
\includegraphics[scale=.42]{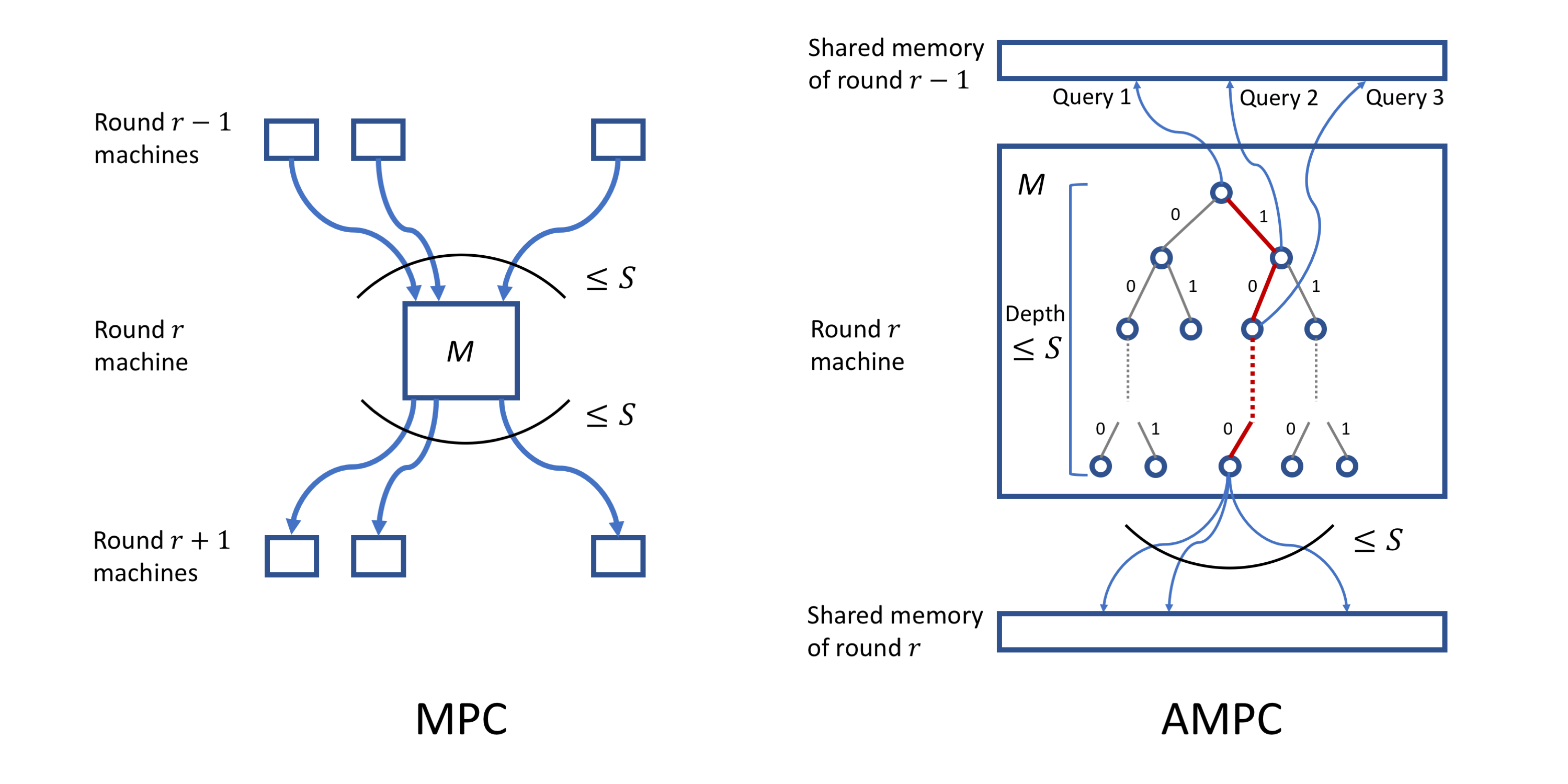}
\end{center}
\vspace{-20pt}
  \captionsetup{width=.9\linewidth}
\caption{A comparison between the computation on a machine $M$ in the (non-adaptive) $\mpc$ model (on the left) and the $\ampc$ model (on the right). In $\mpc$, $M$ ``passively'' receives at most $S$ bits and computes a function on them. In $\ampc$, the computation that $M$ performs follows a decision tree of depth at most $S$: $M$ ``actively'' queries the shared memory, and later queries may depend on the outcomes of earlier ones.}
\label{fig:TwoModelsComparison}
\end{figure}
\medskip

As one would expect, the $\ampc$ model is stronger than the $\mpc$ model: any $\mpc$ computation can be simulated by an $\ampc$ computation using the same number of rounds. Behnezhad et al.~gave $\ampc$ algorithms for a number of central graph problems with round complexities substantially lower than those of the best known $\mpc$ algorithms. In particular, they showed that the Logarithmic-round $\OCTC$ Conjecture (\Cref{OCTC-conjecture}) does {\sl not} hold in the adaptive setting: 

\hypertarget{spaa-upper-bound}{}
\begin{theorem*}[\cite{Behnezhad}'s adaptive algorithm for $\OCTC$] 
For any $\eps < 1$ there is a randomized $\ampc$ algorithm solving the $\OCTC$ problem in $O(1/\eps)$ rounds using machines with I/O capacity $S = n^{\eps}$. Consequently, the Logarithmic-round $\OCTC$ Conjecture (\Cref{OCTC-conjecture}) does not hold in the $\ampc$ model. 
\end{theorem*}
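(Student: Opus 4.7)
The plan is to design an adaptive algorithm based on iterated cycle contraction: in each round the size of the surviving instance shrinks by a factor of roughly $n^\eps$, so after $O(1/\eps)$ rounds the remainder has polylogarithmic size and can be gathered onto a single machine and decided exactly.

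The basic primitive is that a single $\ampc$ machine with I/O capacity $S=n^\eps$ can walk $n^\eps$ steps along the cycle in a single round. This is possible because the $\ampc$ model permits $n^\eps$ \emph{adaptive} queries to shared memory: from a current vertex $u$ with known predecessor $u'$, the machine queries the adjacency list of $u$, selects the neighbor distinct from $u'$, and repeats. Hence any arc of length at most $n^\eps$ can be collapsed into a single virtual edge in one round.

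To exploit this primitive, in each round I would independently mark each current vertex as a \emph{landmark} with probability $p=\Theta(n^{-\eps}\log n)$. A standard Chernoff plus union bound argument shows that, with high probability, every maximal arc between consecutive landmarks has length $O(n^\eps)$. I then allocate one machine per landmark, which executes the walking primitive, halts upon encountering another landmark, and writes the resulting contracted edge to shared memory. The new instance lives on the landmark set, has $\Theta(n^{1-\eps}\log n)$ vertices, and inherits the $\OCTC$ promise, since arc-contraction of a single cycle yields a single cycle and arc-contraction of two disjoint cycles yields two disjoint cycles (provided each component receives a landmark, which holds whenever $n/2 \gg n^\eps/\log n$).

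Iterating, after $r$ rounds the contracted instance has $n^{1-r\eps}\cdot\mathrm{polylog}(n)$ vertices, so after $r=\lceil 1/\eps\rceil$ rounds it has polylogarithmic size and can be collected onto one machine and decided in one additional round. The main delicate point is the bookkeeping required to preserve ``non-backtracking'' semantics across rounds: each landmark must remember the half-edge by which its outgoing walk departed, so that in the next round it can continue along the contracted edge without doubling back. A secondary subtlety is the polylogarithmic slack between the expected arc length $n^\eps/\log n$ and the high-probability upper bound $O(n^\eps)$, which forces the oversampling factor of $\log n$ and must be carried through the analysis to guarantee that no machine ever exceeds its $n^\eps$ I/O budget; this costs only constant factors in the final round count but is the main source of care required in the argument.
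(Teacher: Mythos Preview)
This theorem is not proved in the present paper; it is quoted as a result of Behnezhad et al.\ and used only as background motivation for the lower bound that the paper actually establishes. So there is no ``paper's own proof'' to compare against.

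That said, your sketch is a faithful reconstruction of the idea behind the Behnezhad et al.\ algorithm. The key observation you identify---that a single machine with $S=n^{\eps}$ adaptive queries can walk $n^{\eps}$ steps along a degree-$2$ graph in one round---is precisely the primitive that separates $\ampc$ from $\mpc$ here, and the landmark-sample-then-contract scheme you describe is the standard way to exploit it. The analysis you outline (geometric gap lengths, Chernoff for the maximum gap, preservation of the promise under arc contraction, geometric shrinkage of the vertex count) is correct in outline. Two minor points worth tightening if you were to write this out fully: first, the input in this paper's $\ampc$ specification is the adjacency matrix, so you would need the $O(1/\eps)$-round preprocessing (discussed in the paper's Remark on adjacency representations) to build adjacency lists before your walking primitive applies; second, the $(\log n)^{O(1/\eps)}$ factor that accumulates over the rounds is harmless only when $\eps$ is bounded away from zero, which is the regime implicitly intended but worth stating.
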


\subsection{This work} 

This paper is very much inspired by the works of Roughgarden et al.~\cite{Roughgarden} and Behnezhad et al.~\cite{Behnezhad}.  Just like them, our goal is to contribute to the theoretical understanding of modern massively parallel computing systems, with a particular emphasis on lower bounds and impossibility results.  This is a timely but challenging research direction, since the landscape of $\mpc$ systems is rapidly evolving and is largely influenced by and dependent on practice.  

Our contributions are twofold.  First, we extend the lower bound framework of~\cite{Roughgarden} in two different ways, to capture two common features of Massively Parallel Computation: 
\begin{itemize}[leftmargin=0.5cm]
\item[$\circ$] \textsl{Adaptivity}, where machines can write to and adaptively read from shared memory throughout the execution of the computation.  As discussed above, the power of adaptivity in Massively Parallel Computation was recently highlighted in~\cite{Behnezhad}, who showed that it enables significantly improved round complexities for a number of central graph problems. 

\item[$\circ$] \textsl{Promise problems}, where the algorithm only has to succeed on certain inputs.  These inputs may have special structure that is of particular interest (e.g.~such structure may be common in practical instances), or they may be representative of hard instances of the overall problem. \end{itemize} 

Like the two features discussed at the beginning of~\Cref{sec:background},  we believe that these are important features of modern Massively Parallel Computation, which call for theoretical frameworks---and lower bound techniques---that capture them. 

Complementing this, as our second contribution, we  apply our  framework to give an unconditional, optimal lower bound on the round complexity of the $\OCTC$ problem in the $\ampc$ model:

\begin{theorem}[Optimal $\ampc$ lower bound for $\OCTC$]
\label{thm:Main}
Any $\ampc$ algorithm for the $\OCTC$ problem using machines with I/O capacity $S = n^{\eps}$ requires $\Omega(1/\eps)$ rounds. 
\end{theorem}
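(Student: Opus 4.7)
The plan is to follow the Roughgarden--Vassilvitskii--Wang $\mpc$ blueprint but replace its polynomial-degree simulation---the wrong measure for adaptive computation---with a simulation in terms of (approximate) certificate complexity on the promise domain $\Delta_\OCTC$. I would establish two ingredients and combine them. \emph{Step A}: any $R$-round $\ampc$ algorithm with I/O capacity $S$ computes a function whose approximate certificate complexity on $\Delta_\OCTC$ is at most $S^{O(R)}$. \emph{Step B}: $\OCTC$ has approximate certificate complexity at least $n^{\Omega(1)}$. Combining yields $n^{\Omega(1)} \le n^{\eps \cdot O(R)}$, forcing $R = \Omega(1/\eps)$.

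\textbf{Step A.} The key observation is that each $\ampc$ machine is precisely a decision tree of depth $S$ against shared memory, as depicted in Figure~\ref{fig:TwoModelsComparison}. I would induct on the round number $r$: in round $1$, every bit a machine writes is determined by a depth-$S$ decision tree on the raw input, so it has a pointwise certificate of size $S$; in round $r$, each queried shared-memory bit has (by induction) a certificate of size $S^{r-1}$, and composing through a depth-$S$ tree yields a certificate of size $S^r$. Hence after $R$ rounds every output bit admits a deterministic certificate of size $S^R$. For the randomized case I would union-bound per-bit failure across the $S^R$ locations in the certificate tree, with consistency and error measured against a hard distribution supported on $\Delta_\OCTC$; this is where the extension to promise problems enters, and it is compatible with composition because the induction is oblivious to the domain of definition.

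\textbf{Step B.} I would obtain the $n^{\Omega(1)}$ lower bound via a distributional argument in the spirit of Yao's minimax. Let $\mu_1$ be the uniform distribution on $n$-cycles and $\mu_0$ the uniform distribution on pairs of disjoint $n/2$-cycles. A small ``distributional certificate'' for $\OCTC$ on $\frac{1}{2}(\mu_0+\mu_1)$ would yield an adaptive edge-query decision tree of that depth distinguishing the two distributions with good advantage, so it suffices to prove that no depth-$k$ adaptive edge-query tree can distinguish $\mu_0$ from $\mu_1$ unless $k = n^{\Omega(1)}$. By the principle of deferred decisions, conditioned on the query--answer transcript the graph is uniform among all promise-compatible completions of the transcript; the conditional marginal of the next answer is then nearly identical under $\mu_0$ and $\mu_1$ so long as the revealed subgraph is a union of short paths, since the only structural event that betrays $\mu_0$ is the premature closure of a short cycle.

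\textbf{Main obstacle.} The harder of the two steps is Step~B, and within it the quantitatively sharp control over adaptive strategies that combine random probing (to scatter isolated discovered edges) with local traversal (to try to close a short cycle). I expect the technical core to be a coupling of $\mu_0$ and $\mu_1$ along the transcript, with the invariant that the two conditional marginals stay within negligible total variation as long as every revealed path has length far below $n/2$; verifying this requires careful bookkeeping against the endpoints of the partial paths, because an adversary can concentrate its queries on those endpoints to maximize the chance of a premature closure. Once Step~B is in place, plugging its $n^{\Omega(1)}$ bound into Step~A immediately yields the theorem.
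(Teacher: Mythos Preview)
Your two-step outline matches the paper's architecture, but both steps as written have real gaps.

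In Step~A, the direct certificate-composition induction breaks in the $\ampc$ model. When a round-$r$ machine queries a key $k$ it receives the \emph{entire} multiset stored under $k$, and which machines write to $k$---and what they write---is input-dependent. Certifying that the response is exactly a given multiset therefore also requires certifying that no \emph{other} machine wrote to $k$, which drags in the behavior of every machine in the system; the certificate does not stay bounded by $S^r$. The paper handles this by keeping the polynomial route of~\cite{Roughgarden}: a depth-$S$ decision tree is computed by a degree-$S$ polynomial, and an inclusion--exclusion over all assignments of values to machines yields indicator polynomials $q_{r,k,W}$ of degree $S^{2r}$ (Theorem~\ref{AMPCDeg}); only then does the paper pass to certificates via $C(f)\le D(f)\le 2\deg(f)^3$.

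The more serious gap is in Step~B. You need a lower bound on approximate \emph{certificate} complexity, yet you propose to lower-bound randomized \emph{query} complexity via coupling and then assert that ``a small distributional certificate would yield an adaptive edge-query decision tree of that depth.'' That implication fails for partial functions: the textbook relation $D\le C_0C_1$ relies on every $1$-certificate meeting every $0$-certificate, which need not hold once certificates are only required to be valid on $\Delta$ (for instance, with $\Delta=\{e_1,\dots,e_n\}$ and $f(e_i)=i\bmod 2$ one has $C=1$ but $D=\Theta(n)$). So even if your coupling succeeded---and the paper explicitly leaves the randomized query complexity of $\OCTC$ open as Conjecture~\ref{conj:randomized}---it would not deliver the certificate bound you need. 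The paper instead lower-bounds $C_\delta(\OCTC)$ directly via a sensitive-block framework (Theorem~\ref{ApproxCertifLB}): on each $1$-cycle it exhibits $n/2$ pairwise-disjoint sensitive blocks (delete two antipodal edges and reconnect into two $\tfrac{n}{2}$-cycles), and a double count on the resulting bipartite graph between $1$- and $0$-instances shows that any $f$ that is $\tfrac16$-close to $\OCTC$ under $\D_{\OCTC}$ already has $C(f)\ge n/4$.
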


\Cref{thm:Main} is optimal as it matches the \hyperlink{spaa-upper-bound}{upper bound of \cite{Behnezhad}}.   It extends \hyperlink{RVW}{\cite{Roughgarden}'s lower bound} in two ways: first, it applies to adaptive computations while \hyperlink{RVW}{\cite{Roughgarden}'s lower bound} only applies to non-adaptive computations, and second, the $\OCTC$ problem is a restriction of {\sc Connectivity} to specific promise instances.  Prior to our work, neither extension (even on their own) was known. 

Like the lower bound of~\cite{Roughgarden},~\Cref{thm:Main} holds regardless of the number of machines involved in the computation.

\subsubsection{Extensions and implications}
\label{sec:extensions} 
In fact, our method can be applied to proving lower bounds for the more general promise problem $\OCKC$ of distinguishing between a cycle of length $n$ versus $k$ cycles of length $\frac{n}{k}$, where $k$ divides $n$.
\begin{theorem}[Optimal $\ampc$ lower bound for $\OCKC$]
\label{thm:OCKC}
For $k = O(n^\delta)$ with $\delta \in (0,1)$, any $\ampc$ algorithm for the $\OCKC$ problem using machines with I/O capacity $S = n^{\eps}$ requires $\Omega(1/\eps)$ rounds. 
\end{theorem}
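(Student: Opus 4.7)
My plan is to follow the template of Theorem~\ref{thm:Main} and simply substitute $\OCKC$ for $\OCTC$. The adaptive framework established for Theorem~\ref{thm:Main} is generic: it translates any $R$-round $\ampc$ algorithm with I/O capacity $S = n^{\eps}$ solving a promise problem on $n$-vertex graphs into a bound on an approximate complexity measure of the promise function scaling as $S^{O(R)}$, so an $n^{\Omega(1)}$ lower bound on that measure yields $R = \Omega(1/\eps)$. Nothing in that framework uses $k=2$ specifically. Consequently, the only new ingredient needed is the analogous combinatorial lower bound on the query / approximate certificate complexity of $\OCKC$ over its promise domain.

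For that combinatorial step, I would analyze two hard distributions: $\mathcal{Y}$, the uniformly random Hamiltonian $n$-cycle, and $\mathcal{N}$, a uniformly random decomposition of the $n$ vertices into $k$ vertex-disjoint cycles of length $n/k$. These distributions are locally nearly indistinguishable: when an algorithm explores $t$ edges, the revealed subgraph is, under either law, a disjoint union of paths with the same length distribution, except when some explored path happens to close into a short cycle. Under $\mathcal{N}$ such a closure requires exploring a path of length $n/k \geq n^{1-\delta}$, and under $\mathcal{Y}$ it requires exploring a Hamiltonian cycle of length $\Omega(n)$. A coupling / adversary argument between the two distributions then shows that any algorithm making $t = o(n^{1-\delta})$ probes cannot separate $\mathcal{Y}$ from $\mathcal{N}$ with constant advantage. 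Since $\delta < 1$, this is an $n^{\Omega(1)}$ lower bound, which is all the framework demands to conclude $R = \Omega(1/\eps)$.

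The main obstacle I anticipate is calibrating the locality argument to the specific approximate complexity measure required by the extended framework (adaptive approximate certificate complexity on a promise domain). The Hamming distance between a $\mathcal{Y}$-instance and the nearest $\mathcal{N}$-instance shrinks from $\Theta(n)$ when $k=2$ to only $\Theta(k)$ edge rewirings when $k$ is large, so any piece of the Theorem~\ref{thm:Main} proof that passes through a direct Hamming-weight / block-sensitivity argument must be rerouted through the distributional coupling above. Once the framework is shown to be compatible with a coupling-based lower bound on its complexity measure, the extension is clean, and the quantitative bound degrades only to $n^{\Omega(1-\delta)}$, which is still $n^{\Omega(1)}$ in the assumed regime $\delta \in (0,1)$, giving $R = \Omega(1/\eps)$ independent of $\delta$ as stated.
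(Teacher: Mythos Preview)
Your high-level plan---reuse the generic degree-based reduction and plug in an $n^{\Omega(1)}$ combinatorial lower bound for $\OCKC$---matches the paper. But the paper does \emph{not} switch to a distributional coupling; both combinatorial ingredients of the $\OCTC$ proof generalize to arbitrary $k$ as written. The adversary argument is already carried out for general $k$ (Theorem~\ref{1CkCDT}), yielding $D(\OCKC)\ge n^2/(128k^2)$, and the sensitive-block construction behind Corollary~\ref{1C2CApproxCertif} extends verbatim: on a $1$-cycle, cutting the $k$ equally spaced edges starting at offset $i\in\{0,\dots,n/k-1\}$ and closing each arc into a small cycle gives a sensitive block of size $2k$, and the $n/k$ offsets give $n/k$ pairwise disjoint blocks. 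Because $S_n$ acts transitively on both $V_1$ and $V_0$, the bipartite graph of Theorem~\ref{ApproxCertifLB} is biregular, so the degree constraint holds at $\delta=1/6$ exactly as in the $k=2$ case, yielding $C_{1/6}(\OCKC)\ge n/(2k)$. Both bounds are $n^{\Omega(1-\delta)}$ when $k=O(n^{\delta})$, which is all the framework needs.

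Your stated reason for abandoning the block-sensitivity route rests on a false premise: the Hamming distance from a $1$-cycle to the nearest $k$-cycle instance is $2k$, not $\Theta(n)$---it is already $4$ at $k=2$ (see Figure~\ref{fig:OCTCApproxCertifConstruction}). What shrinks with $k$ is the \emph{number} of disjoint sensitive blocks, from $n/2$ down to $n/k$, but this is still polynomial in $n$ in the assumed regime. More seriously, your proposed coupling argument targets randomized query complexity, a measure the framework does \emph{not} invoke and which the paper explicitly leaves open even for $k=2$ (Conjecture~\ref{conj:randomized}). A bound of the form ``$o(n^{1-\delta})$ adaptive edge probes cannot separate $\mathcal{Y}$ from $\mathcal{N}$'' does not by itself lower bound approximate certificate complexity---certificates need not be locatable by any query process---so the ``calibration'' you flag as an obstacle is a genuine gap in your plan, and an unnecessary one given that the paper's direct argument already covers general $k$.
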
 
One easily checks that the $O(1/\epsilon)$-round $\ampc$ algorithm of \cite{Behnezhad} for $\OCTC$ can be generalized to an $O(1/\epsilon)$-round $\ampc$ algorithm for $\OCKC$. Thus our bound is indeed optimal.

Moreover, our unconditional $\Omega(1/\epsilon)$-round lower bound for $\OCTC$ (\Cref{thm:Main}) can be used to convert hardness results conditioned on the Logarithmic-round $\OCTC$ Conjecture (\Cref{OCTC-conjecture}) into (weaker) unconditional hardness results in the $\ampc$ model.  For instance, Yaroslavtsev and Vadapalli \cite{Yaroslavtsev} study the \textsc{$k$-Single-Linkage Clustering} (\textsc{$k$-SLC}) problem: given $n$ vectors in $\R^d$, partition them into $k$ clusters so as to maximize the minimum distance between two vectors that belong to different clusters. Conditioned on~\Cref{OCTC-conjecture}, they show that any $o_\epsilon(\log n)$-round $\mpc$ algorithm cannot approximate \textsc{2-SLC} within a factor of $1.84-\delta$ for $d = \Omega(\log n/\delta^2)$ under $\ell_2^d$, or within a factor of $3$ for $d = \Omega(n)$ under $\ell_0^d$ or $\ell_1^d$ \cite[Theorem 3.3]{Yaroslavtsev}.  We remark that their reduction from $\OCTC$ to \textsc{2-SLC} requires only a constant number (independent of $\epsilon$) of rounds in $\mpc$ (and thus in $\ampc$) regardless of whether the input graph is represented with its adjacency matrix or adjacency list.  Therefore,~\Cref{thm:Main} implies the following unconditional hardness of approximation result:

\begin{theorem}[Unconditional hardness of \textsc{2-SLC} in $\ampc$]
Any $o(1/\epsilon)$-round $\ampc$ algorithm using machines with I/O capacity $S = n^{\eps}$ cannot approximate \textsc{2-SLC} within a factor of $1.84-\delta$ for $d = \Omega(\log n/\delta^2)$ under $\ell_2^d$, or within a factor of $3$ for $d = \Omega(n)$ under $\ell_0^d$ or $\ell_1^d$.
\end{theorem}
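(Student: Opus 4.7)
The plan is to argue by contradiction via reduction, using Theorem~\ref{thm:Main} as the hardness source. Suppose for contradiction that there exists an $o(1/\epsilon)$-round $\ampc$ algorithm $A$ with I/O capacity $S = n^{\epsilon}$ per machine that approximates \textsc{2-SLC} within a factor of $1.84 - \delta$ under $\ell_2^d$ for $d = \Omega(\log n/\delta^2)$ (the $\ell_0$ and $\ell_1$ cases are handled symmetrically).

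Next I would invoke the reduction of Yaroslavtsev and Vadapalli \cite[Theorem~3.3]{Yaroslavtsev} from $\OCTC$ to \textsc{2-SLC}. Given an $n$-node input graph $G$ drawn from the $\OCTC$ promise, this reduction produces a collection of $n$ vectors in $\R^d$ with the property that the optimal \textsc{2-SLC} cost on the resulting instance is separated by a factor greater than $1.84 - \delta$ depending on whether $G$ is a single cycle of length $n$ or two disjoint cycles of length $n/2$. As the authors note, this reduction is local and requires only a constant number of $\mpc$ rounds (independent of $\epsilon$), regardless of whether $G$ is presented by adjacency matrix or adjacency list; since any $\mpc$ computation can be simulated in $\ampc$ using the same number of rounds and I/O capacity, the reduction runs in $O(1)$ $\ampc$ rounds with I/O capacity $n^\epsilon$.

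Composing the reduction with $A$ yields an $\ampc$ algorithm for $\OCTC$ that uses $O(1) + o(1/\epsilon) = o(1/\epsilon)$ rounds and I/O capacity $n^\epsilon$ per machine. Since the approximation ratio gap strictly separates the two $\OCTC$ cases, a correct approximate solution to \textsc{2-SLC} distinguishes the two classes and hence solves $\OCTC$. This directly contradicts Theorem~\ref{thm:Main}, which asserts that any $\ampc$ algorithm for $\OCTC$ with I/O capacity $n^\epsilon$ requires $\Omega(1/\epsilon)$ rounds. The proofs for the $\ell_0^d$ and $\ell_1^d$ variants (with $d = \Omega(n)$ and approximation factor $3$) proceed identically, using the corresponding parts of~\cite[Theorem~3.3]{Yaroslavtsev}.

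The only subtle point---and the main thing worth checking carefully rather than any real obstacle---is that the Yaroslavtsev--Vadapalli reduction actually fits the $\ampc$ model in $O(1)$ rounds with the required I/O capacity for both input representations; since the reduction is essentially a local gadget construction per edge/vertex, this is straightforward, but one must confirm that the dimension $d$ (which can be as large as $\Omega(n)$ in the $\ell_0, \ell_1$ cases) does not inflate per-machine I/O beyond $n^\epsilon$. This is handled by distributing the coordinates of each vector across many machines, a standard $\mpc$ maneuver that requires no additional rounds.
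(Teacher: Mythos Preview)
Your proposal is correct and follows essentially the same approach as the paper: the paper does not give a separate formal proof of this theorem but instead derives it in the surrounding text exactly as you do, by noting that the Yaroslavtsev--Vadapalli reduction from $\OCTC$ to \textsc{2-SLC} runs in a constant number of $\mpc$ (hence $\ampc$) rounds independent of $\epsilon$, so composing with Theorem~\ref{thm:Main} yields the claimed lower bound. Your additional discussion about accommodating the $\Omega(n)$-dimensional vectors within the $n^\epsilon$ I/O budget is a sensible elaboration that the paper leaves implicit.
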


\subsubsection{Our approach and techniques}
Recall that our first contribution is in extending the lower bound framework of Roughgarden et al.~\cite{Roughgarden} to reason about {\sl adaptive} $\mpc$ computations and {\sl promise problems}.  Each of these poses its own challenges: 

\begin{itemize}[leftmargin=0.5cm] 

\item[$\circ$] {\sl Adaptivity:} In the (non-adaptive) $\mpc$ model, in each round a machine $M$ receives all $S$ bits of its input at once and computes a function of these bits. In other words, each output bit of $M$ is a function of $S$ many input bits it receives. In the adaptive ($\ampc$) setting, on the other hand, recall that $M$ queries for input messages sequentially from the shared memory, where later queries may depend on the outcomes of earlier queries. Thus, each output message of $M$ is the result of a decision tree of depth $S$, which can depend on the contents in as many as $2^S$ locations in the shared memory.

\item[$\circ$] {\sl Promise problems:}  Recalling our discussion of the~\cite{Roughgarden} framework on~\cpageref{RVW-framework}, the key structural lemma connecting $\mpc$ computation to Boolean function complexity that they prove is that every function $g: \zo^N \to \zo$ computable by $R$-round algorithms with machines with I/O capacity $S$ can be represented as polynomials of degree $S^R$.  This reduces the task of proving $\mpc$ lower bounds to that of proving lower bounds on the degree of $g$'s polynomial representation (\Cref{fact:connectivity-degree}).  

In the case of promise problems, one now has to prove a lower bound on the polynomial degree of {\sl partial} Boolean functions: Given a partial Boolean function $g : \Delta \to \zo$ with domain $\Delta \subseteq \zo^N$, one has to show that {\sl every} total function $f: \zo^N\to\zo$ that extends $g$ has to have large degree.  This is challenging since there could be many such extensions~$f$.  This is especially so in the case of the $\OCTC$ problem, since the number of promise inputs (graphs that are a single cycle of length $n$ or two cycles of length $n/2$) is a tiny fraction of all possible inputs (all possible graphs): $n^{O(n)}$ out of $2^{n \choose 2}$.

\end{itemize}

Our solution to incorporating adaptivity is fairly straightforward.  We first observe that the crux of \cite{Roughgarden}'s polynomial method is the fact that a boolean function on $S$ variables can be represented by a polynomial of degree at most $S$. For the adaptive case, we use a generalization of this fact: a decision trees of depth $S$---a strictly larger class than functions on $S$ variables---can also be represented by a polynomial of degree at most $S$.

The promise aspect turns out to pose more of a technical challenge.  Here we depart from the approach of~\cite{Roughgarden} and do not directly prove a lower bound on the polynomial degree (or approximate degree) of any total function $f : \zo^{{n \choose 2}} \to \zo$ that extends $\OCTC$.  Instead, we first reason about other complexity measures of $\OCTC$---its deterministic query complexity in the case of deterministic computation, and approximate certificate complexity in the case of randomized computation---and then leverage classical results from query complexity that relate these measures to polynomial degree.

We defer the precise definitions of these measures to Sections \ref{DetermBounds} and \ref{RandomBounds}.  Roughly speaking, the deterministic query complexity of a partial function $g : \Delta \to \zo$ is the minimum number of queries necessary to distinguish between a $1$-input versus a $0$-input, and its approximate certificate complexity is the minimum certificate complexity of any $g'$ defined on $\Delta$ that closely approximates $g$.  As alluded to above, lower bounding these complexity measures for partial Boolean functions is significantly more challenging than for total Boolean functions: in the case of query complexity, the query algorithm can behave arbitrarily on non-promise instances; in the case of certificate complexity, certificates only have to be valid on promise instances.

We develop new machinery to study these complexity measures. For the $\OCTC$ problem, we prove:

\begin{theorem}[Deterministic query complexity of $\OCTC$] 
\label{thm:OCTC-DT} 
The deterministic query complexity of $\OCTC$ is $\Omega(n^2)$.
\end{theorem}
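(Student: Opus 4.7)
The plan is to prove this by an adversary argument. Given any deterministic decision tree $A$ that purports to compute $\OCTC$, I will construct an adversary that answers $A$'s edge queries so that after any $T < cn^2$ queries (for a sufficiently small constant $c>0$) there still exist both a $1$-input $G_1$ and a $0$-input $G_0$ in $\Delta_\OCTC$ consistent with every answer given so far. Since $\OCTC(G_1) \ne \OCTC(G_0)$, this forces $A$ to have depth at least $\Omega(n^2)$.

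The adversary maintains, as auxiliary state, a pair $(H,F)$ of ``live'' candidates: $H$ is a Hamiltonian $n$-cycle and $F$ is a disjoint union of two $n/2$-cycles, with the invariant that both $H$ and $F$ agree with every answer given so far (every ``yes''-edge lies in both, every ``no''-edge in neither). When the algorithm queries an edge $e$: if $H$ and $F$ agree on $e$, the answer is forced; otherwise the adversary attempts to modify whichever of $H, F$ disagrees with a preferred default (say ``no''), replacing it by another Hamiltonian cycle (respectively 2-factor) that still agrees with all prior answers and agrees with the default on $e$. If that modification is blocked by earlier commitments, the adversary modifies the other side instead and answers with the opposite bit.

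The crux of the proof is a structural lemma asserting that at least one of the two modifications is always available as long as fewer than $cn^2$ edges have been committed. The intuition is that both $\mathcal{F}_1$ and $\mathcal{F}_0$ admit a rich supply of local modifications --- P\'osa-style rotations for Hamiltonian cycles, and analogous swaps for $2$-factors that preserve the partition into two halves of size $n/2$ --- and each committed edge blocks only $O(n)$ of these rotations. With $\Omega(n^3)$ total rotations available on each side, $o(n^2)$ commitments can rule out only $o(n^3)$ rotations, leaving a positive fraction viable for any single edge toggle, so the adversary always has a valid move.

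The main obstacle is making this structural lemma quantitatively precise, in particular handling cases where the rotations of $H$ happen to all be blocked by prior commitments while those of $F$ are not (or vice versa). Here the symmetry between $H$ and $F$ is essential: since a single edge can be ``rigidly forced'' in or out of only one of the two families at a time (if it were rigid in both with opposite values, the adversary would already have been compelled to reveal the distinguishing structure in a much earlier query), at least one side must retain enough slack to absorb the current query. Combining these ingredients, the adversary survives $\Omega(n^2)$ rounds, which yields the claimed bound $D(\OCTC) = \Omega(n^2)$.
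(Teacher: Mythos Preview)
Your proposal has a genuine gap at the ``structural lemma,'' which you correctly identify as the crux but do not prove. The counting you offer --- $\Omega(n^3)$ total rotations, each commitment blocking $O(n)$ of them, hence $o(n^3)$ blocked --- is an \emph{averaging} argument. It shows that \emph{summed over all possible edges}, most rotations survive; it does not show that for the \emph{specific} edge $e$ currently being queried, any rotation survives. An adversary-aware algorithm can concentrate its queries: for instance, it can first query all $n-3$ non-$H$ edges incident to a single vertex $v$, forcing ``no'' answers (since neither $H$ nor $F$ contains them), and thereby kill every $2$-opt move that would remove an $H$-edge at $v$ --- all in $O(n)$ queries. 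Your final paragraph about ``symmetry'' and edges being ``rigidly forced in only one family at a time'' is an assertion, not an argument; nothing you have written rules out the algorithm cornering both $H$ and $F$ simultaneously at a well-chosen vertex.

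The paper's proof avoids this pitfall by \emph{not} maintaining specific candidates $H$ and $F$ at all. Instead, the adversary maintains only the set $Y$ of YES-edges and the set $M$ of not-yet-NO edges, and argues existentially that $M$ always contains both a Hamiltonian cycle and a $2$-cycle configuration through $Y$. The key device is the rule for answering YES: the adversary says NO to $(u,v)$ unless one of $u,v$ already has at least $n/8$ incident NO-edges (and a couple of safety conditions hold). This forces the algorithm to pay $\Omega(n)$ NO-queries per YES-edge, and caps the number of YES-edges at $O(n)$ total, so the ``maybe'' graph $M$ retains minimum degree well above $n/2$. At that point Dirac's theorem (plus Hall's theorem to splice the YES-paths back in) guarantees both configurations exist in $M$, with no need to track or rotate explicit witnesses. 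This degree-based existence argument is exactly what sidesteps the per-edge blocking problem your rotation approach runs into.
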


\begin{theorem}[Approximate certificate complexity of $\OCTC$]
\label{thm:OCTC-cert} 
The $\frac1{6}$-approximate certificate complexity of $\OCTC$ is $\Omega(n)$.
\end{theorem}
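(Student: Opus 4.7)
\medskip
\noindent\textbf{Proof plan for Theorem~\ref{thm:OCTC-cert}.}
My plan is to prove the $\Omega(n)$ lower bound by an adversary-style argument built on a ``rewiring'' operation that transforms a single $n$-cycle into two disjoint $n/2$-cycles. Unpacking the definition of approximate certificate complexity, it suffices to show: any function $g' : \Delta_\OCTC \to \zo$ that agrees with $\OCTC$ on at least a $\tfrac{5}{6}$-fraction of the $1$-inputs and a $\tfrac{5}{6}$-fraction of the $0$-inputs has certificate complexity $\Omega(n)$. Write $\Delta_1, \Delta_0$ for the sets of $1$- and $0$-inputs, and $E_1 \subseteq \Delta_1$, $E_0 \subseteq \Delta_0$ for the sets where $g'$ disagrees with $\OCTC$.

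First I would formalize the rewiring operation. Given $C \in \Delta_1$ with cyclic labeling $v_0, v_1, \ldots, v_{n-1}$ and a cut index $i \in \{0, 1, \ldots, n/2 - 1\}$, let $R_i(C)$ be the graph obtained by deleting the two cycle edges $(v_i, v_{i+1})$ and $(v_{i+n/2}, v_{i+n/2+1})$ and adding the two chord edges $(v_{i+1}, v_{i+n/2})$ and $(v_i, v_{i+n/2+1})$. A direct check shows $R_i(C) \in \Delta_0$: the result is a union of two disjoint $n/2$-cycles. The key combinatorial claim is that as $i$ varies, the associated $4$-tuples of edge slots are pairwise disjoint---the $n$ cycle edges partition into $n/2$ pairs of ``deleted'' slots and the $n$ chords of cyclic distance $n/2 - 1$ partition into $n/2$ pairs of ``added'' slots, by an index-chasing argument. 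Consequently, if $\sigma_C$ is a $g'$-certificate at $C$ of size at most $k$, then at most $k$ rewirings are ``blocked'' by the labels in $\sigma_C$, leaving at least $n/2 - k$ indices $i$ for which $R_i(C)$ is consistent with $\sigma_C$ and hence satisfies $g'(R_i(C)) = g'(C)$.

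Next I would count pairs. Consider all $(C, G) \in \Delta_1 \times \Delta_0$ with $G = R_i(C)$ for some unblocked $i$; by the above there are at least $(n/2 - k)|\Delta_1|$ such pairs. For each such pair, $\OCTC(C) = 1 \neq 0 = \OCTC(G)$ yet $g'(C) = g'(G)$, so $C \in E_1$ or $G \in E_0$. On the other hand, each $C \in \Delta_1$ participates in at most $n/2$ such pairs while each $G \in \Delta_0$ participates in at most $n^2/2$, since a simple double-count gives $|\Delta_1| \cdot (n/2) = |\Delta_0| \cdot (n^2/2)$ and hence $|\Delta_0| = |\Delta_1|/n$. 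The number of pairs covered by errors is thus at most
\[
|E_1| \cdot (n/2) + |E_0| \cdot (n^2/2) \le \tfrac{1}{6}|\Delta_1|(n/2) + \tfrac{1}{6}|\Delta_0|(n^2/2) = \tfrac{n}{6}|\Delta_1|.
\]
Combining gives $(n/2 - k)|\Delta_1| \le (n/6)|\Delta_1|$, so $k \ge n/3 = \Omega(n)$.

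The main obstacle I anticipate is establishing the disjointness of the rewiring slots, which drives the ``at most $k$ blocked rewirings'' bound; this reduces to a direct verification that no two rewirings share a cycle edge or a length-$(n/2-1)$ chord. A secondary issue is matching the precise form of $\tfrac{1}{6}$-approximate certificate complexity used in the paper: if the error tolerance is measured against the uniform distribution on $\Delta_\OCTC$ rather than separately on $\Delta_1$ and $\Delta_0$, then since $|\Delta_0| \ll |\Delta_1|$ a single uniform bound is too weak to constrain $|E_0|/|\Delta_0|$; handling this will require interpreting the definition with the balanced distribution (or equivalently a per-side tolerance), at the cost of constants but not the $\Omega(n)$ asymptotic.
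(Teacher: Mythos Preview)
Your approach is essentially the same as the paper's: the same rewiring operation (delete two antipodal cycle edges, add the two completing chords), the same disjointness of the resulting $n/2$ sensitive blocks, and the same double-count using $|\Delta_0| = |\Delta_1|/n$ and the fact that each $2$-cycle instance arises from exactly $n^2/2$ such rewirings. The paper packages this as an instance of a general framework (\Cref{thm:framework}/\Cref{ApproxCertifLB}) and obtains $C_{1/6}(\OCTC)\ge n/4$; your direct counting avoids the case split in that framework and gives the slightly sharper $n/3$, but the content is the same.

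One remark: your ``secondary issue'' is not actually an issue. With the paper's balanced distribution $\mathcal{D}_g$, writing $\alpha=|E_1|/|\Delta_1|$ and $\beta=|E_0|/|\Delta_0|$, the constraint is $\tfrac{1}{2}\alpha+\tfrac{1}{2}\beta\le \tfrac{1}{6}$, i.e.\ $\alpha+\beta\le \tfrac{1}{3}$. Your error-pair bound becomes
\[
|E_1|\cdot\tfrac{n}{2}+|E_0|\cdot\tfrac{n^2}{2}=\alpha|\Delta_1|\tfrac{n}{2}+\beta|\Delta_0|\tfrac{n^2}{2}=(\alpha+\beta)\tfrac{n}{2}|\Delta_1|\le \tfrac{n}{6}|\Delta_1|,
\]
using $|\Delta_0|=|\Delta_1|/n$; so you never needed per-side $1/6$ bounds.
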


\Cref{thm:OCTC-DT,thm:OCTC-cert} are both asymptotically optimal.  Our proof of~\Cref{thm:OCTC-DT} is based on a delicate graph-theoretic adversary argument.  We derive \Cref{thm:OCTC-cert} as a corollary of a general framework for proving lower bounds on approximate certificate complexity:

\begin{theorem}[Framework for proving lower bounds on approximate certificate complexity, see~\Cref{ApproxCertifLB} for a formal statement]
\label{thm:framework} 
Let $g$ be a partial Boolean function and $\delta \in (0,1/2)$. Suppose we can find at least $2K$ pairwise disjoint sensitive blocks of $g$ on each 1-instance such that for each 0-instance $y$, not too many out of those sensitive blocks on all 1-instances are also sensitive blocks of $g$ on $y$. Then the $\delta$-approximate certificate complexity of $g$ is at least $K$.
\end{theorem}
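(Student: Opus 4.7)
The plan is to prove \Cref{thm:framework} by contradiction via a double-counting argument on (1-instance, sensitive-block) pairs. Suppose the $\delta$-approximate certificate complexity of $g$ is strictly less than $K$. Then there is some Boolean function $g' : \Delta \to \{0,1\}$ that is $\delta$-close to $g$ and whose certificate complexity $C(g', x)$ is strictly less than $K$ at every input $x$. Let $T = \{z \in \Delta : g'(z) \ne g(z)\}$ denote the disagreement set, whose size is controlled by $\delta$ (at most $\delta |\Delta|$ under the natural normalization).

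The structural lemma at the heart of the argument is that sensitive blocks of $g$ at $x$ transfer to sensitive blocks of $g'$ at $x$ whenever both endpoints escape $T$. Concretely, for a 1-instance $x$ of $g$ and one of the $2K$ promised pairwise disjoint sensitive blocks $B_i(x)$, let $y_i(x) = x \oplus B_i(x)$; if both $x \notin T$ and $y_i(x) \notin T$, then $g'(x) = 1 \ne 0 = g'(y_i(x))$, so $B_i(x)$ is a sensitive block of $g'$ at $x$. Since the $B_i(x)$ are pairwise disjoint and any certificate of $g'$ at $x$ must intersect each surviving sensitive block, having $K$ or more survive would give $C(g', x) \ge K$, violating the assumed bound. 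Therefore, for every 1-instance $x \notin T$, strictly more than $K$ of the indices $i \in \{1, \ldots, 2K\}$ must satisfy $y_i(x) \in T$.

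Finally I would double-count the pairs $(x, i)$ with $x$ a 1-instance outside $T$ and $y_i(x) \in T$. The previous paragraph yields a lower bound of $(K+1) \cdot (|X_1| - |T|)$, where $X_1$ is the set of 1-instances of $g$. For the upper bound, each pair $(x, i)$ with $y_i(x) = y$ exhibits $B_i(x)$ as a sensitive block of $g$ at the 0-instance $y$ (flipping $y$ on $B_i(x)$ returns $x$, which is a 1-instance), and the \emph{not too many} hypothesis caps the number of such pairs per $y \in T$ by some quantity $M$. This gives a total upper bound of $|T| \cdot M \le \delta |\Delta| \cdot M$. Picking the quantitative form of \emph{not too many} so that $\delta |\Delta| \cdot M < (K+1)(|X_1| - |T|)$ then delivers the desired contradiction.

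The primary technical obstacle is calibrating the quantitative form of \emph{not too many}: the informal statement in \Cref{thm:framework} deliberately leaves the threshold open, but working backwards from the contradiction forces $M$ to scale roughly like $K(1-\delta)|X_1| / (\delta|\Delta|)$. Matching this bound to the natural symmetry and count of sensitive blocks in $\OCTC$ is presumably what pins down the specific $\delta = \tfrac{1}{6}$ threshold used in \Cref{thm:OCTC-cert}. A minor secondary subtlety is that the argument as phrased measures closeness by cardinality on $\Delta$; the same double-counting works if the formal definition uses closeness relative to the 1-instances alone or with respect to a prescribed distribution, but the precise statement of the \emph{not too many} hypothesis must be adjusted accordingly so that the lower and upper counts still fail to match.
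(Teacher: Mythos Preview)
Your proposal is correct and follows essentially the same double-counting argument as the paper: the core observation that a size-$<K$ certificate can miss at most $K-1$ of the $2K$ pairwise disjoint sensitive blocks, forcing more than $K$ of the flipped neighbors into the disagreement set, is exactly what the paper uses. The one point you left open is precisely where the paper commits: closeness is measured under the \emph{canonical distribution} $\mathcal{D}_g$ (half the mass uniform on $g^{-1}(1)$, half on $g^{-1}(0)$), and the paper handles the interaction between $|T\cap V_1|$ and $|T\cap V_0|$ by an explicit case split (if $|T\cap V_1|>2\delta|V_1|$ we are already done, otherwise $|V_1\setminus T|\ge(1-2\delta)|V_1|$ feeds into the double count), which yields the clean degree bound $d=(\tfrac{1}{2\delta}-1)K\,|V_1|/|V_0|$ rather than the uniform-measure threshold you sketched.
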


\Cref{thm:OCTC-DT,thm:OCTC-cert} are the query complexity lower bounds that underlie our deterministic and randomized $\ampc$ lower bounds (and can be thought of as being analogous to~\cite{Roughgarden}'s degree lower bounds,~\Cref{fact:connectivity-degree}).  Given~\Cref{thm:OCTC-DT}, one may have expected that our randomized $\ampc$ lower bound would be built on a lower bound on the randomized query complexity of $\OCTC$:   

\begin{conjecture}[Randomized query complexity of $\OCTC$]
\label{conj:randomized}
The randomized query complexity of $\OCTC$ is $\Omega(n^2)$. 
\end{conjecture}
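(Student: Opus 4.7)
The plan is to apply Yao's minimax principle: it suffices to exhibit a distribution $\mu$ under which every deterministic algorithm making $o(n^2)$ queries errs with constant probability. I would take $\mu$ to be the balanced mixture of $\mu_1$, the uniform distribution over Hamiltonian cycles on $[n]$, and $\mu_0$, the uniform distribution over unordered pairs of vertex-disjoint cycles of length $n/2$, and argue via a coupling. Draw $\pi$ uniformly from the symmetric group on $[n]$, let $C_1(\pi)$ be the cycle $\pi(1)\!-\!\pi(2)\!-\!\cdots\!-\!\pi(n)\!-\!\pi(1)$, and let $C_0(\pi)$ consist of the two cycles $\pi(1)\!-\!\cdots\!-\!\pi(n/2)\!-\!\pi(1)$ and $\pi(n/2{+}1)\!-\!\cdots\!-\!\pi(n)\!-\!\pi(n/2{+}1)$. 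A direct count confirms that each marginal is uniform on the corresponding distribution and that the symmetric difference $C_1(\pi)\triangle C_0(\pi)$ consists of exactly four ``closing'' edges, all incident only to the four special vertices $\pi(1),\pi(n/2),\pi(n/2{+}1),\pi(n)$.

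Running the fixed deterministic algorithm in parallel on $C_1(\pi)$ and on $C_0(\pi)$, the two executions receive identical answers and issue identical subsequent queries up to the first round $T$ at which a closing edge is queried; on the event $\{T > q\}$ the algorithm outputs the same answer in both runs and therefore errs on one of them. It thus suffices to prove $\Pr[T \le q] = o(1)$ for $q = o(n^2)$, and by a union bound this reduces to the per-step claim that, conditioned on any non-closing history, the next (deterministic) query is a closing edge with probability $O(1/n^2)$.

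This per-step bound is the main obstacle, precisely because of adaptivity. Conditioned on a non-closing history, the algorithm has learned a partial description of the common subgraph $G(\pi) := C_1(\pi)\cap C_0(\pi)$, which is always a pair of vertex-disjoint paths of length $n/2-1$; the four closing edges are specific pairings of these paths' four endpoints. A clever algorithm could, in principle, trace out long segments of $G(\pi)$ and thereby pin down the endpoints, boosting the per-step closing-edge probability. To rule this out I would introduce a potential function counting the revealed edges of $G(\pi)$, show via a martingale argument that conditioned on any non-closing history every unqueried pair lies in $G(\pi)$ with probability $O(1/n)$, and conclude via Markov's inequality that throughout the first $q = cn^2$ queries the revealed portion of $G(\pi)$ covers only $o(n)$ vertices with probability $1-o(1)$. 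A symmetry argument over the permutations $\pi$ consistent with this revealed portion would then pin each of the four special vertices to a uniform choice over a set of size $\Omega(n)$, yielding the desired $O(1/n^2)$ per-step bound.

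Should the combinatorics of the symmetry argument in the adaptive regime prove intractable, my backup is to bound the Hellinger distance between the algorithm's views under $\mu_1$ and $\mu_0$ round by round using explicit formulae for the probability under $\mu_b$ that a given revealed subgraph $H$ is contained in the input, expressed as a function of the path-length statistics of $H$; preliminary calculations suggest these formulae agree up to factors of $1 + O(|V(H)|^2/n)$, integrating to total Hellinger distance $o(1)$ for $q = o(n^2)$ queries and thereby ruling out distinguishability via the standard relation between total variation and Hellinger distance.
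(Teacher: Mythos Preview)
The statement you are attempting to prove is a \emph{conjecture} in the paper: the authors write explicitly that they ``were unable to resolve'' it and leave it open. There is thus no proof in the paper to compare against, and your proposal should be read as an attack on an open problem rather than a reconstruction of a known argument.

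Your architecture---Yao against the balanced hard distribution, the coupling $(C_0(\pi),C_1(\pi))$ differing only in four closing edges, and the reduction to a per-step bound on the chance of querying a closing edge---is the natural one, but the load-bearing steps are precisely the ones you leave as sketches. The literal per-step claim that ``conditioned on any non-closing history every unqueried pair lies in $G(\pi)$ with probability $O(1/n)$'' is false: if the history consists of no-answers to $(v,w)$ for all but two choices of $w$, then conditioning forces $v$ to be non-special and both remaining pairs to lie in $G(\pi)$, so the next query is a yes with probability~$1$. What you need is at best an averaged statement, and even that requires care for adaptive strategies. More seriously, the symmetry step---passing from ``only $o(n)$ edges of $G(\pi)$ revealed'' to ``each special vertex is near-uniform over $\Omega(n)$ candidates''---is not carried out, and this is exactly where the difficulty lives: a non-closing history conditions on up to $\Theta(n^2)$ clauses each asserting ``not in $G(\pi)$ \emph{and} not closing,'' the second conjunct of which leaks information about the special vertices. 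The Hellinger backup has the same shape of gap (you would need to control the full-transcript likelihood ratio uniformly over adaptively reachable transcripts, not just over revealed edge-sets, and the ratio diverges once any revealed path reaches length $n/2$). In short, the plan is reasonable, but the parts you have not done are exactly the parts the paper's authors could not do either.
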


We were unable to resolve~\Cref{conj:randomized}, and as it turns out,~\Cref{thm:OCTC-cert} suffices for our purposes.  However, we still find~\Cref{conj:randomized} to be a natural and independently interesting question.  More generally, we hope that the techniques we have developed to reason about the complexity of {\sl partial} Boolean functions (such as~\Cref{thm:framework}) and the problems that our work leaves open (such as~\Cref{conj:randomized}) will be of independent interest and utility beyond the connection to $\ampc$ lower bounds that originally motivated our work in this direction.






\section{The $\ampc$ model}
In this section, we give a detailed description of the $\ampc$ model that we will work with. Recall that as defined in Behnezhad et al. \cite{Behnezhad}, the $\ampc$ model is centered around the notion of \emph{adaptivity}, which allows machines to communicate using intermediate shared memory called \emph{distributed data stores} (DDS). Between rounds $r$ and $r+1$, machines in round $r$ write their output in terms of key-value pairs to the DDS $\D_r$, and machines in round $r+1$ obtain their input by querying for keys to $\D_r$. Later queries that a machine makes may depend on the keys and responses of earlier queries made by the same machine in the same round. $\D_0$ stores the input.

\pparagraph{Our specifications.} Now we start to describe our specifications of $\ampc$, which differ slightly from those in \cite{Behnezhad} for technical reasons. In each round, in the DDS, all values under the same key are stored as a multiset under the key. This means that duplicate values are allowed, and each value is written by a unique machine. When a machine queries for a key to the DDS, the response is the entire multiset stored under the key; if there are no values stored under the key, the response is empty (in the form of an empty multiset).

Let $S$ be the I/O capacity. For each machine in each round, we require that the sum of the total number of values in all responses and the number of queries with an empty response is at most $S$, and the machine writes at most $S$ key-value pairs to the DDS. We note that this is essentially the same constraint as in \cite{Behnezhad}. In addition, we require that in each round, under any key, there are at most $S$ values (including duplicates) written to the DDS.

We consider $\ampc$ algorithms that compute a total or partial Boolean function on $N$ input bits, that is, a function $g:\Delta \to \{0,1\}$ with domain $\Delta \subseteq \{0,1\}^N$. The input is stored in $\D_0$ in terms of $N$ key-value pairs $(i, x_i)$, where each $x_i$ is the $i$-th bit of the input. After the final round, the DDS contains a single key-value pair $(\textsc{answer}, 1)$ or $(\textsc{answer},0)$ that indicates the final computation result. We say an $\ampc$ algorithm $\A$ computes the function $g$ in $R$ rounds if during $\A$'s computation on any input $x \in \Delta$, $\D_R$ contains a single key-value pair $(\textsc{answer},g(x))$.

\pparagraph{Graph problems in $\ampc$.} We will represent undirected graph problems on $n$ vertices as total or partial Boolean functions on $N = {n \choose 2}$ input bits (and represent directed graph problems as Boolean functions on $N = 2{n \choose 2}$ input bits), each of which indicates whether the corresponding edge is present or not. In other words, we represent an input graph using its adjacency matrix. For instance, the $\OCTC$ problem on $n$ vertices (with $n$ even) can be represented by a partial Boolean function $\OCTC:\Delta_{\OCTC} \to \{0,1\}$ on $N = {n \choose 2}$ input bits, where $\Delta_{\OCTC} \subset \{0,1\}^N$ is the set of all 1-cycle instances and 2-cycle instances. We will discuss this function in more detail in subsequent sections.

\begin{remark}[Efficiently converting adjacency-matrix representations of graphs into adjacency-list representations] \rm{
In this remark we observe that the upper bounds of \cite{Behnezhad} hold if the input graph is represented by its adjacency matrix (which as discussed above, is the representation that we work with throughout this paper). In \cite{Behnezhad}, the input graph is specified by a list of its edges; to be explicit, if $m$ is the number of edges in the graph, then $\D_0$ stores $m$ key-value pairs $(i, (u_i, v_i))$, where $u_i$ and $v_i$ are the endpoints of the $i$-th edge. As in \cite{Behnezhad}, take $S = n^\epsilon$ for $\epsilon \in (0,1)$. We show that starting with an input graph represented by its adjacency matrix, we can use $O(1/\epsilon)$ rounds of preprocessing to rewrite it in the format of \cite{Behnezhad}. In particular, we need to count the number $m$ of present edges and give a labeling of these edges from 1 through $m$. To do this, we first partition the $N$ input bits into $P = O(N/S)$ groups $X_1, \dots, X_P$ each of size at most $S-2$, and take $P$ machines $M_1, \dots, M_P$. In one round, each $M_i$ reads the input bits in the group $X_i$ and outputs the number $a_i$ of present edges in $X_i$ as well as a list $E_i$ of these $a_i$ edges. Then in $O(1/\epsilon)$ rounds, we compute the prefix sum $c_1, \dots, c_P$ of the numbers $a_1, \dots, a_P$; this is possible because computing the prefix sum of a sequence in the $\mpc$ model can be done in $O(1/\epsilon)$ rounds \cite{Goodrich}, and any $\mpc$ algorithm can be simulated by an $\ampc$ algorithm with the same number of rounds. Note that $m = c_P$. In this process, we use a separate collection of machines to preserve the lists $E_i$'s in the DDS. Finally, in one round, we let each $M_i$ read $c_{i-1}$ (set $c_0 = 0$), $c_i$, and the list $E_i$, and label edges in $E_i$ from $c_{i-1}+1$ through $c_i$.
}\end{remark}

\pparagraph{Processing invalid inputs.} For an input $x \in \{0,1\}^N$, we say $x$ is \emph{valid} if $x \in \Delta$ and \emph{invalid} otherwise. We assume that an $R$-round $\ampc$ algorithm $\A$ that computes $g$ can also process an invalid input $x$, in the following sense: Machines read and process the bits of $x$ in rounds as if a valid input were being computed. In a round of $\A$'s computation on $x$, it may happen that after a machine $v$ makes some queries and receives the responses, the sum of the total number of values in the responses and the number of queries with an empty response exceeds $S$. In this case, we may let $v$ stop making further queries and write nothing to the DDS. An important assumption that we make is that in any round of $\A$'s computation on any \emph{invalid} input $x$, there are also at most $S$ values written under any single key to the DDS. After round $R$, $\D_R$ stores a multiset of bits under the key $\textsc{answer}$; this multiset has at most $S$ bits, and can be empty.

\begin{remark}[Controlling the behavior of \cite{Behnezhad}'s $\OCTC$ algorithm on invalid inputs]\rm{We note that the $\ampc$ algorithm for $\OCTC$ given by \cite{Behnezhad} can be easily modified to satisfy all our technical restrictions described in this section, specifically those on processing \emph{invalid} inputs. For such a modification, we need to resolve the following issue: the algorithm of \cite{Behnezhad} needs to query for neighborhoods of the vertices, which is problematic on invalid inputs where some vertex has degree greater than $S$. Such invalid inputs can be identified by $O(1/\epsilon)$-rounds of preprocessing. Specifically, we may compute the degree of all vertices in $O(1/\epsilon)$ rounds; this is possible in the $\mpc$ model \cite{Behnezhad2}, and any $\mpc$ algorithm can be simulated by an $\ampc$ algorithm with the same number of rounds. If any vertex has degree exceeding $S$, we generate an error message that halts the computation.
}\end{remark}

\section{Generalizing \cite{Roughgarden}'s Polynomial Method: Adaptivity and Partial Boolean Functions}
To prove lower bounds on the round complexities of Boolean functions in the $\mpc$ model, Roughgarden et al. \cite{Roughgarden} introduce a variant of the ``polynomial method,'' which shows that a function computable by an efficient deterministic $\mpc$ algorithm can be represented by a polynomial with low degree. In this section, we generalize the polynomial representation construction of \cite{Roughgarden} to the $\ampc$ model. In addition, we generalize this construction to \emph{partial} Boolean functions, in the sense that given a partial Boolean function computable by an efficient deterministic $\ampc$ algorithm, we show that it can be extended to a \emph{total} Boolean function that can be represented by a polynomial with low degree. This generalized construction is central to our analysis of round complexities of partial Boolean functions in $\ampc$ in subsequent sections.

\subsection{Efficient Deterministic $\ampc$ Algorithms Imply Small Polynomial Degree}
For a Boolean function $g:\{0,1\}^N \to \{0,1\}$, the \emph{degree} of $g$, denoted $\deg(g)$, is the degree of the unique multilinear polynomial $p(x_1, \dots, x_N)$ such that $p(x) = g(x)$ for all $x \in \{0,1\}^N$ (see e.g. \cite{O'Donnell}). We recall the following result from \cite{Roughgarden}: 

\begin{theorem}[\cite{Roughgarden}, Theorems 3.1 and 3.7] \label{TimShuffle}
If a Boolean function $g:\{0,1\}^N \to \{0,1\}$ can be computed by an $R$-round deterministic $\mpc$ algorithm, then $\deg(g) \le S^R$.
\end{theorem}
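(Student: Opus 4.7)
I would prove Theorem~\ref{TimShuffle} by induction on $R$, establishing the stronger claim that every bit stored in the shared memory after $r$ rounds---equivalently, every output bit of every machine that has run during the first $r$ rounds---is, as a function of the input $x \in \zo^N$, a multilinear polynomial in $x_1,\ldots,x_N$ of degree at most $S^r$. The statement of the theorem then follows by specializing to $r=R$ and observing that $g(x)$ is a designated output bit of the round-$R$ memory state. The base case $r=0$ is immediate: before any machine has computed, the memory consists of the input itself partitioned across the machines, so every memory bit is literally some $x_i$ and has degree $1 = S^0$.

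For the inductive step, fix a machine $M$ active in round $r+1$ and let $b$ be one of its output bits. Since $M$ operates on $S$ input bits $y_1,\ldots,y_S$ and produces $S$ output bits via an arbitrary function $f_M:\zo^S\to\zo^S$, I would invoke the standard fact from Boolean analysis that every $\zo^S \to \zo$ function admits a unique multilinear polynomial representation of degree at most $S$. Thus $b = p_M(y_1,\ldots,y_S)$ for some polynomial $p_M$ with $\deg p_M \le S$. Assuming inductively that each $y_j$ is a polynomial in $x$ of degree at most $S^r$, the composition bound $\deg(p \circ (q_1,\ldots,q_S)) \le \deg(p)\cdot \max_j \deg(q_j)$ yields $\deg(b) \le S \cdot S^r = S^{r+1}$, completing the induction.

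The main obstacle is justifying rigorously that each $y_j$ really is a polynomial in $x$ of degree at most $S^r$, because the $\mpc$ communication pattern is input-dependent: the specific round-$r$ output bits delivered to $M$ are determined by routing metadata which each round-$r$ sender writes into its own output, and which therefore depends on $x$. I would handle this by absorbing the routing into the inductive invariant. Expand $b$ over all possible views $v \in \zo^S$ that $M$ might receive,
\[
 b \;=\; \sum_{v \in \zo^S} \mathbf{1}\!\left[\, y(x) = v\,\right]\cdot f_M(v)_b,
\]
and factor each indicator as $\prod_{j=1}^S \mathbf{1}[\,y_j(x)=v_j\,]$. Since destination addresses and message contents are produced \emph{jointly} by the round-$r$ machines (both are part of the same $S$-bit output), strengthening the invariant so that \emph{every} bit of the round-$r$ memory---including routing metadata---is a polynomial of degree at most $S^r$ in $x$ lets one decompose each elementary indicator $\mathbf{1}[y_j(x)=v_j]$ via case-analysis over (sender, position) pairs whose truth values are themselves round-$r$ output bits of degree at most $S^r$. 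Pinning down the right bookkeeping so that routing and content are handled uniformly is the main technical point, and is precisely the content of the polynomial-method lemma in~\cite{Roughgarden}.
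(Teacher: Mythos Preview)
The paper does not actually prove Theorem~\ref{TimShuffle}: it is stated as a quotation of Theorems~3.1 and~3.7 of~\cite{Roughgarden} and used as a black box. So there is no ``paper's own proof'' of this particular statement to compare against. Your proposal is, however, a faithful reconstruction of the argument in~\cite{Roughgarden}, and the inductive skeleton you describe---degree at most $S^r$ after $r$ rounds, with the step driven by the fact that any Boolean function of $S$ bits has degree at most $S$---is exactly right.

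Your treatment of input-dependent routing is also correct in spirit, though the detour through the full expansion $\sum_{v \in \zo^S} \mathbf{1}[y(x)=v]\cdot f_M(v)_b$ is heavier than needed. The cleaner formulation is to argue directly that each received bit $y_j$ has degree at most $S^r$: write $y_j$ as a sum over all potential senders $M'$ and output positions $j'$ of the product of the indicator ``$M'$ routes position $j'$ to slot $j$ of $M$'' and the content bit at that position. The key point---which you do identify---is that for a fixed $M'$, both factors are Boolean functions of the \emph{same} $S$ input bits of $M'$ in round $r$, so their product is still a degree-$S$ function of those bits and hence degree $\le S^r$ in $x$ by induction. Once each $y_j$ has degree $\le S^r$, the composition bound gives $S^{r+1}$ immediately, and the expansion over all $2^S$ views is unnecessary.

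If you want a point of comparison within the present paper, look at the proof of Theorem~\ref{AMPCDeg}, which is the adaptive/partial-function generalization. There the induction is organized around two interleaved families of indicator polynomials $p_{r,v,z}$ (for query sequences) and $q_{r,k,W}$ (for DDS contents), rather than around individual memory bits; this is forced by the key--value structure of the $\ampc$ shared memory and is why the exponent becomes $2R$ instead of $R$. Your bit-level induction is the natural specialization to the non-adaptive $\mpc$ setting.
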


We now generalize Theorem \ref{TimShuffle} to incorporate adaptivity and promise problems: 

\begin{theorem} \label{AMPCDeg}
Let $g: \Delta \to \{0,1\}$ be a partial Boolean function with domain $\Delta \subseteq \{0,1\}^N$. If $g$ can be computed by an $R$-round deterministic $\ampc$ algorithm, then there exists a polynomial $p(x_1, \dots, x_N)$ with degree at most $S^{2R}$ such that $p(x) = g(x)$ for any $x \in \Delta$ and $p(x) \in \{0,1\}$ for any $x \in \{0,1\}^N \setminus \Delta$. In particular, if $g$ is a total Boolean function, then $\deg(g) \le S^{2R}$.
\end{theorem}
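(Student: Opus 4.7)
The plan is to induct on the round number $r$. For each round $r$ and each potential ``slot'' of $\mathcal{D}_r$ indexed by a triple $(k,m,v)$ of key, writing machine, and value, I introduce an indicator polynomial $P^{(r)}_{k,m,v}(x) := \mathbf{1}[\text{machine } m \text{ writes value } v \text{ under key } k \text{ in round } r \text{ on input } x]$. The inductive claim is that each such polynomial has multilinear degree at most $S^{2r}$. The base case $r = 0$ is immediate because $\mathcal{D}_0$ simply stores the input bits, so each corresponding indicator is either $x_i$ or $1 - x_i$.

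For the inductive step I would lean on two ingredients. First, a standard folklore fact: any deterministic adaptive decision tree of depth $D$ on Boolean inputs computes a function expressible as a multilinear polynomial of degree at most $D$ in its inputs, by a one-line induction on $D$ (the root tests some bit $b$, so the output is $b \cdot T_1 + (1-b) \cdot T_0$ where $T_0, T_1$ are depth-$(D{-}1)$ subtrees). Second, a quantitative bound on the effective bit-depth of a single $\ampc$ machine in round $r{+}1$: by the I/O constraint it reads at most $S$ values from $\mathcal{D}_r$ (plus at most $S$ empty-response signals), and since each value is a constant number of $O(\log N)$-bit words, the effective decision-tree depth measured in bits of $\mathcal{D}_r$ is at most $O(S \log N) \le S^2$ in the regime $S \ge \log N$ the paper works in. Composing---each bit of $\mathcal{D}_r$ is, by hypothesis, a polynomial in $x$ of degree at most $S^{2r}$, and the machine's output is a polynomial of degree at most $S^2$ in those bits---yields degree at most $S^2 \cdot S^{2r} = S^{2(r+1)}$, closing the induction.

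To produce the desired polynomial $p(x)$ of degree at most $S^{2R}$, I would assume without loss of generality (by a simple modification of the algorithm's last round that does not increase its round complexity) that a single designated machine $m^\star$ is the one that writes under the key $\textsc{answer}$ in round $R$, so that $p(x) := P^{(R)}_{\textsc{answer},\, m^\star,\, 1}(x)$ equals $g(x)$ on $\Delta$ and is $\{0,1\}$-valued everywhere by construction of the indicator. The main obstacle is the bookkeeping inside the inductive step: multiset responses of multi-word values, with keys chosen adaptively based on earlier responses, make the decision-tree analysis fiddly, and one must apply it at the level of individual bits of $\mathcal{D}_r$ rather than at the level of entire multiset responses, to avoid a worse blow-up per round. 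A secondary concern is the partial-function extension: without the single-writer reduction, the natural alternative of thresholding the count of $1$'s written under $\textsc{answer}$ in $\mathcal{D}_R$ appears to cost an extra factor of $S$ via degree-$S$ interpolation, yielding degree $S^{2R+1}$ rather than $S^{2R}$; matching the stated bound exactly seems to require either the single-writer reduction or a tighter accounting exploiting both the $\le S$ total I/O budget and the ``at most $S$ values per key'' multiset constraint.
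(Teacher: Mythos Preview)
Your high-level plan (induction on rounds, using that depth-$D$ decision trees have polynomial degree $\le D$) is the right instinct, but the inductive step as you describe it has a real gap.

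You take as your inductive variables the indicators $P^{(r)}_{k,m,v}$ (``machine $m$ writes value $v$ under key $k$''), and then assert that a machine in round $r{+}1$ computes a decision tree of depth $O(S\log N)$ ``measured in bits of $\mathcal{D}_r$''. The problem is that the bits the machine actually branches on are \emph{response bits}---bits of the encoding of the multiset returned for a queried key---and these are not the same as your $P^{(r)}_{k,m,v}$ bits. When the machine queries key $k'$, it learns the multiset $W$ under $k'$; but recovering $W$ from the indicators $\{P^{(r)}_{k',m',v'}\}_{m',v'}$ requires aggregating over \emph{all} machines $m'$, and in particular certifying that no machine outside some set wrote an extra value. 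Since the number of machines is unbounded in this model, a bit-level decision tree on the $P^{(r)}$'s that determines $W$ exactly has unbounded depth, and your $O(S\log N)$ bound does not apply to it. Equivalently: to use the ``depth-$D$ tree $\Rightarrow$ degree $D$'' fact, you need each queried bit to already be a degree-$\le S^{2r}$ polynomial in $x$; but a single response bit (say, ``the multiset under $k'$ equals $W$'') is a complicated symmetric function of the $P^{(r)}_{k',m',v'}$'s whose degree in $x$ you have not bounded.

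The paper handles exactly this aggregation step, and it is where the second factor of $S$ per round comes from (not from the bit-width $O(\log N)$ of values). It maintains as the inductive object the indicator $q_{r,k,W}(x)=\mathbf{1}[\text{multiset under $k$ in }\mathcal{D}_r\text{ equals }W]$, and builds it in two stages: first $p_{r,v,z}$, the indicator that machine $v$'s query sequence is $z=((k_1,W_1),\dots,(k_s,W_s))$, which is just $\prod_{j\le s} q_{r-1,k_j,W_j}$ with $s\le S$; and then $q_{r,k,W}$ via inclusion--exclusion. The key trick for the latter is to first build the ``contains $W$'' indicator $\tilde q_{r,k,W}$ by summing over assignments $\alpha:W\to M$ of values to their writers and taking a product over the at most $|W|\le S$ machines in the image; this sidesteps the need to verify that non-participating machines wrote nothing. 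Then $q_{r,k,W}$ is obtained by subtracting off the $q_{r,k,W'}$ for strict supersets $W'$. Both stages cost a factor of $S$ in degree---one from ``$\le S$ queries per machine'', one from ``$\le S$ values per key''---yielding $S^{2}$ per round with no appeal to $S\ge\log N$. Your single-writer reduction for the final output is fine, but note the paper does not need it: it simply takes $p=q_{R,\textsc{answer},\{1\}}$.
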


\begin{proof}
Let $\A$ be an $R$-round deterministic $\ampc$ algorithm that computes $g$. As we mentioned in the previous section, we assume $\A$ can process invalid inputs as well. We start by setting up some notations. For round $1 \le r \le R+1$ and machine $v$, the sequence of queries made by $v$ in round $r$ of $\A$'s computation on an input and their responses is specified by a sequence of at most $S$ pairs of keys and multisets. We call this sequence the \emph{query sequence} of $v$ in round $r$. We denote $\Sigma_{r,v}$ as the set of possible query sequences of $v$ in round $r$ during $\A$'s computation on some input in $\{0,1\}^N$. For round $0 \le r \le R$ and key $k$, we denote $\Gamma_{r,k}$ as the set of possible multisets of values written to $\D_r$ under $k$ during $\A$'s computation on some input in $\{0,1\}^N$. Since we assumed that during $\A$'s computation on any input, there are at most $S$ values written to $\D_r$ under $k$, each multiset in $\Gamma_{r,k}$ has size at most $S$.

To construct the desired polynomial $p$, we will construct two families of polynomials:
\begin{itemize}
    \item[$\circ$] $p_{r,v,z}(x_1, \dots, x_N)$ for each round $1 \le r \le R$, machine $v$, and query sequence $z \in \Sigma_{r,v}$, which has degree at most $S^{2r-1}$ and satisfies that for any $x \in \{0,1\}^N$,  $p_{r,v,z}(x)=1$ if in round $r$ of $\A$'s computation on $x$, the query sequence of $v$ is $z$, and $p_{r,v,z}(x)=0$ otherwise.
    
    \item[$\circ$] $q_{r,k,W}(x_1, \dots, x_N)$ for each round $0 \le r \le R$, key $k$, and multiset $W\in \Gamma_{r,k}$, which has degree at most $S^{2r}$ and satisfies that for any $x \in \{0,1\}^N$, $q_{r,k,W}(x)=1$ if the values stored under $k$ in $\D_r$ during $\A$'s computation on $x$ is $W$, and $q_{r,k,W}(x)=0$ otherwise.
\end{itemize}
The construction will proceed by induction on $r$ as follows: As the base case we construct the polynomials $q_{0,k,W}$'s. Then inductively for each $r \ge 1$, we first construct the polynomials $p_{r, v,z}$'s from the $q_{r-1,v,z}$'s, and then construct the polynomials $q_{r,k,W}$'s from the $p_{r,v,z}$'s.

We start with the base case $r = 0$. The initial DDS $\D_0$ stores the input in terms of $N$ key-value pairs $(i, x_i)$, where $x_i$ is the $i$-th input bit. The possible keys that appear in $\D_0$ are the indices $1 \le i \le N$, and the multiset of values stored under any key is either $\{0\}$ or $\{1\}$. For each $i$, we take
\[  q_{0,i,\{0\}}(x_1, \dots, x_N) = 1 - x_i, \qquad q_{0,i,\{1\}}(x_1, \dots, x_N) = x_i, \]
so that for all $x \in \{0,1\}^N$, $q_{0,i,\{b\}}(x) = 1$ if and only if the $i$-th bit of $x$ is $b$. Note that each $q_{0,k,W}$ has degree $1 = S^0$.

Now we begin the inductive step $r \ge 1$. We first construct $p_{r,v,z}$, where $v$ is a machine and $z \in \Sigma_{r,v}$ is a query sequence. Write $z = ((k_1,W_1), \dots, (k_s,W_s))$, where $s \le S$ and the $j$-th query has key $k_j$ and response $W_j$. By adaptivity, we see that the query sequence of $v$ in round $r$ is $z$ if and only if for each $j$, the values under $k_j$ in $\D_{r-1}$ is exactly $W_j$. Therefore, we take
\[  p_{r, v, z}(x_1, \dots, x_N) = \prod_{j = 1}^s q_{r-1,k_j,W_j}(x_1, \dots, x_N),   \]
which restricts to a Boolean function on $\{0,1\}^N$. Since each $q_{r-1, k_j, W_j}$ has degree at most $S^{2r-2}$ and $s \le S$, the degree of $p_{r,v,z}$ is at most $S^{2r-1}$.

Next we construct the polynomials $q_{r,k,W}$'s. We introduce some notations. Denote $M$ as the set of all machines involved in $\A$. For each round $r$, machine $v$, key $k$, and nonempty multiset $W$, let $\Sigma_{r,v,k,W} \subseteq \Sigma_{r,v}$ be the subset of query sequences of $v$ in round $r$ that cause $v$ to write exactly the values in $W$ to $\D_r$ under $k$. Note that for fixed $v, r, k$, the sets $\Sigma_{r-1,v,k,W}$ are mutually disjoint.

From now on, we fix a key $k$. Given a multiset $W \in \Gamma_{r,k}$, we will look at all possible assignments of values in $W$ to the machines in round $r$ and, for each assignment, construct a polynomial that indicates the event that any machine that is assigned some values writes exactly the assigned values under $k$ to $\D_r$. For an assignment $\alpha: W \to M$, consider the polynomial
\begin{equation}\label{qprepare}
\prod_{v \in \alpha(W)} \sum_{z \in \Sigma_{r,v, k, \alpha^{-1}(v)}} p_{r,v,z}.
\end{equation}
For each input $x \in \{0,1\}^N$, at most one sequence $z \in \Sigma_{r,v}$ can be the query sequence of $v$ in round $r$ of $\A$'s computation on $x$. Thus in the summation in (\ref{qprepare}), at most one $p_{r, v, z}$ evaluates to 1 on $x$. Hence the polynomial (\ref{qprepare}) restricts to a Boolean function on $\{0,1\}^N$. It is easy to verify that (\ref{qprepare}) evaluates to 1 on an input $x \in \{0,1\}^N$ if and only if during $\A$'s computation on $x$, any machine $v$ in the image of $\alpha$ writes precisely the values $\alpha^{-1}(v)$ under the key $k$ to $\D_r$. Moreover, since no restriction is placed on what the machines outside $\alpha(W)$ write to $\D_r$ under $k$, an alternative way to interpret (\ref{qprepare}) is that it evaluates to 1 on $x$ if and only if during $\A$'s computation on $x$, the multiset of values written to $\D_r$ under $k$ is a \emph{superset} of $W$ and the sources of the values in $W$ are indicated by the assignment $\alpha$.

Taking into account all possible assignments, we take
\begin{equation}\label{tildeq}
\tilde{q}_{r,k,W} = \sum_{\alpha: W \to M} \prod_{v \in \alpha(W)} \sum_{z \in \Sigma_{r, v, k, \alpha^{-1}(v)}} p_{r, v,z}.    
\end{equation}
During $\A$'s computation on an input $x \in \{0,1\}^N$, if the multiset of values under the key $k$ in $\D_r$ is a superset of $W$, then the sources of the values in $W$ are indicated by a unique assignment $\alpha:W \to M$. Thus at most one summand in the outer summation in (\ref{tildeq}) evaluates to 1 on $x$. This verifies that $\tilde{q}_{r,k,W}$ restricts to a Boolean function on $\{0,1\}^N$. Moreover, $\tilde{q}_{r,k,W}$ evaluates to 1 on an input $x \in \{0,1\}^N$ if and only if during $\A$'s computation on $x$, the multiset of values written to $\D_r$ under $k$ is a superset of $W$. Since $W$ has size at most $S$, $\alpha(W)$ has size at most $S$ as well. Then since each $p_{r, v,z}$ has degree at most $S^{2r-1}$, we have that $\tilde{q}_{r,k,W}$ has degree at most $S^{2r}$.

Now we use the polynomials $\tilde{q}_{r,k,W}$'s to construct our desired polynomials $q_{r,k,W}$'s for $W \in \Gamma_{r,k}$. We use a downward induction on the size of $W$. For a multiset $W \in \Gamma_{r,k}$ with maximum possible size $S$, we can simply take
\[  q_{r,k,W} = \tilde{q}_{r,k,W}.  \]
For a multiset $W \in \Gamma_{r,k}$ with $|W| < S$, to obtain $q_{r,k,W}$ from $\tilde{q}_{r,k,W}$, we need to rule out inputs on which the values written to $\D_r$ under $k$ is a strict superset of $W$. Thus we take
\begin{equation}\label{qequation}
q_{r,k,W} = \tilde{q}_{r,k,W} - \sum_{W' \in \Gamma_{r,k}, W' \supsetneq W} q_{r,k,W'}.
\end{equation}  
For $x \in \{0,1\}^N$, if $\tilde{q}_{r,k,W}(x) = 1$, then at most one $q_{r,k,W'}$ in the summation in (\ref{qequation}) evaluates to 1 on $x$ since the values stored under $k$ in $\D_r$ during $\A$'s computation on $x$ is a unique multiset; if $\tilde{q} _{r,k,W}(x) = 0$, then multiset of values stored under $k$ in $\D_r$ during $\A$'s computation on $x$ is not a superset of $W$, and all the $q_{r,k,W'}$'s in the summation in (\ref{qequation}) evaluates to 0 on $x$. Thus $q_{r,k,W}$ restricts to a Boolean function on $\{0,1\}^N$. It is easy to see that $q_{r,k,W}$ has degree at most $S^{2r}$. This concludes our inductive construction.

To finish the proof, we take $p = q_{R, \textsc{answer}, \{1\}}$. It is straightforward to check that $p$ has degree at most $S^{2R}$, restricts to a Boolean function on $\{0,1\}^N$, and, among the valid inputs, takes value 1 precisely on those on which $g$ evaluates to 1.
\end{proof}

\subsection{Implications for the Randomized Setting}
In the randomized setting, Roughgarden et al \cite{Roughgarden} used Theorem \ref{TimShuffle} to show that a total Boolean function that can be computed by a small-round randomized $\mpc$ algorithm has small approximate degree (see Theorem 3.5 of \cite{Roughgarden}). With Theorem \ref{AMPCDeg}, we are able to give an analog in the $\ampc$ model for total Boolean functions using the same proof as that of Theorem 3.5 in \cite{Roughgarden}.

\begin{definition}[Randomized $\ampc$ algorithm]\label{RandomAMPC}
A \emph{randomized $\ampc$ algorithm} is a probability distribution over deterministic $\ampc$ algorithms. The number of rounds required by a randomized $\ampc$ algorithm is the maximum number of rounds required by a deterministic $\ampc$ algorithm in the support of the distribution. For $\delta \in [0,1)$, we say that a randomized $\ampc$ algorithm $\A$ computes a partial Boolean function $g:\Delta \to \{0,1\}$ with error at most $\delta$ if for each $x \in \Delta$, $\A$ outputs $g(x)$ with probability at least $1-\delta$.
\end{definition}

We say a \emph{total} Boolean function $g:\{0,1\}^N \to \{0,1\}$ is \emph{approximately represented} by a polynomial $p$ if $|p(x) - g(x)| \le \frac{1}{3}$ for any $x \in \{0,1\}^N$. The \emph{approximate degree} of $g$, denoted $\widetilde\deg(g)$, is given by
\[ \min \{\deg(p) \mid p \mbox{ approximately represents } g\}.    \]

\begin{theorem}\label{AMPCApproxDeg}
If a total Boolean function $g:\{0,1\}^N \to \{0,1\}$ can be computed by an $R$-round randomized $\ampc$ algorithm with error at most $1/3$, then $g$ has approximate degree at most $S^{2R}$.
\end{theorem}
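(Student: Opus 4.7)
The plan is to mimic the standard averaging-over-random-bits trick exactly as in \cite{Roughgarden}'s proof of their Theorem 3.5, only with Theorem~\ref{AMPCDeg} substituted for Theorem~\ref{TimShuffle}. Specifically, I would let $\A$ be a randomized $R$-round $\ampc$ algorithm that computes $g$ with error at most $1/3$, and view $\A$ as a distribution $\mathcal{D}$ over deterministic $\ampc$ algorithms $\A_1,\A_2,\dots$, each of which runs in at most $R$ rounds (by Definition~\ref{RandomAMPC}).

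The key step is to associate a polynomial to each deterministic algorithm in the support. For each $\A_i$, let $g_i:\{0,1\}^N\to\{0,1\}$ be the total Boolean function that $\A_i$ computes. By Theorem~\ref{AMPCDeg} applied to $g_i$ (which is total, so the second clause of the theorem kicks in), there is a multilinear polynomial $p_i$ with $\deg(p_i)\le S^{2R}$ such that $p_i(x)=g_i(x)$ for all $x\in\{0,1\}^N$. I would then define
\[
p(x) \;=\; \mathbb{E}_{\A_i\sim\mathcal{D}}[\,p_i(x)\,],
\]
which is still a multilinear polynomial of degree at most $S^{2R}$ (degree does not increase under convex combinations).

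The correctness check is immediate: for every $x\in\{0,1\}^N$,
\[
p(x) \;=\; \mathbb{E}_{\A_i\sim\mathcal{D}}[g_i(x)] \;=\; \Pr_{\A_i\sim\mathcal{D}}[\A_i(x)=1].
\]
Since $\A$ has error at most $1/3$ on every input, this probability lies in $[2/3,1]$ when $g(x)=1$ and in $[0,1/3]$ when $g(x)=0$. In either case $|p(x)-g(x)|\le 1/3$, so $p$ approximately represents $g$ and hence $\widetilde{\deg}(g)\le S^{2R}$.

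There isn't really a substantive obstacle here — all the work has been pushed into Theorem~\ref{AMPCDeg}. The only mildly delicate point worth flagging is verifying that the randomized-to-deterministic decomposition in Definition~\ref{RandomAMPC} really does give, for each $\A_i$ in the support, a deterministic algorithm that \emph{totally} computes some Boolean function on $\{0,1\}^N$ in at most $R$ rounds, so that Theorem~\ref{AMPCDeg} applies uniformly (this is where the "maximum" in Definition~\ref{RandomAMPC} matters); if the support is uncountable one should also note that a randomized algorithm uses only finitely many random bits per execution, so $\mathcal{D}$ may be taken to have finite support and the expectation defining $p$ is a finite convex combination.
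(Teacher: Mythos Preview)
Your proposal is correct and essentially identical to the paper's own proof: view the randomized algorithm as a distribution over deterministic $\ampc$ algorithms $\A_i$ with weights $w_i$, apply Theorem~\ref{AMPCDeg} to each $\A_i$ to get a degree-$S^{2R}$ polynomial $p_i$, and take $p=\sum_i w_i p_i$, which equals the acceptance probability of $\A$ and hence approximates $g$ to within $1/3$. The extra caveats you raise (totality of each $g_i$, finiteness of the support) are reasonable sanity checks but are not discussed in the paper's proof.
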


\begin{proof}
Let $\A$ be an $R$-round randomized $\ampc$ algorithm that computes $g$ with error at most $1/3$. Then $\A$ is a distribution over $R$-round deterministic $\ampc$ algorithms $\A_i$'s, where each $\A_i$ has weight $w_i$. For each $i$, by Theorem \ref{AMPCDeg}, there is a polynomial $p_i$ of degree at most $S^{2R}$ such that $\A_i$ outputs $p_i(x)$ for all $x \in \{0,1\}^N$. Take $p = \sum_{i} w_i p_i$, which is a polynomial of degree at most $S^{2R}$. Then for all $x \in \{0,1\}^N$, $p(x)$ equals the probability that $\A$ outputs 1 on $x$. Since $\A$ computes $g$ with error at most $1/3$, it is easy to see that $p$ approximately represents $g$.
\end{proof}

\begin{remark}\rm{
Note that Theorem \ref{AMPCApproxDeg} applies only to total Boolean functions. An issue with generalizing this result to a partial Boolean $g: \Delta \to \{0,1\}$ is as follows. Note that each $p_i$ as in the proof of Theorem \ref{AMPCApproxDeg} is constructed using Theorem \ref{AMPCDeg}. The statement of Theorem \ref{AMPCDeg} leaves the possibility that for some $\A_i, \A_j$ in the distribution of $\A$, $p_i(x) = 0$ and $p_j(x) = 1$ for some invalid input $x \in \{0,1\}^N \setminus \Delta$. Then some family of weights $\{w_i\}$ may result in $p(x)\in (1/3, 2/3)$, which means that $p$ cannot approximately represent any total Boolean function. This issue goes away if, for instance, in Theorem \ref{AMPCDeg}, we can guarantee in addition that the polynomial $p$ evaluates to 0 on all invalid inputs. However, this stronger version is out of our reach.
}\end{remark}

\section{Deterministic Round Lower Bound for $\OCTC$ via Query Complexity} \label{DetermBounds}
For a total Boolean function, Theorems \ref{AMPCDeg} and \ref{AMPCApproxDeg} from the previous section imply the following deterministic and randomized round lower bounds in the $\ampc$ model in terms of the degree and approximate degree of the function respectively:

\begin{corollary}
Any deterministic $\ampc$ algorithm that computes a total Boolean function $g:\{0,1\}^N \to \{0,1\}$ requires $\frac{1}{2} \log_S \deg(g)$ rounds. Any randomized $\ampc$ algorithm that computes $g$ with error at most 1/3 requires $\frac{1}{2} \log_S \widetilde{\deg}(g)$ rounds.
\end{corollary}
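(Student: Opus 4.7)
The plan is to observe that this corollary is essentially a restatement of Theorems~\ref{AMPCDeg} and~\ref{AMPCApproxDeg} via contrapositive, with logarithms taken on both sides. No further machinery should be needed.

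For the deterministic part, I would proceed as follows. Suppose $\A$ is a deterministic $\ampc$ algorithm that computes the total Boolean function $g : \{0,1\}^N \to \{0,1\}$ in $R$ rounds. Since $g$ is total, the ``in particular'' clause of Theorem~\ref{AMPCDeg} applies directly and yields $\deg(g) \le S^{2R}$. Taking $\log_S$ of both sides and rearranging gives $R \ge \tfrac{1}{2}\log_S \deg(g)$, which is the claimed lower bound.

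For the randomized part, the argument is analogous. If $\A$ is a randomized $\ampc$ algorithm that computes $g$ in $R$ rounds with error at most $1/3$, then Theorem~\ref{AMPCApproxDeg} immediately gives $\widetilde{\deg}(g) \le S^{2R}$, and taking $\log_S$ yields $R \ge \tfrac{1}{2}\log_S \widetilde{\deg}(g)$.

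There is no real obstacle here; the work was all done in the proofs of Theorems~\ref{AMPCDeg} and~\ref{AMPCApproxDeg}. The only minor thing worth flagging explicitly is that Theorem~\ref{AMPCDeg} is stated for partial functions and gives a polynomial $p$ which agrees with $g$ on $\Delta$ and takes values in $\{0,1\}$ elsewhere; when $g$ is total, $\Delta = \{0,1\}^N$, so $p$ represents $g$ exactly and therefore its degree upper-bounds $\deg(g)$. Likewise, one should note that the bounds hold regardless of the number of machines $\A$ uses, since Theorems~\ref{AMPCDeg} and~\ref{AMPCApproxDeg} themselves place no constraint on the number of machines.
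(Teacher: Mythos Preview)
Your proposal is correct and matches the paper's approach exactly: the paper states this corollary without proof, as it follows immediately from Theorems~\ref{AMPCDeg} and~\ref{AMPCApproxDeg} by taking logarithms, which is precisely what you do.
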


For a general partial Boolean function, its degree and approximate degree are less well-understood.  Nonetheless, we can use Theorem \ref{AMPCDeg} to relate the number of rounds required by a deterministic $\ampc$ algorithm to compute a partial Boolean function to other complexity measures of the function, thus obtaining deterministic round lower bounds. In this section, we elaborate on the relation between the number of rounds and the \emph{deterministic query complexity} of the function. We will in particular focus on the partial Boolean function $\OCTC: \Delta_{\OCTC} \to \{0,1\}$ . We will prove an asymptotically optimal lower bound on the deterministic query complexity of $\OCTC$, and then use it to obtain an $\Omega(\log_S n)$ deterministic round lower bound for computing $\OCTC$ in the $\ampc$ model. A randomized round lower bound will be deferred to Section \ref{RandomBounds}.

\subsection{Efficient Deterministic $\ampc$ Algorithms Imply Small Deterministic Query Complexity}
We start by defining the deterministic query complexity of a partial Boolean function.
\begin{definition}[Deterministic query complexity]\label{DetermQuery}
For a partial Boolean function $g : \Delta \to \zo$ with domain $\Delta \subseteq \zo^N$, we define its \emph{deterministic query complexity}, or commonly also referred to as deterministic decision tree complexity, to be 
\[ D(g) \coloneqq \min\{ D(f)\, \mid\, f : \zo^N \to \zo \text{ where $f(x) = g(x)$ for all $x \in \Delta$}. \} \] 
\end{definition}

The current best-known relation between the degree and the deterministic query complexity of a total Boolean function is the following:
\begin{lemma}[Midrij\=anis \cite{Midrijanis}]\label{polydegreeDT}
For a Boolean function $g:\{0,1\}^N \to \{0,1\}$, we have
\[  D(g) \le 2\deg(g)^3.    \]
\end{lemma}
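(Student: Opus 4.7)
The plan is to factor the inequality through block sensitivity as an intermediate complexity measure, combining the Nisan--Szegedy polynomial bound with a Midrij\=anis-style certificate-querying decision tree.

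First I would establish the Nisan--Szegedy inequality $bs(g) \le 2\deg(g)^2$. Fix an input $x$ witnessing block sensitivity $b = bs(g)$ via pairwise disjoint sensitive blocks $B_1, \ldots, B_b$, and define $h : \{0, 1, \ldots, b\} \to \{0,1\}$ by $h(k) = g(x^{(k)})$, where $x^{(k)}$ denotes $x$ with blocks $B_1, \ldots, B_k$ flipped. By construction $h$ alternates values at consecutive integers. Symmetrizing the unique multilinear representation of $g$ by averaging over coordinate permutations within each block yields a univariate polynomial $p$ of degree at most $\deg(g)$ that agrees with $h$ at each integer in $\{0, 1, \ldots, b\}$. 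Since $p$ takes values in $\{0,1\}$ and alternates across $b+1$ consecutive integers, Markov's inequality for polynomial derivatives forces $\deg(p) \ge \sqrt{b/2}$, giving $\deg(g)^2 \ge b/2$.

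Next I would prove the Midrij\=anis inequality $D(g) \le \deg(g) \cdot bs(g)$ via a recursive decision-tree construction. At each level, pick any input on which the function value is not yet determined by prior queries, locate a minimal certificate for it of size at most $\deg(g)$, query all variables in that certificate, and recurse on the restricted subfunction. The crucial structural claim driving this step is that after querying such a certificate, either the function value is forced on all consistent extensions or the block sensitivity of the restricted subfunction has strictly decreased. Iterating this at most $bs(g)$ times yields total query depth at most $\deg(g) \cdot bs(g)$. Combining the two steps gives $D(g) \le \deg(g) \cdot bs(g) \le \deg(g) \cdot 2\deg(g)^2 = 2\deg(g)^3$, as claimed.

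The hard part is the second step, specifically the combination of (i) finding a certificate of size only $O(\deg(g))$ at every undetermined input and (ii) arguing that querying it strictly decreases block sensitivity. A naive chaining of standard inequalities $D(g) \le C(g) \cdot bs(g)$ with $C(g) \le \deg(g) \cdot bs(g)$ would only yield a quartic bound $D(g) = O(\deg(g)^4)$. Midrij\=anis's insight is a subtler polynomial-based argument---analyzing the multilinear representation of $g$ on the subcube spanned by the sensitive blocks of an input realizing maximum block sensitivity---that saves an entire factor of $\deg(g)$ and delivers the sharper cubic bound.
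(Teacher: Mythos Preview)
The paper does not prove this lemma at all; it is quoted verbatim as a result of Midrij\=anis and used as a black box. So there is no ``paper's own proof'' to compare against, and your sketch should be judged on its own.

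On that score, both steps have genuine gaps. In your Nisan--Szegedy step, the function $h(k) = g(x^{(k)})$ obtained by flipping the first $k$ blocks does \emph{not} alternate: sensitivity of $B_2$ at $x$ says nothing about sensitivity of $B_2$ at $x^{B_1}$, so $h(2)$ need not differ from $h(1)$. The correct symmetrization introduces one Boolean variable $y_i$ per block, producing a polynomial $p(y_1,\dots,y_b)$ of degree at most $\deg(g)$ with $p(0)=g(x)$ and $p(e_i)\ne g(x)$ for every $i$; one then averages over permutations of the $y_i$ to get a univariate polynomial bounded on $\{0,\dots,b\}$ with a unit jump between $0$ and $1$, and applies a Markov-type inequality.

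In your second step, the phrase ``locate a minimal certificate of size at most $\deg(g)$'' is precisely the assertion that needs proof, and in general no such certificate exists ($C(g)$ can exceed $\deg(g)$). Midrij\=anis does not query a certificate; he queries the variables of a \emph{maxonomial} (a monomial of maximal degree in the multilinear representation of the current restricted function). The key lemma is that every maxonomial intersects every minimal sensitive block at the chosen input, so querying its $\le \deg(g)$ variables kills at least one variable in each sensitive block and forces the block sensitivity of the restriction to drop. Your final paragraph gestures toward a ``subtler polynomial-based argument'' but does not state this lemma or explain why it holds, and the description you give (``analyzing the representation on the subcube spanned by the sensitive blocks'') is not the mechanism. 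As written, the proposal identifies the right two ingredients but does not actually supply either.
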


This relation allows us to obtain the following useful corollary of Theorem \ref{AMPCDeg}, which reduces proving deterministic round lower bounds for computing a partial Boolean function $g$ in the $\ampc$ model to proving lower bounds on $D(g)$:

\begin{theorem}\label{AMPCDT}
For any $R$-round deterministic $\ampc$ algorithm that computes a partial Boolean function $g: \Delta \to \{0,1\}$, we have
\[  D(g) \le 2S^{6R}. \]
In particular, $R \ge \frac{1}{6} \log_S \frac{D(g)}{2}$.
\end{theorem}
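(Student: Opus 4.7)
The plan is to combine Theorem~\ref{AMPCDeg} with Midrij\=anis's degree-to-decision-tree relation (Lemma~\ref{polydegreeDT}) essentially as a two-step chain. Recall that Theorem~\ref{AMPCDeg} already does the hard work: it converts an $R$-round deterministic $\ampc$ algorithm computing the partial function $g$ into a polynomial $p(x_1, \dots, x_N)$ of degree at most $S^{2R}$ that agrees with $g$ on $\Delta$ and, crucially, takes values in $\{0,1\}$ on every point of the hypercube $\{0,1\}^N$, not just on promise inputs. Lemma~\ref{polydegreeDT}, on the other hand, applies only to total Boolean functions. The role of the ``Boolean on all of $\{0,1\}^N$'' guarantee in Theorem~\ref{AMPCDeg} is precisely to bridge this gap.

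First I would invoke Theorem~\ref{AMPCDeg} on the $R$-round $\ampc$ algorithm that computes $g$, obtaining the polynomial $p$ described above. Next I would define the total Boolean function $f : \{0,1\}^N \to \{0,1\}$ by $f(x) \coloneqq p(x)$; this is well defined because $p$ outputs $0$ or $1$ at every point of $\{0,1\}^N$. By construction, $f$ is an extension of $g$ from $\Delta$ to the full cube, and its unique multilinear representation is $p$ itself, so $\deg(f) \le \deg(p) \le S^{2R}$. Then I would apply Lemma~\ref{polydegreeDT} (Midrij\=anis) to the total function $f$ to conclude
\[
D(f) \;\le\; 2\,\deg(f)^{3} \;\le\; 2\bigl(S^{2R}\bigr)^{3} \;=\; 2\,S^{6R}.
\]

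Finally, I would appeal to Definition~\ref{DetermQuery}: since $D(g)$ is the minimum of $D(f')$ over all total extensions $f'$ of $g$, and $f$ is one such extension, we have $D(g) \le D(f) \le 2 S^{6R}$. Rearranging gives $R \ge \tfrac{1}{6} \log_S(D(g)/2)$, which is the claimed bound. There is no real obstacle here; the entire content of the theorem has already been packaged into Theorem~\ref{AMPCDeg} (in particular, into the guarantee that the output polynomial is Boolean-valued \emph{everywhere}, not merely on $\Delta$). The only cosmetic point to double-check is that Midrij\=anis's lemma is being applied to a genuine total function of degree $\le S^{2R}$, which is immediate from the multilinearity of $p$ and the fact that the degree of $f$ is defined via its unique multilinear representative.
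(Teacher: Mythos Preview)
Your proposal is correct and follows essentially the same two-step chain as the paper: invoke Theorem~\ref{AMPCDeg} to get a total Boolean extension $f$ (the paper calls it $\tilde g$) of degree at most $S^{2R}$, apply Lemma~\ref{polydegreeDT} to $f$, and then use $D(g)\le D(f)$ from Definition~\ref{DetermQuery}. The only cosmetic nitpick is that Theorem~\ref{AMPCDeg} does not literally assert $p$ is multilinear, so ``its unique multilinear representation is $p$ itself'' is slightly imprecise; but since multilinearizing $p$ over $\{0,1\}^N$ never increases degree, your conclusion $\deg(f)\le S^{2R}$ is unaffected.
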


\begin{proof}
Let $\A$ be an $R$-round deterministic $\ampc$ algorithm that computes $g$. By Theorem \ref{AMPCDeg}, there exists a polynomial $p(x_1, \dots, x_N)$ with degree at most $S^{2R}$ such that $p(x) = g(x)$ for any $x \in \Delta$ and $p(x) \in \{0,1\}$ for any $x \in \{0,1\}^N \setminus \Delta$. We denote the Boolean function obtained by restricting $p$ to $\{0,1\}^N$ as $\tilde g$ (which is equal to $g$ if $g$ is a total Boolean function). Then $\deg(\tilde g) \le S^{2R}$. Lemma \ref{polydegreeDT} then implies that
\[     D(\tilde g) \le 2\deg(\tilde g)^3 \le 2S^{6R}. \]
Since $g$ is the restriction of $\tilde g$ on $\Delta$, we have $D(g) \le D(\tilde g)$. The theorem thus follows.
\end{proof}

\subsection{Deterministic Round Complexity of $\OCTC$}
We now focus on the $\OCTC$ problem and use Theorem \ref{AMPCDT} to obtain a lower bound on its deterministic round complexity in $\ampc$. This requires us to lower bound $D(\OCTC)$. We will actually establish the following lower bound on the deterministic query complexity of the more general promise problem $\OCKC$ of distinguishing between a cycle of length $n$ versus $k$ cycles of length $\frac{n}{k}$, where $k$ divides $n$. The proof is deferred to Section \ref{1CkCDTproof}.

\begin{theorem}[Deterministic query complexity of $\OCKC$]\label{1CkCDT}
\[  D(\OCKC) \ge \frac{n^2}{128k^2}.    \]
\end{theorem}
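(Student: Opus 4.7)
The natural approach is an adversary argument. We argue that for any deterministic decision tree $T$ solving $\OCKC$ with depth less than $n^2/(128k^2)$, an adversary can answer $T$'s queries in a way that leaves both a 1-instance and a 0-instance consistent with the answers, forcing $T$ to err on one of them. The adversary maintains disjoint edge subsets $M$ (``present'') and $A$ (``absent'') of $K_n$, enforcing the invariant that $(M,A)$ is \emph{safe}: extendable simultaneously to a Hamiltonian $n$-cycle and to a vertex-disjoint union of $k$ cycles of length $n/k$. In particular $M$ is always a union of vertex-disjoint paths. The strategy is: on query $e$, answer ``absent'' whenever this preserves safety, and ``present'' otherwise.

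The first step is to bound the total number of ``present'' answers. A ``present'' is forced only when answering ``absent'' would leave an endpoint $u$ or $v$ with fewer than two non-absent incident edges, making the degree-$2$ constraint in any surviving extension infeasible. Each vertex maintains $d_M(v) \le 2$, hence can be the endpoint of at most two forced ``present'' events; summing over vertices and noting that each present edge uses two endpoints, $|M| \le n$ throughout the game. Consequently, if $T$ halts after $d$ queries then $|A| \ge d - n$.

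The main step, and the main obstacle, is an \emph{extendability lemma}: if $(M,A)$ is safe with $|M| \le n$ forming paths each on at most $n/k$ vertices and $|A| \le n^2/(128k^2)$, then for every unqueried edge $e$ the state $(M, A \cup \{e\})$ is still safe. For the Hamiltonian-cycle extension we contract each $M$-path to a super-vertex and verify Hamiltonicity of the resulting multigraph on $O(n)$ vertices via a Dirac-type minimum-degree criterion; the bound $|A|+1 \ll n^2/4$ is too small to pinch the minimum degree sufficiently to destroy Hamiltonicity. For the $k$-cycle-decomposition extension we distribute the $M$-paths into $k$ buckets whose total vertex-counts are each at most $n/k$, then close each bucket into an $(n/k)$-cycle greedily using unqueried edges; within any bucket the absent-edge density is only $O(|A|/k^2) = O(n^2/k^4) \ll (n/k)^2$ on average by a double-counting argument, leaving ample room for closure. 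The principal difficulty is that the bucketing can fail when $M$-paths are too long; we expect to address this by augmenting the adversary with a ``long-path avoidance'' rule forcing ``absent'' on any query whose ``present'' answer would extend a path past $n/(2k)$ vertices, and verifying that this auxiliary rule is compatible with the forced-``present'' bookkeeping above. Combining the two steps, the adversary survives at least $n^2/(128k^2) - n = \Omega(n^2/k^2)$ queries, giving the claimed lower bound on $D(\OCKC)$.
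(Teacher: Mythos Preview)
Your extendability lemma is false as stated, and this is the central gap. Take $M=\emptyset$ and let $A$ consist of all edges incident to a fixed vertex $v$ except two, say $(v,a)$ and $(v,b)$; then $|A|=n-3\le n^2/(128k^2)$ for large $n$ and the state is safe. But for $e=(v,a)$ the state $(M,A\cup\{e\})$ leaves $v$ with a single non-absent edge, so no Hamiltonian cycle survives. Hence it is \emph{not} true that one can always answer ``absent'' while $|A|$ is small. More generally, your rule ``answer present only when an endpoint would drop below two non-absent edges'' lets the algorithm drive the minimum degree of the maybe-graph down to~$2$, which is far below any Dirac-type threshold; at that point you have no tool to certify that both a Hamiltonian cycle and a $k$-cycle cover still exist, and indeed they need not.

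The paper's adversary resolves exactly this issue by being much more eager to answer \textsc{yes}: it replies \textsc{yes} as soon as one endpoint of the queried edge has already received $n/(4k)$ \textsc{no} answers (subject to keeping the \textsc{yes}-edges a union of vertex-disjoint paths). This guarantees that every vertex not yet in the \textsc{yes}-graph keeps degree roughly $n-n/(4k)$ in the maybe-graph, so Dirac's theorem applies on the complement of $V(Y)$, and Hall's theorem lets one splice the $O(n/k)$ \textsc{yes}-paths back in. The $n^2/k^2$ count then comes not from ``few presents, hence many absents'' as you propose, but from the fact that each \textsc{yes} edge is preceded by $\Omega(n/k)$ \textsc{no} answers at one of its endpoints, and Phase~1 produces $\Omega(n/k)$ \textsc{yes} edges. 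Your bucketing sketch for the $k$-cycle extension and the ad hoc long-path rule are in the right spirit, but until you fix the threshold at which ``present'' is forced, the invariants you need simply do not hold.
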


In particular, for $k = 2$, we have:
\begin{corollary}[Deterministic query complexity of $\OCTC$]\label{1C2CDT}
\[  D(\OCTC) \ge \frac{n^2}{512}.   \]
\end{corollary}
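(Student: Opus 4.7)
Corollary \ref{1C2CDT} follows immediately from Theorem \ref{1CkCDT} by substituting $k = 2$, which gives $n^2/(128 \cdot 4) = n^2/512$. Since the theorem is stated (with proof deferred) earlier in the excerpt and I may assume it, the formal proof is a one-line substitution. The interesting content is the underlying query-complexity lower bound, so I sketch how I would prove the $k=2$ case directly.

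The plan is an adversary argument against an arbitrary deterministic decision tree $T$ for $\OCTC$. The adversary maintains a partial labeling of queried edges $(Y, N)$ (present / absent) and preserves the invariant that there exist both a single Hamilton cycle $H_1 \supseteq Y$ with $H_1 \cap N = \emptyset$, and a disjoint pair of length-$n/2$ cycles $H_2 \supseteq Y$ with $H_2 \cap N = \emptyset$. As long as this invariant holds, $T$ cannot have distinguished a $1$-cycle from a $2$-cycle instance. On each edge query $e$, the adversary defaults to answering ``absent'' and only answers ``present'' when forced (for instance, if a $Y$-endpoint of $e$ already has degree $2$, or if answering ``absent'' would eliminate every Hamilton- or $2$-cycle completion). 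The target is to show the adversary survives $\Omega(n^2)$ queries before no consistent answer exists.

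The technical heart is a graph-theoretic lemma bounding how rapidly the space of consistent completions shrinks. A clean way to package this is via ``swap pairs'': starting from any Hamilton cycle $H_1$ on $[n]$, deleting two edges at cyclic distance $n/2$ apart and adding the two parallel replacement edges produces a $2$-cycle instance $H_2$ that differs from $H_1$ on only $4$ edges. There are $\Theta(n \cdot (n-1)!/2)$ such swap pairs, and each specific edge query can be incident to only a small fraction of them. I would define a potential function counting the number of live swap pairs $(H_1, H_2)$ consistent with $(Y, N)$ and argue that each answered query reduces it by a controlled multiplicative factor, so that $\Omega(n^2)$ queries are needed before the potential reaches zero.

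The main obstacle is the potential-function analysis: a ``present'' answer, when forced, can eliminate many swap pairs at once, and one must show the adversary always has a choice of answer that avoids a disproportionate drop. The authors describe the proof as a delicate graph-theoretic adversary argument, and I anticipate the difficulty to lie precisely in the case analysis of forced ``present'' answers at near-degree-$2$ vertices of the $Y$-graph, where the adversary's options are most constrained. Matching the explicit constant $1/512$ (and more generally $1/(128k^2)$) is likely achieved through careful bookkeeping rather than a fundamentally new idea.
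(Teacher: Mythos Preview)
Your one-line substitution is exactly how the paper derives the corollary: Theorem~\ref{1CkCDT} with $k=2$ gives $n^2/(128\cdot 4)=n^2/512$, and nothing more is needed.

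Since you also sketched a direct argument for the underlying bound, it is worth noting that your proposed route diverges from the paper's. The paper's adversary does \emph{not} maintain a potential over completions or swap pairs. Instead it keeps only two graphs, the $\yes$-edges $Y$ and the ``maybe'' graph $M$, and follows an explicit rule: answer $\no$ unless one endpoint of the queried edge has already accumulated at least $n/(4k)$ $\no$-answers (and a couple of degree/acyclicity checks on $Y$), in which case answer $\yes$. Phase~1 ends once $|V(Y)|$ reaches $n/(4k)$; a direct count then shows at least $n^2/(128k^2)$ $\no$-answers have been given. Feasibility of both a Hamiltonian cycle and $k$ cycles of length $n/k$ through $Y$ inside $M$ is established not by counting completions but by Dirac's theorem (the minimum degree of $M\setminus V(Y)$ stays above half) together with Hall's theorem to splice the short $Y$-paths in.

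Your potential-function idea is plausible in outline, but the obstacle you flag is real and more serious than you suggest: a single forced $\yes$ answer restricts the live Hamilton cycles to those containing that edge, a multiplicative drop of order $1/n$, not $1-c/n^2$. So the analysis cannot be a uniform-drop argument; you would need a separate mechanism bounding the \emph{number} of forced $\yes$ answers by $O(n)$ and showing each is preceded by $\Omega(n)$ $\no$ answers. At that point you are essentially reinventing the paper's threshold rule, and the Dirac/Hall argument is a cleaner way to certify feasibility than tracking swap pairs.
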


We are now ready to give a deterministic round lower bound for computing $\OCTC$ in $\ampc$.

\begin{theorem} \label{1C2Cdeterministic}
Any deterministic $\ampc$ algorithm that computes $\OCTC$ requires $\frac{1}{3}\log_S n - \frac{1}{3}\log_S 32 = \Omega(\log_S n)$ rounds. In particular, if $S = n^\epsilon$ for $\epsilon \in (0,1)$, then any such deterministic $\ampc$ algorithm requires $\Omega(1/\epsilon)$ rounds.
\end{theorem}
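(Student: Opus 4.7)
The plan is to derive the theorem as an essentially immediate consequence of the two main results already in place: Theorem~\ref{AMPCDT}, which gives $R \ge \tfrac{1}{6}\log_S \tfrac{D(g)}{2}$ for any $R$-round deterministic $\ampc$ algorithm computing a partial Boolean function $g$, and Corollary~\ref{1C2CDT}, which gives the query-complexity lower bound $D(\OCTC) \ge n^2/512$. The real content of the argument is packed into those two prior statements; here I simply need to chain them together and simplify.

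Concretely, I will first apply Theorem~\ref{AMPCDT} to the partial function $\OCTC : \Delta_{\OCTC} \to \{0,1\}$, obtaining that any $R$-round deterministic $\ampc$ algorithm computing $\OCTC$ satisfies
\[
R \;\ge\; \tfrac{1}{6}\log_S \tfrac{D(\OCTC)}{2}.
\]
Next, I plug in Corollary~\ref{1C2CDT} to replace $D(\OCTC)$ by the lower bound $n^2/512$, yielding
\[
R \;\ge\; \tfrac{1}{6}\log_S \tfrac{n^2}{1024}
\;=\; \tfrac{1}{3}\log_S n \;-\; \tfrac{1}{6}\log_S 1024
\;=\; \tfrac{1}{3}\log_S n \;-\; \tfrac{1}{3}\log_S 32,
\]
matching the bound in the theorem statement. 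Since $32$ is an absolute constant, this is $\Omega(\log_S n)$.

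For the second clause of the theorem, I will substitute $S = n^{\eps}$, which gives $\log_S n = 1/\eps$ and $\log_S 32 = O(1/\log n)$, so the lower bound becomes $\Omega(1/\eps)$ as claimed. There is no genuine obstacle at this step: the entire difficulty of the theorem was already absorbed into proving the adaptivity/promise-aware degree upper bound behind Theorem~\ref{AMPCDT} and the graph-theoretic adversary argument behind Theorem~\ref{1CkCDT}. The only thing to double-check is the arithmetic of the constants (and in particular that Midrij\=anis's cubic slack $D(g) \le 2\deg(g)^3$ combined with the $S^{2R}$ degree bound indeed produces the $S^{6R}$ factor on which the above simplification rests).
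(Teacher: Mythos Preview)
Your proposal is correct and is essentially identical to the paper's own proof: both simply chain Theorem~\ref{AMPCDT} with Corollary~\ref{1C2CDT} and simplify the resulting logarithm. The only minor imprecision is that when $S=n^\eps$ you should have $\log_S 32 = \Theta(1/(\eps\log n))$ rather than $O(1/\log n)$, but this does not affect the $\Omega(1/\eps)$ conclusion.
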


\begin{proof}
Let $\A$ be an $R$-round deterministic $\ampc$ algorithm that computes $\OCTC$. By Theorem \ref{AMPCDT} and Corollary \ref{1C2CDT}, we have
\[  R \ge \frac{1}{6} \log_S \frac{D(\OCTC)}{2} \ge \frac{1}{6} \log_S \frac{n^2}{1024} = \frac{1}{3}\log_S n - \frac{1}{3}\log_S 32.  \qedhere \] 
\end{proof}

\subsection{Proof of Theorem \ref{1CkCDT}}\label{1CkCDTproof}
In this subsection, we prove the lower bound on the deterministic query complexity of the $\OCKC$ problem given in Theorem \ref{1CkCDT}. 

We consider the following adversary strategy. The adversary maintains two graphs $Y$ and $M$. The graph $Y$ contains the \emph{$\yes$-edges}, i.e. edges for which the adversary has replied $\yes$. The graph $M$, which stands for ``maybe,'' contains edges for which the adversary has not yet replied $\no$. We call the edges that are not in $M$ \emph{$\no$-edges}. The graph $M$ has $n$ vertices at all times and is initially a clique. The graph $Y$ is a subgraph of $M$ at all times and is initially empty.

The idea is that the adversary gives away an edge (i.e. replies $\yes$) once it has an endpoint that is incident to sufficiently many (but not too many) edges for which the adversary has replied $\no$. In this way, we will have enough $\no$ queries after a small number of $\yes$ queries are made, and at that point, the degrees of all vertices in $M \setminus Y$ are still high enough to ensure that $M$ contains both a Hamiltonian cycle and $k$ disjoint cycles of length $\frac{n}{k}$. Procedure 1 below formally describes the way the adversary processes an edge query $(u,v)$ in $M$.

\begin{algorithm}[htb]
\caption{\textsc{AdversaryStrategy}($(u,v)$)} \label{adversary}
\begin{algorithmic}[1]
  \If {$|V(Y)| \le \frac{n}{4k}-1$}
    \If {one of $u, v$ is in $Y$ and has degree 2 in $Y$} \Comment{Step 1}
        \State Remove $(u,v)$ from $M$ and reply $\no$ 
    \ElsIf {both $u,v$ are in $Y$} \Comment{Step 2}
        \State Remove $(u,v)$ from $M$ and reply $\no$
    \ElsIf {the adversary has replied $\no$ to fewer than $\frac{n}{4k}$ edges that are incident to $u$ \newline \hspace*{48.5pt} \emph{and} fewer than $\frac{n}{4k}$ edges that are incident to $v$} \Comment{Step 3}
        \State Remove $(u,v)$ from $M$ and reply $\no$
    \Else \Comment{Step 4}
        \State Add $(u,v)$ to $Y$ and reply $\yes$  
    \EndIf
  \Else \label{changephase}
    \State Choose some 1-cycle or $k$-cycle configuration that is consistent with $M$ and $Y$
    \State Update $M$ and $Y$ and reply accordingly
  \EndIf
\end{algorithmic}
\end{algorithm}

We will divide the game between the adversary and the algorithm into two phases. We say that the game is in \emph{Phase 1} if the adversary has never hit line \ref{changephase} while running Procedure \ref{adversary} on the queries, and is in \emph{Phase 2} otherwise. In other words, when the adversary first hits line \ref{changephase}, Phase 1 ends and Phase 2 starts. Observe that we have $|V(Y)| \le \frac{n}{4k}+1$ at anytime in Phase 1, and $|V(Y)| \ge \frac{n}{4k}$ at the end of Phase 1. We will show later that when Phase 1 ends, the graph $M$ contains a Hamiltonian cycle that contains $Y$, as well as $k$ disjoint cycles of length $\frac{n}{k}$ whose union contains $Y$. In other words, the answer remains ambiguous to the algorithm throughout Phase 1. Then at the beginning of Phase 2, the adversary can simply raise the white flag and choose a particular 1-cycle or $k$-cycle configuration to stick to throughout the rest of the game. However, before that, we first show that the above strategy gives us the desired lower bound on $D(\OCKC)$.

\begin{claim}
By the end of Phase 1, the adversary has replied $\no$ to $\frac{n^2}{128k^2}$ queries.
\end{claim}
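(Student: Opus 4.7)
My plan is to lower bound the number of $\no$-edges by a double-counting argument organized around a distinguished endpoint of each $\yes$-edge. First, I would observe that Phase 1 ends as soon as the test $|V(Y)| \le \tfrac{n}{4k}-1$ fails, so $|V(Y)| \ge \tfrac{n}{4k}$ at the end of Phase 1. Since $V(Y)$ grows only when Step 4 is executed, and each such execution contributes at most two new vertices to $V(Y)$, the total number $y$ of $\yes$-edges added during Phase 1 satisfies $y \ge |V(Y)|/2 \ge \tfrac{n}{8k}$.

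Next, I would assign to each $\yes$-edge $(u_i,v_i)$ a witness endpoint $w_i \in \{u_i,v_i\}$: since Step 3's condition must have failed in order to reach Step 4, at least one endpoint had at least $\tfrac{n}{4k}$ incident $\no$-edges at the moment of the query, and I pick such an endpoint as $w_i$. Because $\no$-answers are never withdrawn, $w_i$ still has at least $\tfrac{n}{4k}$ incident $\no$-edges at the end of Phase 1.

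The key combinatorial claim is that each vertex is the witness of at most two $\yes$-edges. Indeed, Step 1 rejects any edge incident to a vertex already of degree 2 in $Y$, so every vertex ends Phase 1 with $Y$-degree at most 2; since a witness must be an endpoint of its $\yes$-edge, no vertex can appear as a witness more than twice. Hence the set $W$ of distinct witnesses has size $|W| \ge y/2 \ge \tfrac{n}{16k}$.

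Finally, summing $\no$-degrees over $W$ at the end of Phase 1 gives at least $|W| \cdot \tfrac{n}{4k} \ge \tfrac{n^2}{64k^2}$. Each $\no$-edge is counted at most twice in this sum (once per endpoint), yielding at least $\tfrac{n^2}{128k^2}$ distinct $\no$-edges, as required. I do not anticipate a serious technical obstacle; the delicate point is the witness-multiplicity bound, because the argument needs $\Omega(y)$ rather than merely a constant number of distinct witnesses, which is precisely what Step 1 of the adversary strategy was engineered to ensure.
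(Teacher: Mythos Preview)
Your proposal is correct and essentially identical to the paper's proof: both pick, for each $\yes$-edge, an endpoint with at least $\tfrac{n}{4k}$ incident $\no$-edges (from the failure of Step~3), use Step~1 to cap each vertex's multiplicity at~$2$ and hence obtain at least $|E(Y)|/2 \ge \tfrac{n}{16k}$ distinct high-$\no$-degree vertices, and finish by halving the summed $\no$-degree to handle double counting. Your ``witness'' language is a clean repackaging of exactly the same argument.
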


\begin{proof}
We count the $\no$-edges that has at least one endpoint in $V(Y)$ at the end of Phase 1. For each edge $e \in E(Y)$, by Step 3 of the adversary strategy, at least one of the two endpoints of $e$ is incident to at least $\frac{n}{4k}$ $\no$-edges. Moreover, by Step 1, each vertex in $V(Y)$ has degree at most 2. Thus, there are at least $\frac{|E(Y)|}{2}$ vertices in $V(Y)$ each of which is incident to at least $\frac{n}{4k}$ $\no$-edges. Therefore, the number of $\no$-edges that has at least one endpoint in $V(Y)$ is at least
\[  \frac{1}{2}  \cdot \frac{n}{4k}\cdot\frac{|E(Y)|}{2}= \frac{n}{16k}|E(Y)|,    \]
where the additional factor of $\frac{1}{2}$ accounts for that a $\no$-edge may be counted at most twice. At the end of Phase 1, $|E(Y)| \ge \frac{1}{2} |V(Y)| \ge \frac{n}{8k}$. Thus the number of $\no$-edges at this point is at least $\frac{n^2}{128k^2}$.
\end{proof}

Through the remaining claims of the subsection, we show that when Phase 1 ends, the graph $M$ contains a Hamiltonian cycle that contains $Y$, as well as $k$ disjoint cycles of length $\frac{n}{k}$ whose union contains $Y$. Denote $c = |\mathcal{C}(Y)|$ as the number of connected components in $Y$.

\begin{claim}\label{paths}
At any time in Phase 1, $Y$ consists of $c$ paths. In particular, $|V(Y)| = |E(Y)| + c$.
\end{claim}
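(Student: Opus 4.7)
The plan is to prove this by induction on the number of queries the adversary has processed in Phase 1, with the invariant stated directly: after processing each query, every connected component of $Y$ is a simple path, and hence $|V(Y)| = |E(Y)| + c$ (the standard Euler-type identity for a disjoint union of $c$ trees/paths). The base case is trivial: initially $Y$ is empty, so $c = 0$ and $|V(Y)| = |E(Y)| = 0$.

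For the inductive step, observe that $Y$ only changes when the adversary hits Step 4 of Procedure~\ref{adversary}, so the claim is preserved under Steps 1--3 by fiat. Thus fix a query $(u,v)$ that triggers Step 4, meaning the adversary has bypassed Steps 1, 2, and 3. Bypassing Step 2 tells us that it is \emph{not} the case that both $u$ and $v$ lie in $V(Y)$; bypassing Step 1 tells us that any endpoint among $\{u,v\}$ that currently lies in $V(Y)$ has degree at most $1$ in $Y$. I would then split into two cases:

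\begin{itemize}
\item[$\circ$] \textbf{Case 1: neither $u$ nor $v$ lies in $V(Y)$.} Adding the edge $(u,v)$ creates a brand-new component which is a single edge (a path of length $1$). All other components are unchanged, so $Y$ remains a disjoint union of paths. Here $|V(Y)|$ increases by $2$, $|E(Y)|$ by $1$, and $c$ by $1$, preserving $|V(Y)| = |E(Y)| + c$.

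\item[$\circ$] \textbf{Case 2: exactly one of $u,v$ lies in $V(Y)$, say $u \in V(Y)$ and $v \notin V(Y)$.} By the consequence of bypassing Step 1, $u$ has degree at most $1$ in $Y$, so by the inductive hypothesis $u$ is an endpoint of some path component $P$ of $Y$. Attaching the new vertex $v$ via the edge $(u,v)$ extends $P$ by one edge into a longer path; no other component is touched. Now $|V(Y)|$ increases by $1$, $|E(Y)|$ by $1$, and $c$ is unchanged, again preserving the identity.
\end{itemize}

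The crucial point -- and the only thing that could conceivably go wrong -- is that we never create a cycle or a vertex of degree $\geq 3$ in $Y$; both possibilities are ruled out, respectively, by Step 2 (which would have forced a $\no$ reply whenever both endpoints are already in $Y$) and Step 1 (which forces a $\no$ reply whenever an endpoint already has degree $2$ in $Y$). There is no real obstacle here: the entire content of the claim is that the definition of the adversary strategy was chosen precisely to maintain this path-forest invariant, and the case analysis above is merely the verification. The identity $|V(Y)| = |E(Y)| + c$ is then immediate from the fact that each of the $c$ path components on $n_i$ vertices contributes $n_i - 1$ edges.
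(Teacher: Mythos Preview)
Your proposal is correct and takes essentially the same approach as the paper's proof, which is a terse two-line argument noting that Step~1 prevents any vertex of $Y$ from reaching degree~$3$ and Step~2 keeps $Y$ acyclic; your inductive case analysis is simply a more explicit unpacking of that same observation.
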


\begin{proof}
In Phase 1, Step 1 of the adversary strategy ensures that no vertex in $Y$ has degree 3 or more, and Step 2 ensures that $Y$ is acyclic.
\end{proof}

Thus, for $i = 1, \dots, c$, we denote the two endpoints of the $i$-th path in $Y$ as $a_i$ and $b_i$. Moreover, we denote $M'$ as the subgraph of $M$ induced by the vertices in $V(M)\setminus V(Y)$. Since $|V(M)| = n$ at all times, $|V(M')| = n - |V(Y)|$.

\begin{claim}\label{Mdegree}
At any time in Phase 1, each $a_i$, $b_i$ has at least $\frac{(4k-1)n}{4k}-|V(Y)|-1$ neighbors in $M$ that are also vertices in $M'$, and each vertex in $M'$ has at least $\frac{(4k-1)n}{4k}-|V(Y)|-1$ neighbors in $M'$.
\end{claim}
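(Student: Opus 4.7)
The plan is to bound the number of $\no$-edges incident to each relevant vertex, distinguishing by which step of Procedure \ref{adversary} could have produced them. The key observation is that any $\no$-edge is created by exactly one of Steps 1, 2, or 3, and Steps 1 and 2 can only remove an edge whose \emph{other} endpoint lies in $V(Y)$ (Step 1 requires that endpoint to have $Y$-degree $2$; Step 2 requires both endpoints to be in $V(Y)$). Consequently, every $\no$-edge with at least one endpoint in $V(M')$ must have been produced by Step 3.

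First I would handle $u \in V(M')$. Since $u \notin V(Y)$ and any $v \in V(M') \setminus \{u\}$ also satisfies $v \notin V(Y)$, the edge $(u,v)$ can only become a $\no$-edge via Step 3. Moreover, whenever Step 3 fires on an edge incident to $u$, the guard in Step 3 guarantees that $u$'s total $\no$-count just before is strictly less than $n/(4k)$, so at most $n/(4k)$ Step-3 $\no$-edges are ever incident to $u$. Hence the number of non-neighbors of $u$ within $V(M') \setminus \{u\}$ is at most $n/(4k)$, giving
\[
\deg_{M'}(u) \;\geq\; |V(M')| - 1 - \frac{n}{4k} \;=\; \frac{(4k-1)n}{4k} - |V(Y)| - 1.
\]

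Next I would handle a path endpoint $u \in \{a_i,b_i\}$. By Claim \ref{paths}, $u$ has $Y$-degree at most $1$, so neither Step 1 (which would require $u$ or $v$ to have $Y$-degree $2$) nor Step 2 (which would require $v \in V(Y)$) can fire for any $v \in V(M')$. Therefore $\no$-edges from $u$ into $V(M')$ again arise only from Step 3, and the same guard argument as above shows there are at most $n/(4k)$ of them. Since $u \notin V(M')$, its neighbor count in $M'$ is at least
\[
|V(M')| - \frac{n}{4k} \;=\; \frac{(4k-1)n}{4k} - |V(Y)| \;\geq\; \frac{(4k-1)n}{4k} - |V(Y)| - 1.
\]

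There is no genuine obstacle here: the only subtle point is making sure the Step 3 count-bound is applied to the \emph{total} number of $\no$-edges at $u$ (not just Step 3 ones), which follows because the Step 3 guard is phrased in terms of all prior $\no$-replies at $u$. Once that is spelled out, both cases reduce to the two subtractions above, and the claim follows.
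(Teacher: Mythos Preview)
Your argument is correct and follows essentially the same idea as the paper's: both exploit that for a vertex $u$ with current $Y$-degree at most $1$, any $\no$-edge from $u$ into $V(M')$ must have come from Step~3 (since $Y$ only grows, the $Y$-membership and $Y$-degree conditions in Steps~1--2 also fail at the time of the query), and then use the Step~3 guard to cap the number of such edges at $n/(4k)$. The only cosmetic difference is packaging: the paper argues by contradiction (take the most recent of too many $\no$-edges to $V(M')$ and show Step~4 would have fired), whereas you count directly.
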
 

\begin{proof}
Let $v$ be one of the $a_i$'s, $b_i$'s, or a vertex in $M'$. Then, either $v \not \in V(Y)$, or $v$ has degree 1 in $Y$. Suppose $v$ has fewer than $\frac{(4k-1)n}{4k}-|V(Y)|-1$ neighbors in $M$ that are also vertices in $M'$. In other words, the number of vertices in $M'$ that are not adjacent to $v$ in $M$ is greater than
\[  |V(M')| - \left(\frac{(4k-1)n}{4k} - |V(Y)|-1 \right) = n - |V(Y)| - \left(\frac{(4k-1)n}{4k} - |V(Y)|-1 \right) = \frac{n}{4k}+1.    \]
Then we can find distinct vertices $w_1, \dots, w_{\ceil{\frac{n}{4k}}+1} \in V(M')$ that are not adjacent to $v$ in $M$, i.e. the adversary has replied $\no$ to the query $(v, w_j)$ for each $j$. Without loss of generality, let $(v, w_{\ceil{\frac{n}{4k}}+1})$ be the most recent query among the $(v, w_j)$'s. Then, when the query $(v, w_{\ceil{\frac{n}{4k}}+1})$ was examined, the conditions in Steps 1-3 of the adversary strategy all failed. The adversary would then reply $\yes$ to $(v, w_{\ceil{\frac{n}{4k}}+1})$ by Step 4, which is a contradiction.
\end{proof}

The following simple counting lemma will be useful.

\begin{claim}\label{edgecount}
Let $n,m$ be positive integers with $\frac{n}{2} \le m \le n$. Let $H$ be a cycle on $n$ vertices, and $A, B$ be two subsets of vertices each of size at least $m$. Then there are at least $2m-n$ edges $(u,v)$ in $H$ such that $u \in A, v \in B$ or $u \in B, v \in A$. 
\end{claim}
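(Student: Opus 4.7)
The plan is to prove the claim with a short double-counting argument and no case analysis. Define the ordered sum
\[ \tilde X \coloneqq \sum_{u \in A} |N_H(u) \cap B| = \bigl|\{(u,v) \in V(H)^2 : uv \in E(H),\ u \in A,\ v \in B\}\bigr|, \]
so that $\tilde X$ counts, over each edge $\{u,v\} \in E(H)$, the number of orderings of its endpoints for which the first lies in $A$ and the second in $B$. The first step is to observe that each edge contributes $0$, $1$, or $2$ to $\tilde X$: it contributes $2$ precisely when both endpoints lie in $A \cap B$, $1$ when exactly one ordering satisfies the condition, and $0$ otherwise. In the first two cases $\{u, v\}$ is a ``good'' edge of the type the claim asks us to count, and in the last case it is bad. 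Hence the number of good edges is at least $\tilde X / 2$, and it will suffice to establish $\tilde X \ge 2(2m - n)$.

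The second step is to lower-bound $\tilde X$ using the fact that every vertex of $H$ has degree exactly $2$. Writing $\overline B = V(H) \setminus B$, one has
\[ \tilde X = \sum_{u \in A} \bigl(2 - |N_H(u) \cap \overline B|\bigr) = 2|A| - \sum_{u \in A} |N_H(u) \cap \overline B|, \]
and swapping the order of summation in the trailing term,
\[ \sum_{u \in A} |N_H(u) \cap \overline B| = \sum_{v \in \overline B} |N_H(v) \cap A| \le \sum_{v \in \overline B} 2 = 2|\overline B|. \]
Combining these gives $\tilde X \ge 2|A| - 2|\overline B| = 2(|A| + |B| - n) \ge 2(2m - n)$, completing the argument.

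There is no real obstacle here: the proof rests entirely on the $2$-regularity of $H$ together with one exchange of summation. The only mild subtlety is that $A$ and $B$ are permitted to overlap, so that edges lying inside $A \cap B$ are counted twice by $\tilde X$; the factor of $\tfrac12$ in the first step absorbs this overcount exactly, and the hypothesis $m \ge n/2$ ensures that the claimed lower bound $2m - n$ is non-negative.
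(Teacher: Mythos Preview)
Your proof is correct and takes a genuinely different route from the paper's. The paper argues by upper-bounding the number of \emph{bad} edges: it partitions them into three categories (edges with an endpoint outside $A\cup B$, edges with both endpoints in $A\setminus B$, edges with both endpoints in $B\setminus A$), bounds each category separately, and subtracts from the total of $n$ edges. Your approach instead counts ordered $A$--$B$ incidences directly and exploits only the $2$-regularity of $H$, via one swap of summation. This is cleaner: it avoids the three-way case split entirely, and in fact never uses that $H$ is a single cycle rather than an arbitrary $2$-regular graph. The paper's argument, by contrast, does lean on the cycle structure when bounding the number of edges spanned by $A\setminus B$ (or $B\setminus A$) by $|A\setminus B|-1$. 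Both proofs ultimately reduce to the same inequality $|A|+|B|-n \ge 2m-n$, but yours gets there with less bookkeeping.
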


\begin{proof}
An edge $(u,v)$ that does not satisfy the above condition falls into one of the following three categories:
\begin{itemize}
    \item[$\circ$] One of $u,v$ is contained in $(A \cup B)^C$. There are at most $2(n-|A \cup B|)$ such edges.
    \item[$\circ$] Both $u,v$ are contained in $A \setminus (A\cap B)$. There are at most $\max\{|A| - |A \cap B| - 1, 0\}$ such edges.
    \item[$\circ$] Both $u,v$ are contained in $B \setminus (A \cap B)$. There are at most $\max\{|B| - |A \cap B| -1, 0\}$ such edges.
\end{itemize}
Summarizing the three cases above, we see that the number of edges satisfying the condition of the claim is at least
\begin{align*}
    &\ \  \ \  n - (2(n - |A \cup B|) + \max\{|A| - |A \cap B| - 1, 0\} + \max\{|B| - |A \cap B| - 1, 0\})\\
    &\ge  n - (2(n - |A \cup B|) + (|A| - |A \cap B|) + (|B| - |A \cap B|))\\
    &= (|A \cup B| - |A|+|A \cap B|) + (|A \cup B| - |B| + |A \cap B|) -n\\
    &=|B| + |A| -n\\
    &\ge 2m - n. \qedhere
\end{align*}  
\end{proof}

\begin{claim}\label{Hcycle}
At any time in Phase 1, $M$ contains a Hamiltonian cycle that contains $Y$.
\end{claim}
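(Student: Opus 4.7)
The plan is to build a Hamiltonian cycle of $M$ containing $Y$ by first finding a Hamiltonian cycle $C$ in $M'$, and then splicing the $c$ paths of $Y$ into $C$ in place of $c$ carefully chosen edges. Concretely, to splice path $P_i$ (with endpoints $a_i, b_i$) at an edge $(u,v)$ of $C$ one needs $u$ adjacent in $M$ to one endpoint of $P_i$ and $v$ adjacent in $M$ to the other, in which case we replace $(u,v)$ by the detour $u - a_i - \cdots - b_i - v$.

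First, I would apply Dirac's theorem to $M'$. Writing $n' = |V(M')| = n - |V(Y)|$ and $d = \tfrac{(4k-1)n}{4k} - |V(Y)| - 1$, Claim \ref{Mdegree} guarantees minimum degree at least $d$ in $M'$, and a direct calculation using $|V(Y)| \le \tfrac{n}{4k}+1$ (together with $k \ge 2$ and $n$ sufficiently large) gives $d \ge n'/2$. Hence $M'$ admits a Hamiltonian cycle $C$.

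Second, for each path $P_i$ let $A_i$ and $B_i$ denote the sets of neighbors of $a_i$ and $b_i$ in $M$ that lie in $V(M')$; by Claim \ref{Mdegree}, $|A_i|, |B_i| \ge d$. Applying the counting lemma (Claim \ref{edgecount}) to $C$ with these sets yields at least $2d - n' \ge \tfrac{(2k-1)n}{2k} - |V(Y)| - 2$ edges of $C$ whose endpoints are split between $A_i$ and $B_i$; call these the \emph{good edges for $P_i$}. For $k \ge 2$ and $|V(Y)| \le \tfrac{n}{4k}+1$, this quantity comfortably exceeds $c \le |V(Y)|$. I would then greedily pick distinct good edges $e_1, \ldots, e_c$ of $C$, one per path, which is possible at each step since each $P_i$ has strictly more than $c-1$ good edges. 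Writing $e_i = (u_i, v_i)$ with $u_i \in A_i$ and $v_i \in B_i$, I would replace each $e_i$ in $C$ by the detour obtained by traversing $P_i$ from $a_i$ to $b_i$; the result is a closed walk visiting each vertex of $V(M)$ exactly once and containing every edge of $Y$, i.e., a Hamiltonian cycle of $M$ containing $Y$.

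The only subtle point — and the main obstacle — is that the chosen good edges are required to be \emph{distinct} but may \emph{share endpoints} in $C$, so one must check that shared endpoints do not disrupt the splicing. They do not: if $e_i$ and $e_j$ share a vertex $v$ of $C$, then after both detours are inserted $v$ still appears exactly once in the resulting cycle, now as the junction between the detours replacing $e_i$ and $e_j$. Beyond this observation, the proof is a straightforward accounting using the Phase 1 invariant $|V(Y)| \le \tfrac{n}{4k}+1$.
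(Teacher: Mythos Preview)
Your proposal is correct and follows essentially the same route as the paper: Dirac's theorem on $M'$ to get a Hamiltonian cycle, Claim~\ref{edgecount} to lower-bound the number of good edges for each path $P_i$, and then selecting $c$ distinct good edges to splice in the paths of $Y$. The only cosmetic difference is that the paper invokes Hall's Marriage Theorem to produce the system of distinct representatives $e_1,\dots,e_c$, whereas you pick them greedily; since each $P_i$ already has at least $c$ good edges, both arguments are immediate. Your explicit treatment of the shared-endpoint case is a nice addition that the paper leaves implicit.
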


\begin{proof}
By Claim \ref{Mdegree}, in the graph $M'$, the degree of each vertex is at least
\[  \frac{(4k-1)n}{4k} - |V(Y)| -1 \ge \frac{1}{2}(n - |V(Y)|) = \frac{1}{2}|V(M')|,   \]
where we used that $|V(Y)| \le \frac{n}{4k}+1$ in Phase 1 for the inequality\footnote{Strictly speaking, for this estimate and similar ones in the rest of the subsection to hold, we require that $n \ge 28k$. Note that when $n<28k$, Theorem \ref{1C2CDT} holds simply because $\frac{n^2}{128k^2}< \frac{n}{k} = C(\OCKC)\le D(\OCKC)$. (See Example \ref{Certif1CkC}.)}. By Dirac's Theorem, we can find a Hamiltonian cycle $H$ of $M'$. We now construct a Hamiltonian cycle of $M$ by replacing $c$ distinct edges in $H$ by the $c$ paths in $Y$.

For each $i = 1, \dots, c$, by Claim \ref{Mdegree}, each of $a_i, b_i$ is adjacent in $M$ to at least $\frac{(4k-1)n}{4k}-|V(Y)|-1$ vertices in $M'$. Let $P_i$ denote the set of edges in $H$ that has one endpoint adjacent to $a_i$ and the other endpoint adjacent to $b_i$. By Claim \ref{edgecount} applied to the cycle $H$ and the two subsets of vertices that are adjacent to $a_i$, $b_i$ respectively, we see that
\[  |P_i| \ge 2 \bigg(\frac{(4k-1)n}{4k}-|V(Y)| -1 \bigg) - (n- |V(Y)|) = \frac{(2k-1)n}{2k}- |V(Y)| -2 \ge \frac{(4k-3)n}{4k} -3.\]

Note that in Phase 1, we have $c \le \frac{1}{2}|V(Y)| \le \frac{n}{8k}+\frac{1}{2}$. Thus, $|P_i|\ge c$. By Hall's Marriage Theorem, we can find distinct edges $e_1, \dots, e_c$ in $H$ such that $e_i \in P_i$. Thus, we can construct a Hamiltonian cycle of $M$ from $H$ by connecting one endpoint of $e_i$ to $a_i$ and the other to $b_i$, adding the path from $a_i$ to $b_i$ in $Y$, and removing the edge $e_i$ for each $i$.
\end{proof}

\begin{claim}\label{Twocycle}
At any time in Phase 1, $M$ contains $k$ disjoint cycles of length $\frac{n}{k}$ whose union contains $Y$.
\end{claim}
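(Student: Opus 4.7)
The plan is to adapt the proof of Claim \ref{Hcycle}, producing $k$ vertex-disjoint cycles each of length $n/k$ (instead of one Hamiltonian cycle) whose union still contains $Y$. First I would partition $V(M)$ into $k$ blocks $V_1,\dots,V_k$, each of size exactly $n/k$. Since in Phase 1 we have $|V(Y)| \le \frac{n}{4k}+1 \le n/k$, I would simply place all of $V(Y)$ inside the single block $V_1$, pad $V_1$ up to size $n/k$ with vertices from $V(M')$, and distribute the remaining $n - n/k$ vertices of $V(M')$ arbitrarily among $V_2,\dots,V_k$. With this layout it suffices to exhibit, within each $M[V_i]$, a Hamiltonian cycle that contains whatever $Y$-edges lie in $V_i$: for $i \ge 2$ this means a Hamiltonian cycle of $M'[V_i]$, and for $i = 1$ a Hamiltonian cycle of $M[V_1]$ containing all of $Y$.

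For $i \ge 2$, Claim \ref{Mdegree} gives degree at least $\frac{(4k-1)n}{4k} - |V(Y)| - 1$ in $M'$, and subtracting the $n - n/k - |V(Y)|$ vertices of $V(M') \setminus V_i$ yields degree at least $\frac{3n}{4k} - 1 \ge \frac{1}{2}|V_i|$ inside $M'[V_i]$. Dirac's theorem then furnishes the desired Hamiltonian cycle in each such block. For $V_1$, I would rerun the argument of Claim \ref{Hcycle} on $M[V_1]$ in miniature: set $V'_1 := V_1 \setminus V(Y)$, verify the analogous Dirac condition $\frac{3n}{4k} - |V(Y)| - 1 \ge \frac{1}{2}|V'_1|$ to obtain a Hamiltonian cycle $H$ of $M'[V'_1]$, and then for each of the $c$ paths of $Y$ with endpoints $a_i, b_i$, apply Claim \ref{edgecount} inside $H$ to the sets of neighbors of $a_i$ and of $b_i$ in $V'_1$. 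This produces at least $\frac{n}{4k} - O(1)$ candidate edges of $H$ per path, and a Hall's-theorem matching selects $c$ distinct such edges, one per path, to splice out in favor of the corresponding $Y$-paths.

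The main obstacle is making sure the degree surplus is preserved after restriction to a block of size $n/k$. The $\frac{(4k-1)n}{4k}$ bound in Claim \ref{Mdegree} is calibrated precisely so that $\frac{3n}{4k}$ survives the restriction, giving an $\frac{n}{4k}$ cushion above the Dirac threshold $\frac{n}{2k}$; this cushion must simultaneously absorb the $|V(Y)| \le \frac{n}{4k}+1$ loss in the degree estimate and leave enough replaceable edges of $H$ to accommodate all $c \le \frac{n}{8k} + \frac{1}{2}$ paths via Hall's theorem. A routine calculation shows $n \ge 28k$ suffices, matching the mild side condition already invoked in Claim \ref{Hcycle}; the complementary regime $n < 28k$ is irrelevant since Theorem \ref{1CkCDT} already follows there from the trivial certificate-complexity bound $D(\OCKC) \ge C(\OCKC) \ge n/k$.
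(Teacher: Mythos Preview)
Your proposal is correct and follows essentially the same argument as the paper. The only cosmetic difference is labeling: the paper partitions $V(M')$ and calls the special block (the one that will absorb $Y$) $V_k$ with $|V_k| = n/k - |V(Y)|$, whereas you partition $V(M)$ and call the special block $V_1$ with $|V_1| = n/k$ (so your $V'_1 = V_1 \setminus V(Y)$ is exactly the paper's $V_k$). The degree calculations, the Dirac applications, the use of Claim~\ref{edgecount}, the Hall matching, and the side condition $n \ge 28k$ all coincide.
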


\begin{proof}
Partition the vertices in $M'$ into $k$ subsets $V_1, \dots, V_k$ such that $|V_j| = \frac{n}{k}$ for each $j = 1, \dots, k-1$ and $|V_k| = |V(M')| - \frac{(k-1)n}{k} = \frac{n}{k} - |V(Y)|$. Let $M_1, \dots, M_k$ be the subgraphs of $M'$ induced by $V_1, \dots, V_k$ respectively. Now let $j = 1, \dots, k-1$. By Claim \ref{Mdegree}, in $M_j$, the degree of each vertex is at least
\[  \frac{(4k-1)n}{4k} - |V(Y)| -1 - \bigg(\frac{(k-1)n}{k} - |V(Y)|\bigg) = \frac{3n}{4k} -1 \ge \frac{1}{2}|V_j|.  \]
By Dirac's Theorem, we can find a Hamiltonian cycle $H_j$ of $M_j$. Note that the cycles $H_1, \dots, H_{k-1}$ are disjoint cycles in $M$ with length $\frac{n}{k}$.

In $M_k$, the degree of each vertex is at least
\[  \frac{(4k-1)n}{4k} - |V(Y)| -1 - \frac{(k-1)n}{k} = \frac{3n}{4k} - |V(Y)| -1 \ge \frac{1}{2}\bigg(\frac{n}{k} - |V(Y)|\bigg) = \frac{1}{2}|V_k|,   \]
where for the inequality we again used that $|V(Y)| \le \frac{n}{4k}+1$ in Phase 1. By Dirac's Theorem, we can find a Hamiltonian cycle $H_k$ of $M_k$. We now modify $H_k$ to construct a cycle on the vertices $V(Y) \cup V_k$ using a similar construction as in the proof of Claim \ref{Hcycle}. For each $i = 1, \dots, c$, by Claim \ref{Mdegree}, each of $a_i, b_i$ is adjacent in $M$ to at least
\[  \frac{(4k-1)n}{4k}-|V(Y)| -1 - \frac{(k-1)n}{k} = \frac{3n}{4k} - |V(Y)|-1     \] vertices in $M_k$. Let $P'_i$ denote the set of edges in $H_k$ that has one endpoint adjacent to $a_i$ and the other endpoint adjacent to $b_i$. By Claim \ref{edgecount} applied to the cycle $H_k$ and the two subsets of vertices that are adjacent to $a_i$, $b_i$ respectively, we see that
\[  |P'_i| \ge 2 \bigg(\frac{3n}{4k}-|V(Y)|-1 \bigg) - \bigg(\frac{n}{k}- |V(Y)|\bigg) = \frac{n}{2k}- |V(Y)|-2 \ge \frac{n}{4k} -3.\]
As noted before, we have $c \le \frac{n}{8k}+\frac{1}{2}$ in Phase 1. Thus $|P'_i| \ge c$. By Hall's Marriage Theorem, we can find distinct edges $e'_1, \dots, e'_c$ in $H_k$ such that $e'_i \in P'_i$. Thus, we can construct a cycle on $V(Y) \cup V_k$ from $H_k$ by connecting one endpoint of $e'_i$ to $a_i$ and the other to $b_i$, adding the path from $a_i$ to $b_i$ in $Y$, and removing the edge $e'_i$ for each $i$. This cycle is disjoint from $H_1, \dots, H_{k-1}$ in $M$ and has length $\frac{n}{k}$.
\end{proof}

\section{Randomized Round Lower Bound for $\OCTC$ via Approximate Certificate Complexity}\label{RandomBounds}
In this section, we focus on \emph{randomized} round complexities of partial Boolean functions in the $\ampc$ model. By combining results from the previous sections with Yao's Lemma \cite{Yao}, we relate the number of rounds required by a randomized $\ampc$ algorithm to compute a partial Boolean function to its \emph{approximate certificate complexity}. In addition, we develop new machinery for proving lower bounds on the approximate certificate complexity for a general class of partial functions. This machinery in particular leads to an asymptotically optimal lower bound on the approximate certificate complexity of $\OCTC$, which we use to prove an $\Omega(\log_S n)$ randomized round lower bound for computing $\OCTC$ in $\ampc$. If $S = n^\epsilon$ for $\epsilon \in (0,1)$, our round lower bound matches the upper bound of Behnezhad et al. \cite{Behnezhad}.

\subsection{Efficient Randomized $\ampc$ Algorithms Imply Small Approximate Certificate Complexity}
To formally define the approximate certificate complexity of a partial Boolean function, we first associate to it the following natural distribution on its domain:
\begin{definition}[Canonical distribution over domain of partial function]\label{CanonicalDist}
Let $g : \Delta \to \zo$ be a partial Boolean function with domain $\Delta \subseteq \zo^N$.  We associate with $g$ the distribution $\mathcal{D}_g$ over $\Delta$ defined as follows: with probability $\frac{1}{2}$ output a uniform random assignment from $g^{-1}(1)$, and with probability $\frac{1}{2}$ output a uniform random assignment from $g^{-1}(0)$. 
\end{definition}

\begin{example}[Canonical distribution over $\Delta_\OCTC$]\label{CanonicalOCTC}
For the $\OCTC$ problem on $n$ vertices, the total number of 1-cycle and 2-cycle instances are
\[  n_1 = \frac{(n-1)!}{2}, \qquad n_2 = \frac{1}{2}{n \choose n/2}\left(\frac{(n/2-1)!}{2}\right)^2 = \frac{(n-1)!}{2n} \]
respectively. In the canonical distribution $\D_{\OCTC}$, each 1-cycle instance has probability $\frac{1}{2n_1} = \frac{1}{(n-1)!}$ and each 2-cycle instance has probability $\frac{1}{2n_2}= \frac{n}{(n-1)!}$.
\end{example}

\begin{definition}[Approximate certificate complexity]\label{ApproxCertif}
Let $g: \Delta \to \zo$ be a partial Boolean function with domain $\Delta \subseteq \zo^N$. A \emph{certificate} of $g$ on an input $x \in \Delta$ is a set $C \subseteq \{1, \dots, N\}$ such that for any $y \in \Delta$ satisfying $y|_C = x|_C$, we have $g(y) = g(x)$. The \emph{certificate complexity} of $g$ is defined to be
\[  C(g) := \max_{x \in \Delta} \min \{|C| \mid C \textrm{ is a certificate on }x\}.        \]
Let $\delta \in [0,1)$. We define the \emph{$\delta$-approximate certificate complexity} of $g$ to be 
\[ C_\delta(g) \coloneqq \min\{ C(f)\, \mid\, f : \Delta \to \zo \text{ where $\Prx_{x \sim \mathcal{D}_g}[f(x)\ne g(x)]\le \delta$}. \} \] 
\end{definition}
In particular, we have $C_0(g)=C(g)$.

\begin{example}[Certificate complexity of $\OCKC$]\label{Certif1CkC}
It is not hard to see that $C(\OCKC) = \frac{n}{k}$. A minimum certificate for any 1-cycle instance is a path of length $\frac{n}{k}$ on the cycle, and a minimum certificate for any $k$-cycle instance is one of the $k$ cycles, which involves $\frac{n}{k}$ edges.
\end{example}

Note that for any partial Boolean function $g$, we have $C(g) \le D(g)$. Then Theorem \ref{AMPCDT} immediately implies the following relation between the deterministic round complexity of $g$ in $\ampc$ and $C(g)$:
\begin{corollary}\label{AMPCcertif}
For any $R$-round deterministic $\ampc$ algorithm that computes a partial Boolean function $g: \Delta \to \{0,1\}$, we have
\[  C(g) \le 2S^{6R}. \]
In particular, $R \ge \frac{1}{6} \log_S \frac{C(g)}{2}$.
\end{corollary}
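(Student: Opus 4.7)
The plan is simply to combine the inequality $C(g) \le D(g)$, noted immediately before the corollary, with the round-to-query-complexity bound already established in Theorem \ref{AMPCDT}. So essentially nothing new needs to be proved; the main task is to verify the inequality $C(g) \le D(g)$ for \emph{partial} Boolean functions, since the classical statement is usually given for total functions.

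First I would unpack the inequality $C(g) \le D(g)$ in the partial setting. Let $f:\{0,1\}^N\to\{0,1\}$ be a total extension of $g$ achieving the minimum $D(f) = D(g)$ in Definition \ref{DetermQuery}, and let $T$ be an optimal decision tree for $f$ of depth $D(f)$. For any input $x \in \Delta$, the root-to-leaf path taken by $T$ on $x$ reads a set $C_x$ of at most $D(f)$ coordinates and concludes with the value $f(x) = g(x)$. Any $y \in \{0,1\}^N$ with $y|_{C_x} = x|_{C_x}$ follows the same root-to-leaf path in $T$, and thus satisfies $f(y) = f(x)$; in particular this holds for $y\in \Delta$, where $g(y) = f(y) = f(x) = g(x)$. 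Thus $C_x$ is a certificate for $g$ on $x$, giving $C(g) \le D(f) = D(g)$.

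Then the corollary is immediate: by Theorem \ref{AMPCDT}, any $R$-round deterministic $\ampc$ algorithm computing $g$ satisfies $D(g) \le 2 S^{6R}$, so
\[
C(g) \;\le\; D(g) \;\le\; 2 S^{6R},
\]
and rearranging gives $R \ge \tfrac{1}{6}\log_S \tfrac{C(g)}{2}$.

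There is no real obstacle here; the only subtlety worth flagging is that the inequality $C(g) \le D(g)$ is stated earlier in the text without elaboration, and my proposed proof above handles the partial case by explicitly observing that a certificate valid for the total extension $f$ remains a certificate for the restriction $g$ to $\Delta$. Given how short this is, I would simply present the corollary as a one-line consequence of Theorem \ref{AMPCDT} and the remark $C(g) \le D(g)$.
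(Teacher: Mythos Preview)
Your proposal is correct and matches the paper's approach exactly: the paper states the corollary as an immediate consequence of Theorem~\ref{AMPCDT} together with the remark $C(g)\le D(g)$ noted just before it. Your additional verification that $C(g)\le D(g)$ holds for partial functions (via the root-to-leaf path of an optimal decision tree for a total extension) is a welcome elaboration that the paper leaves implicit.
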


\begin{remark}\rm{
This result combined with Example \ref{Certif1CkC} yields a deterministic round lower bound for computing $\OCKC$ in $\ampc$ similar to that in Theorem \ref{1C2Cdeterministic}. For $\OCTC$, the bound we obtain here has a slightly worse constant.
}
\end{remark}

We now use Corollary \ref{AMPCcertif} to prove the following generalization in the randomized setting, which relates the randomized round complexity of computing a partial Boolean function in $\ampc$ to its approximate certificate complexity. Randomized $\ampc$ algorithms are defined in Definition \ref{RandomAMPC}.

\begin{theorem} \label{AMPCappoxcertif}
Let $\delta \in [0,1)$. For any $R$-round randomized $\ampc$ algorithm that computes a partial Boolean function $g: \Delta \to \{0,1\}$ with error at most $\delta$, we have
\[  C_\delta(g) \le 2S^{6R}.    \]
In particular, $R \ge \frac{1}{6} \log_S \frac{C_{\delta}(g)}{2}$.
\end{theorem}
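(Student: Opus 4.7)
The plan is to combine Yao's minimax principle with the polynomial construction from the proof of Theorem~\ref{AMPCDeg}. Write the given $R$-round randomized algorithm $\A$ as a distribution $\{(w_i, \A_i)\}$ over deterministic $R$-round $\ampc$ algorithms. By the error guarantee, for every $x \in \Delta$ we have $\Prx_i[\A_i \text{ errs on } x] \le \delta$. Averaging over $x \sim \mathcal{D}_g$ and swapping the order of the two expectations yields some index $i^\star$ for which the deterministic algorithm $\A_{i^\star}$ errs on at most a $\delta$-fraction of $\mathcal{D}_g$.

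Next, I will extract from $\A_{i^\star}$ a partial Boolean function $g' \colon \Delta \to \{0,1\}$ that $\delta$-approximates $g$ under $\mathcal{D}_g$ and has small certificate complexity. The key observation is that the polynomial construction in the proof of Theorem~\ref{AMPCDeg} is purely syntactic in the algorithm: applied to $\A_{i^\star}$, it produces a polynomial $p$ of degree at most $S^{2R}$ with $p(x) = 1$ iff the \textsc{answer} multiset in $\D_R$ is exactly $\{1\}$, regardless of whether any target function is being computed correctly. Let $\tilde{g}\colon \{0,1\}^N \to \{0,1\}$ be the total Boolean function induced by $p$ and set $g' = \tilde{g}|_{\Delta}$. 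Whenever $\A_{i^\star}$ does not err on $x \in \Delta$, the \textsc{answer} multiset is exactly $\{g(x)\}$, hence $g'(x) = g(x)$; therefore $\Prx_{x \sim \mathcal{D}_g}[g'(x) \ne g(x)] \le \delta$.

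From $\deg(\tilde{g}) \le S^{2R}$, Lemma~\ref{polydegreeDT} gives $D(\tilde{g}) \le 2 S^{6R}$, and since certificate complexity is bounded above by deterministic query complexity, $C(\tilde{g}) \le 2 S^{6R}$. Because any certificate of $\tilde{g}$ on an input $x \in \Delta$ is automatically a certificate of the restriction $g'$ on $x$, we obtain $C(g') \le 2 S^{6R}$. Combined with the $\delta$-approximation bound above, Definition~\ref{ApproxCertif} gives $C_\delta(g) \le C(g') \le 2 S^{6R}$, as required.

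The main subtlety lies in the Yao step: the event ``$\A_{i^\star}$ errs on $x$'' must be read as ``the \textsc{answer} multiset in $\D_R$ is not exactly $\{g(x)\}$,'' so that every non-error outcome forces $g'(x) = g(x)$ on the nose. This matches the natural reading of Definition~\ref{RandomAMPC}, and it is also what lets us bypass the need for $\A_{i^\star}$ to literally ``compute'' $g'$ in the strict $\ampc$ sense (which would be awkward to arrange, since $\A_{i^\star}$ may leave a non-singleton multiset under \textsc{answer} on the bad inputs). Once that reading is fixed, the rest is a short chain through the already-established pieces.
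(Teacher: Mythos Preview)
Your proof is correct and follows essentially the same route as the paper: Yao's averaging to pass from the randomized algorithm to a deterministic $\A_{i^\star}$ that errs on at most a $\delta$-fraction of $\mathcal{D}_g$, then the polynomial construction of Theorem~\ref{AMPCDeg} plus Lemma~\ref{polydegreeDT} to bound the certificate complexity of the resulting function, and finally the definition of $C_\delta(g)$. The paper packages the last two steps by invoking Corollary~\ref{AMPCcertif} on the function $f$ that $\A_{i^\star}$ ``computes'' and then applies its stated form of Yao's Lemma, whereas you unroll these steps and are more explicit about the subtlety that $\A_{i^\star}$ need not leave a singleton under \textsc{answer} on every input in $\Delta$; your observation that the polynomial construction is purely syntactic and hence sidesteps this issue is a nice clarification of a point the paper's proof leaves implicit.
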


Note that Corollary \ref{AMPCcertif} can be viewed as a special case of Theorem \ref{AMPCappoxcertif}, since setting $\delta = 0$ in the later recovers the former. Our main tool for proving Theorem \ref{AMPCappoxcertif} is Yao's Lemma specialized to the $\ampc$ model:

\begin{lemma}[Yao \cite{Yao}] \label{YaoLemma}
Let $g: \Delta \to \zo$ be a partial Boolean function. Suppose there exists a distribution $\D$ on $\Delta$ such that any deterministic $\ampc$ algorithm $\A$ computing a partial Boolean function on $\Delta$ with
\[  \Prx_{x \sim \D}[\mbox{$\A$ does not return $g(x)$ on $x$}] \le \delta  \]
requires at least $K$ rounds. Then any randomized $\ampc$ algorithm for $g$ with error at most $\delta$ also requires at least $K$ rounds.
\end{lemma}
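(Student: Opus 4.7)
The plan is to prove this by the standard averaging argument that underlies Yao's minimax principle. I would argue by contrapositive: suppose $\mathcal{R}$ is a randomized $\ampc$ algorithm that computes $g$ with error at most $\delta$ using $R$ rounds. By Definition \ref{RandomAMPC}, $\mathcal{R}$ is a probability distribution supported on some collection of deterministic $\ampc$ algorithms $\{\A_i\}_{i \in I}$ with weights $\{w_i\}_{i \in I}$, where each $\A_i$ in the support uses at most $R$ rounds (since $R$ is the worst-case number of rounds required by any deterministic algorithm in the support). The correctness hypothesis on $\mathcal{R}$ is precisely the pointwise statement that for every $x \in \Delta$,
\[ \sum_{i \in I} w_i \cdot \mathbf{1}[\A_i \text{ does not return } g(x) \text{ on } x] \le \delta. \]

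Next, I would take the expectation of this inequality over $x \sim \D$ and swap the order of summation (both sums are non-negative, so Fubini applies trivially). This yields
\[ \sum_{i \in I} w_i \cdot \Prx_{x \sim \D}[\A_i \text{ does not return } g(x) \text{ on } x] \le \delta. \]
Since the left-hand side is a convex combination of the quantities $\Prx_{x \sim \D}[\A_i \text{ does not return } g(x) \text{ on } x]$, there must exist some index $i^\star \in I$ for which $\Prx_{x \sim \D}[\A_{i^\star} \text{ does not return } g(x) \text{ on } x] \le \delta$.

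Finally, I would invoke the hypothesis of the lemma: $\A_{i^\star}$ is a deterministic $\ampc$ algorithm that, on inputs drawn from $\D$, fails to return $g(x)$ with probability at most $\delta$, so by assumption it must use at least $K$ rounds. But $\A_{i^\star}$ uses at most $R$ rounds by construction, giving $R \ge K$, which completes the proof. I do not anticipate any real obstacle here, since the argument is purely an averaging/pigeonhole step combined with the definition of randomized round complexity as the worst case over the support; the only mildly delicate point is ensuring that the definition of an $R$-round randomized algorithm indeed bounds the round complexity of \emph{every} deterministic algorithm in the support, but this is exactly what Definition \ref{RandomAMPC} guarantees.
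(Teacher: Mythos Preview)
Your proposal is correct and is exactly the standard averaging argument for Yao's minimax principle. Note that the paper does not supply its own proof of this lemma---it is stated as a cited result from \cite{Yao} and used as a black box in the proof of Theorem~\ref{AMPCappoxcertif}---so there is nothing further to compare against.
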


\begin{proof}[Proof of Theorem \ref{AMPCappoxcertif}]
We will use Yao's Lemma (\ref{YaoLemma}) together with the canonical distribution $\D_g$ of $g$. Let $\A$ be an $R'$-round deterministic $\ampc$ algorithm that computes a partial Boolean function $f:\Delta \to \zo$ with the property that
\[  \Pr_{x \sim \D_g}[f(x) \neq g(x)] \le \delta.  \]
By Corollary \ref{AMPCcertif} applied to $f$ and Definition \ref{ApproxCertif}, we have
\[  2S^{6R'} \ge C(f) \ge C_\delta(g).    \]
That is, $R' \ge \frac{1}{6}\log_S \frac{C_\delta(g)}{2}$. We can then conclude by Yao's Lemma (\ref{YaoLemma}).
\end{proof}

\subsection{A Method for Lower-bounding Approximate Certificate Complexity}
Theorem \ref{AMPCappoxcertif} reduces randomized round lower bounds for computing a partial Boolean function in $\ampc$ to lower bounds on its approximate certificate complexity. However, even for the $\OCTC$ problem, it is still challenging to lower bound its $\delta$-approximate certificate complexity for a positive $\delta$. We tackle this challenge by developing a new method for proving lower bounds on the approximate certificate complexities of certain partial Boolean functions, which is based on sensitive blocks. In particular, our method applies to $\OCTC$.

\begin{definition}[Sensitive block]
Let $g: \Delta \to \zo$ be a partial Boolean function with domain $\Delta \subseteq \zo^N$, and let $x \in \Delta$. A subset $B \subseteq [1, \dots, N]$ is a \emph{sensitive block} of $g$ on $x$ if the input $x^B$ obtained from $x$ by flipping every bit in $B$ is also contained in $\Delta$ and satisfies $g(x^B) \neq g(x)$.
\end{definition}

\begin{theorem}[Framework for proving lower bounds on approximate certificate complexity] \label{ApproxCertifLB}
Let $g: \Delta \to \zo$ be a partial Boolean function and $\delta \in (0,1/2)$. Denote $V_1 = g^{-1}(1)$ and $V_0 = g^{-1}(0)$ as the set of 1- and 0-instances of $g$ respectively. Suppose that for some number $K$, there is a bipartite graph $H= (V_1 \cup V_0, E)$ that satisfies:
\begin{itemize}
    \item[$\circ$] For any edge $(x,y)$ between $x \in V_1$ and $y \in V_0$ in $E$, there is a sensitive block $B$ of $g$ on $x$ such that $y = x^{B}$.
    \item[$\circ$] The degree of any $x \in V_1$, denoted by $\deg(x)$, is at least $2K$, and the sensitive blocks $B_{x,1}, \dots,$ $B_{x,\deg(x)}$ on $x$ corresponding to its incident edges are all disjoint.
    \item[$\circ$] The degree of any $y \in V_0$ is at most $d = (\frac{1}{2\delta} -1)K \frac{|V_1|}{|V_0|}$.
\end{itemize}
Then
\[  C_\delta(g) \ge K.    \]
\end{theorem}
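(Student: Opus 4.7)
The plan is to argue by contradiction. Suppose $f : \Delta \to \{0,1\}$ satisfies $\Prx_{x\sim \mathcal{D}_g}[f(x) \neq g(x)] \le \delta$ but $C(f) \le K-1$. I will use the structure of $H$ to show that $f$ must nevertheless disagree with $g$ on too much $\mathcal{D}_g$-mass, yielding the contradiction.

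First I set up the bookkeeping. Let $V_1' = \{x \in V_1 : f(x) = 1\}$ and $V_0' = \{y \in V_0 : f(y) = 0\}$ be the correctly labeled instances, and set $\alpha = |V_1'|/|V_1|$ and $\beta = |V_0'|/|V_0|$. By the definition of $\mathcal{D}_g$ (Definition \ref{CanonicalDist}), the error of $f$ is exactly $\tfrac{1}{2}(1-\alpha) + \tfrac{1}{2}(1-\beta) = 1 - (\alpha+\beta)/2$, so the hypothesis on $f$ becomes $\alpha + \beta \ge 2 - 2\delta$.

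Next I exploit the certificate bound via the sensitive blocks. For each $x \in V_1'$, fix a certificate $C_x$ of $f$ on $x$ with $|C_x| \le K-1$. Since the $\deg(x) \ge 2K$ sensitive blocks $B_{x,1},\dots,B_{x,\deg(x)}$ of $g$ on $x$ are pairwise disjoint, each element of $C_x$ can lie in at most one $B_{x,j}$, so at least $2K - (K-1) \ge K$ of them are disjoint from $C_x$. For any such block $B$, the input $y = x^B$ lies in $V_0$ and agrees with $x$ on $C_x$, so $f(y) = f(x) = 1$, whence $y \in V_0 \setminus V_0'$. Thus every $x \in V_1'$ sends at least $K$ of its $H$-edges into $V_0 \setminus V_0'$.

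Now I double-count edges between $V_1'$ and $V_0 \setminus V_0'$ in $H$. The previous paragraph gives a lower bound of $K|V_1'|$. The degree hypothesis on $V_0$ gives an upper bound of $d(|V_0| - |V_0'|)$. Combining and substituting $d = (\tfrac{1}{2\delta} - 1)K|V_1|/|V_0|$ yields
\[
\alpha \;\le\; \left(\tfrac{1-2\delta}{2\delta}\right)(1 - \beta).
\]
Finally I combine this with $\alpha + \beta \ge 2 - 2\delta$ to reach a contradiction. The verification is a short algebraic/case analysis: since $\alpha \le 1$, one checks that the maximum of $\alpha + \beta$ subject to both $\alpha \le 1$ and $\alpha \le \tfrac{1-2\delta}{2\delta}(1-\beta)$ is strictly less than $2 - 2\delta$ for every $\delta \in (0, 1/2)$ (the interesting case is $\delta < 1/4$, where the maximum is attained at $\beta = (1-4\delta)/(1-2\delta)$ and equals $(2-6\delta)/(1-2\delta) < 2 - 2\delta$, as $(2-6\delta) < (2-2\delta)(1-2\delta)$ reduces to $0 < 4\delta^2$). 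This contradicts the lower bound $\alpha + \beta \ge 2 - 2\delta$, so $C(f) \ge K$ and hence $C_\delta(g) \ge K$. The only real delicacy is the last algebraic step: one must keep track of the cap $\alpha \le 1$ to ensure the inequality is strict across the whole range $\delta \in (0, 1/2)$, rather than just for large $\delta$ where $\tfrac{1-2\delta}{2\delta} \le 1$ makes things immediate.
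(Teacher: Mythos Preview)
Your argument is correct and follows the same core idea as the paper: for each correctly labeled $1$-instance, use the disjointness of the $\ge 2K$ sensitive blocks to find many neighbors in $V_0$ that are mislabeled, then double-count edges of $H$ against the degree bound $d$ on $V_0$. The only difference is presentational. The paper does a clean two-case split on whether $|V_1'| < (1-2\delta)|V_1|$: if so, the error on $1$-instances alone already exceeds $\delta$; if not, plugging $|V_1'| \ge (1-2\delta)|V_1|$ directly into the double-count gives more than $2\delta|V_0|$ mislabeled $0$-instances, so the error on $0$-instances alone exceeds $\delta$. This avoids your optimization over $(\alpha,\beta)$ entirely. Your route is slightly more algebraic but buys nothing extra; the paper's case split is shorter and makes the strictness transparent without needing to track the cap $\alpha \le 1$.
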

The bipartite graph $H$ in Theorem \ref{ApproxCertifLB} is illustrated in~\Cref{fig:ApproxCertifConstruction}.

\begin{figure}[htb]
\begin{center}
\includegraphics[scale=.4]{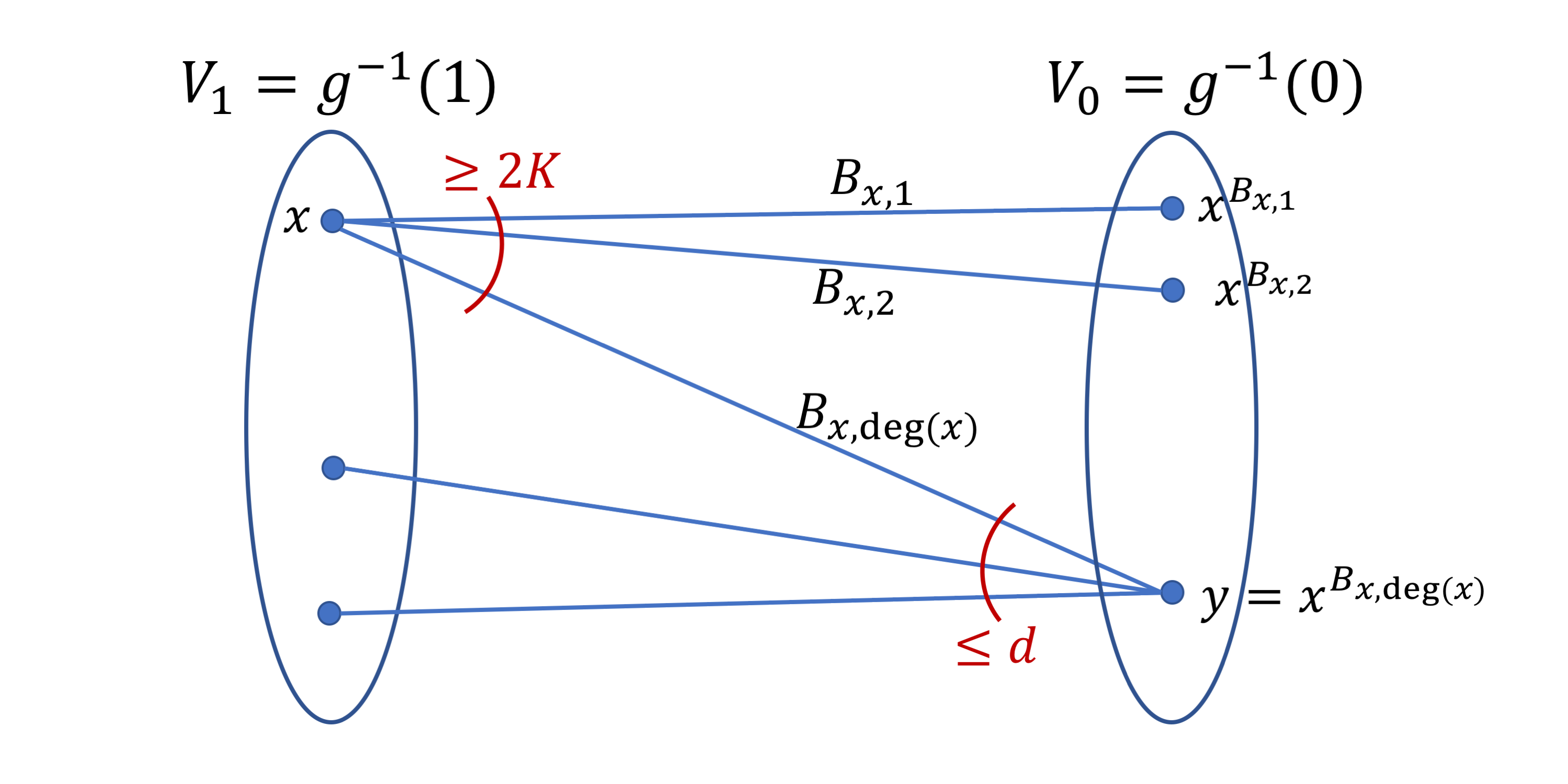}
\end{center}
\vspace{-20pt}
\caption{The bipartite graph $H$ in Theorem \ref{ApproxCertifLB}. The neighbors of each 1-instance $x \in V_1$ correspond to the 0-instances obtained by flipping the chosen sensitive blocks $B_{x_1}, \dots, B_{x, \deg(x)}$ on $x$, and there are at least $2K$ of them. Each 0-instance $y \in V_0$ has at most $d$ neighbors.}
\label{fig:ApproxCertifConstruction}
\end{figure}

\begin{proof}
We need to prove that for any partial Boolean function $f:\Delta \to \zo$ with $C(f) < K$, we have
\[  \Prx_{x \sim \D_g}[f(x) \neq g(x)] > \delta,    \]
where $\D_g$ is the canonical distribution over the domain $\Delta$ of $g$ defined in Definition \ref{CanonicalDist}.

Take a partial Boolean function $f$ as above, and let $U \subseteq V_1$ be the set of 1-instances of $g$ on which $f$ agree with $g$. We consider two cases. If $|U| < (1-2\delta)|V_1|$, then $f$ disagrees with $g$ on more than $2\delta|V_1|$ many 1-instances. Since each 1-instance has probability $\frac{1}{2|V_1|}$ in $\D_g$, we have
\[  \Pr_{x \sim \D_g}[f(x) \neq g(x)] > 2\delta|V_1| \cdot \frac{1}{2|V_1|} = \delta.  \]

Suppose otherwise that $|U| \ge (1-2\delta)|V_1|$. We show in this case that there are a significant number of 0-instances on which $f$ and $g$ disagree. Fix $x \in U$. Since $C(f) < K$, we can find a certificate $C_x$ of $f$ on $x$ that involves fewer than $K$ input bits. Note that if a sensitive block $B$ of $g$ on $x$ is disjoint from $C_x$, then
\[  f(x^B) = f(x) = g(x) \neq g(x^B). \]
In other words, $f$ disagrees with $g$ on $x^B$.

Since the sensitive blocks $B_{x,1}, \dots, B_{x,\deg(x)}$ of $g$ on $x$ given by the theorem are all disjoint, $C_x$ intersects fewer than $K$ of them. Moreover, since $\deg(x) \ge 2K$, $C_x$ is disjoint from more than $2K-K = K$ of these sensitive blocks. This implies that $x$ is adjacent in $H$ to more than $K$ 0-instances on which $f$ disagrees with $g$.

Now since each 0-instance is adjacent to at most $d$ 1-instances in $H$, the number of 0-instances on which $f$ and $g$ disagree is greater than
\[  \frac{|U| \cdot K}{d} \ge \frac{(1-2\delta)|V_1| \cdot K}{(\frac{1}{2\delta}-1)K\frac{|V_1|}{|V_0|}} = 2\delta|V_0|.  \]
Finally, since each 0-instance has probability $\frac{1}{2|V_0|}$ in $\D_g$, we have
\[  \Pr_{x \sim \D_g}[f(x) \neq g(x)] > 2\delta|V_0| \cdot \frac{1}{2|V_0|} = \delta, \]
and the proof is complete.
\end{proof}

\begin{example}[Approximate certificate complexity of a promised \textsc{Majority} problem]
We illustrate the conditions of Theorem \ref{ApproxCertifLB} by the following simple promise version of the {\sc Majority} problem: Let $N = 2N'+1$ be an odd integer, and $\Delta \subset \zo^N$ be the subset of $N$-bit strings whose Hamming weight is $N'$ or $N'+1$. Let $f: \Delta \to \zo$ be the restriction of the {\sc Majority} function to $\Delta$:
\[  f(x) = \begin{cases}
        1 & \mbox{if the Hamming weight of $x$ is $N'+1$}\\
        0 & \mbox{if the Hamming weight of $x$ is $N'$}.
    \end{cases} \]
Note that $V_1 = f^{-1}(1)$ and $V_0 = f^{-1}(0)$ have the same size. We take $\delta = \frac{1}{6}$ and $K = \frac{N'+1}{2}$. Then
\[  d = \Big(\frac{1}{2\delta}-1\Big)K\frac{|V_1|}{|V_0|} = N'+1.   \]
For $x \in V_1$, each 1-bit itself forms a sensitive block of $f$ on $x$, and thus the $N'+1$ many 1-bits give us $N'+1 = 2K$ disjoint sensitive blocks. Moreover, each input in $V_0$ differs by a bit from exactly $N'+1 = d$ inputs in $V_1$. Thus Theorem \ref{ApproxCertifLB} applies and shows that
\[  C_{\frac{1}{6}}(f) \ge \frac{N'+1}{2} = \frac{1}{2}C(f).    \]
\end{example}

\subsection{Randomized Round Complexity of $\OCTC$}
As a more involved example, we now apply Theorem \ref{ApproxCertifLB} to prove the following lower bound on the approximate certificate complexity of $\OCTC$:
\begin{corollary}[Approximate certificate complexity of $\OCTC$]\label{1C2CApproxCertif}
\[  C_{\frac{1}{6}}(\OCTC) \ge \frac{n}{4}. \]
\end{corollary}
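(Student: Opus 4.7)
The plan is to apply Theorem~\ref{ApproxCertifLB} with $K=n/4$ and $\delta=\tfrac{1}{6}$. From Example~\ref{CanonicalOCTC} we have $|V_1|/|V_0| = n_1/n_2 = n$, so the required upper bound on the $V_0$-side becomes
\[
d \;=\; \left(\tfrac{1}{2\delta}-1\right) K \,\tfrac{|V_1|}{|V_0|} \;=\; 2\cdot \tfrac{n}{4}\cdot n \;=\; \tfrac{n^2}{2}.
\]
Thus it suffices to build a bipartite graph $H$ on $V_1 \cup V_0$ such that every $x \in V_1$ has $n/2$ pairwise-disjoint sensitive blocks as its incident edges, and every $y \in V_0$ has at most $n^2/2$ neighbors.

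For a 1-cycle $x \in V_1$, fix a cyclic labeling $v_1,\ldots,v_n$ of its vertices along the cycle and, for each $i \in \{1,\ldots,n/2\}$, define the ``antipodal split'' block
\[
B_{x,i} \;=\; \bigl\{(v_i,v_{i+1}),\;(v_{i+n/2},v_{i+n/2+1}),\;(v_{i+1},v_{i+n/2}),\;(v_i,v_{i+n/2+1})\bigr\}
\]
(indices mod $n$). Flipping $B_{x,i}$ deletes two antipodal cycle-edges and inserts two long chords, partitioning the $n$ vertices into the two $n/2$-cycles on $\{v_{i+1},\ldots,v_{i+n/2}\}$ and on $\{v_i,v_{i+n/2+1},\ldots,v_{i-1}\}$; hence $B_{x,i}$ is a sensitive block of $\OCTC$ on $x$. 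Draw an edge between $x$ and $x^{B_{x,i}}$ in $H$. This yields $\deg(x) = n/2 = 2K$. A direct check shows that for distinct $i,i' \in \{1,\ldots,n/2\}$ the four cycle-edges across $B_{x,i}, B_{x,i'}$ are all different, and the four chords (each of cycle-distance $n/2-1$) likewise cannot coincide within this index range, so the $n/2$ chosen blocks on $x$ are pairwise disjoint.

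The main step is to bound $\deg(y)$ for a 2-cycle instance $y = C_1 \sqcup C_2$. From the block structure one verifies that the two chord edges of any $B_{x,i}$ necessarily end up in \emph{different} cycles of $y = x^{B_{x,i}}$: tracing the vertex $v_{i+1}$ in $y$ gives the cycle $v_{i+1}v_{i+2}\cdots v_{i+n/2}v_{i+1}$, which contains exactly the chord $(v_{i+1},v_{i+n/2})$; the other cycle contains the chord $(v_i,v_{i+n/2+1})$. Conversely, any choice of an edge $e_1 \in C_1$ and an edge $e_2 \in C_2$ (to serve as the two chords), together with one of the two matchings of the four endpoints (to serve as the two reconnecting cycle-edges), produces a Hamiltonian cycle $x$ in which the two reconnecting edges are antipodal, because removing them from $x$ splits it into the paths $C_1 \setminus e_1$ and $C_2 \setminus e_2$ of equal length $n/2-1$; so $x \triangle y$ is exactly one of our prescribed blocks. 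Distinct triples $(e_1,e_2,\text{matching})$ give distinct edge sets of $x$, so
\[
\deg(y) \;=\; \tfrac{n}{2}\cdot\tfrac{n}{2}\cdot 2 \;=\; \tfrac{n^2}{2} \;=\; d,
\]
and Theorem~\ref{ApproxCertifLB} yields $C_{1/6}(\OCTC) \ge n/4$. The only delicate point is the two-sided counting for $\deg(y)$—verifying both that the two chords of any valid block land in distinct cycles of $y$, and that every admissible triple conversely produces a valid antipodal-split block—but both directions follow from the observation that the antipodal condition is equivalent to the two deleted cycle-edges splitting $x$ into equal-length paths, which are exactly $C_1 \setminus e_1$ and $C_2 \setminus e_2$.
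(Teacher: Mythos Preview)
Your proof is correct and follows essentially the same approach as the paper: the ``antipodal split'' blocks you define are exactly the paper's ``delete two opposite edges and complete the remaining paths into cycles'' construction, and your computation of $d=n^2/2$ via the triple $(e_1,e_2,\text{matching})$ matches the paper's reverse-process count. You supply more explicit justification for the disjointness of the blocks and for the fact that the two inserted chords land in different cycles of $y$ (which the paper leaves implicit), but the construction and the counting are identical.
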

Note that this is asymptotically optimal, since
\[  C_{\frac{1}{6}}(\OCTC) \le C(\OCTC) = \frac{n}{2}.     \]

\begin{proof}
For $\delta = \frac{1}{6}$ and $K = \frac{n}{4}$, we will find sensitive blocks of $\OCTC$ on the 1-cycle instances that induce a bipartite graph satisfying the conditions in Theorem \ref{ApproxCertifLB}. Recall from Example \ref{CanonicalOCTC} that the number of 1-cycle and 2-cycle instances are $n_1 = \frac{(n-1)!}{2}$ and $n_2=\frac{(n-1)!}{2n}$ respectively. Then
\[  d = \Big(\frac{1}{2\delta}-1\Big)K\frac{n_1}{n_2} = \frac{n^2}{2}.   \]

Let $x \in V_1$ be a 1-cycle instance. If we delete two opposite edges on the cycle $x$ and add the two edges that complete the two remaining paths into cycles, we obtain a 2-cycle instance. (See~\Cref{fig:OCTCApproxCertifConstruction}.) Thus the four edges above together form a sensitive block of $\OCTC$ on $x$. Note that there are $\frac{n}{2} = 2K$ pairs of opposite edges on $x$, and the induced sensitive blocks are all disjoint.

\begin{figure}[htb]
\begin{center}
\includegraphics[scale=.42]{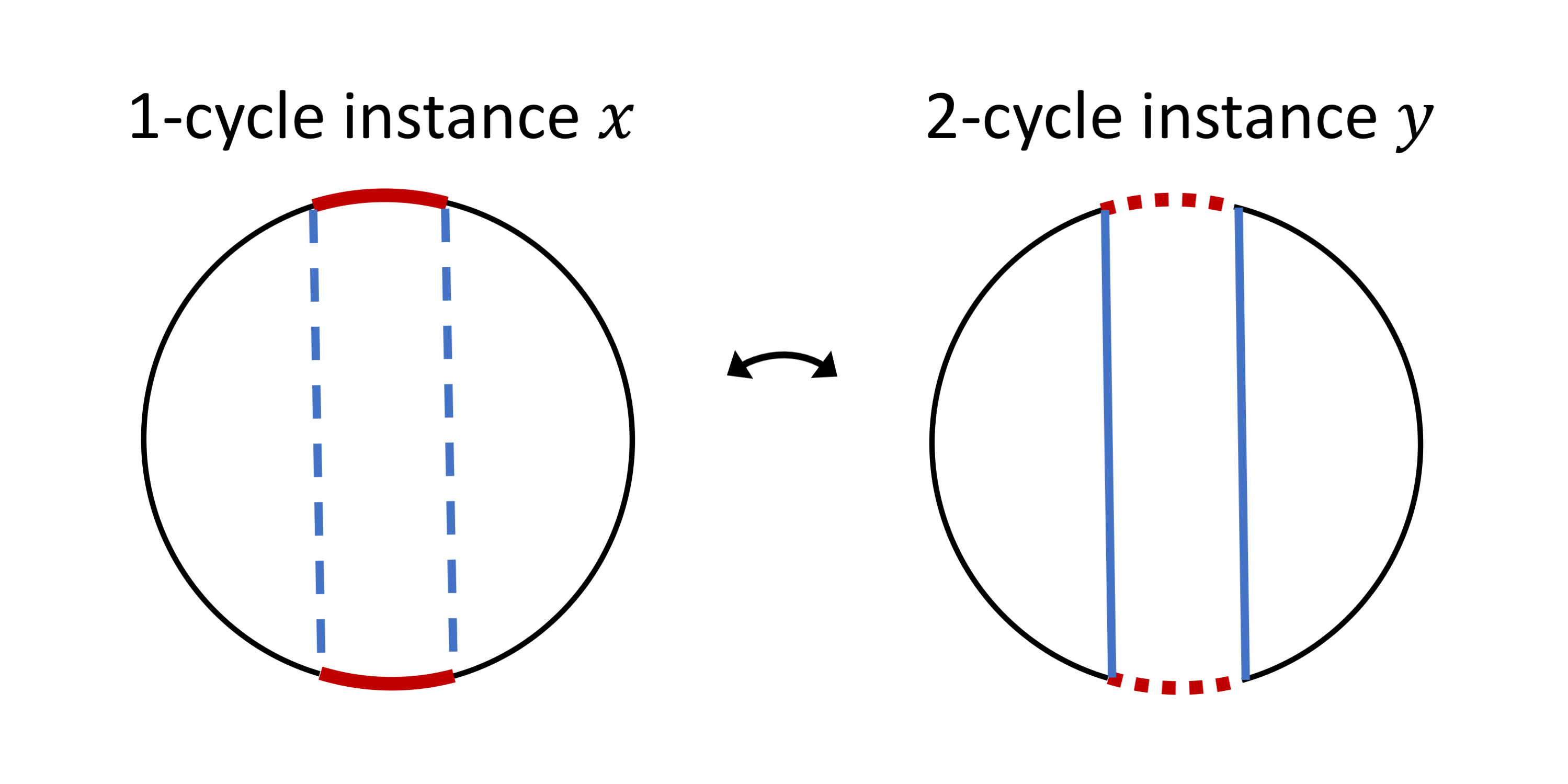}
\end{center}
\vspace{-20pt}
\caption{Converting a 1-cycle instance $x$ into a 2-cycle instance $y$, and vice versa.}
\label{fig:OCTCApproxCertifConstruction}
\end{figure}
\medskip

Now we verify that for any 2-cycle instance $y \in V_0$, $y$ is obtained from at most $d$ 1-cycle instances via the process above. To reverse this process, we need to first remove one edge from each of the two cycles in $y$ to obtain two disjoint paths $P_1$ and $P_2$, and then connect one endpoint of $P_1$ to one endpoint of $P_2$, and the other endpoint of $P_1$ to the other endpoint of $P_2$. We have a choice of $n/2$ edges on each cycle of $y$ to delete, and the there are two ways of connecting the two remaining paths. Thus the number of 1-cycle instances that can be modified into $y$ by flipping a sensitive block as above is exactly
\[   2 \cdot \frac{n}{2} \cdot \frac{n}{2} = \frac{n^2}{2} = d.  \]
Thus we can apply Theorem \ref{ApproxCertifLB} to obtain our desired lower bound.
\end{proof}

With Corollary \ref{1C2CApproxCertif} and Theorem \ref{AMPCappoxcertif}, we are now ready to prove the desired lower bound on the randomized round complexity of $\OCTC$ in the $\ampc$ model:

\begin{theorem}\label{1C2Crandom}
Any randomized $\ampc$ algorithm that computes $\OCTC$ with error at most $1/6$ requires at least $\frac{1}{6}\log_S n - \frac{1}{2} \log_S 2 = \Omega(\log_S n)$ rounds. In particular, if $S = n^\epsilon$ for $\epsilon \in (0,1)$, then any such randomized $\ampc$ algorithm requires $\Omega(1/\epsilon)$ rounds.
\end{theorem}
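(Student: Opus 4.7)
The plan is simply to combine the two headline ingredients already established in the preceding subsections: the framework-level lower bound converting $\ampc$ round complexity into approximate certificate complexity (Theorem \ref{AMPCappoxcertif}), and the concrete lower bound $C_{1/6}(\OCTC) \ge n/4$ for the promise problem of interest (Corollary \ref{1C2CApproxCertif}). All the real work has already been done: Theorem \ref{AMPCappoxcertif} was derived via the polynomial method plus Yao's lemma, and Corollary \ref{1C2CApproxCertif} was obtained by exhibiting $n/2$ disjoint four-edge sensitive blocks on every $1$-cycle and counting the number of $1$-cycles that collapse onto a given $2$-cycle. So the final step is essentially an arithmetic chain.

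Concretely, I would let $\A$ be any $R$-round randomized $\ampc$ algorithm computing $\OCTC$ with error at most $1/6$ on machines of I/O capacity $S$, and apply Theorem \ref{AMPCappoxcertif} with $g = \OCTC$ and $\delta = 1/6$ to conclude
\[
R \;\ge\; \tfrac{1}{6}\,\log_S \tfrac{C_{1/6}(\OCTC)}{2}.
\]
Plugging in the lower bound $C_{1/6}(\OCTC) \ge n/4$ from Corollary \ref{1C2CApproxCertif} yields
\[
R \;\ge\; \tfrac{1}{6}\,\log_S \tfrac{n}{8} \;=\; \tfrac{1}{6}\log_S n - \tfrac{1}{6}\log_S 8 \;=\; \tfrac{1}{6}\log_S n - \tfrac{1}{2}\log_S 2,
\]
which is exactly the claimed bound. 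The specialization $S = n^{\eps}$ is immediate: $\log_S n = 1/\eps$, so $R \ge \frac{1}{6\eps} - O(1) = \Omega(1/\eps)$.

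There is no genuine obstacle at this stage; the proof is a two-line deduction from the results developed in this section. The conceptual difficulties that one might have expected to face here, namely (i)~handling adaptivity of the $\ampc$ model, (ii)~passing from deterministic to randomized algorithms for a \emph{partial} Boolean function, and (iii)~establishing a tight $\Omega(n)$ approximate certificate lower bound for a function whose promise set is exponentially sparse in $\{0,1\}^{\binom{n}{2}}$, were absorbed respectively into Theorem \ref{AMPCDeg}, Theorem \ref{AMPCappoxcertif} (via Yao's lemma together with the canonical distribution $\mathcal{D}_g$), and the sensitive-block framework of Theorem \ref{ApproxCertifLB} applied via the opposite-edge-pair construction of Corollary \ref{1C2CApproxCertif}. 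Thus the only task remaining in the theorem's own proof is to transcribe the inequality above.
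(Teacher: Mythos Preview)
Your proposal is correct and follows essentially the same approach as the paper: the paper's proof simply chains Theorem~\ref{AMPCappoxcertif} (in the equivalent form $2S^{6R}\ge C_{1/6}(\OCTC)$) with Corollary~\ref{1C2CApproxCertif} to obtain $2S^{6R}\ge n/4$, and then solves for $R$ to get the stated bound. There is nothing to add.
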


\begin{proof}
Given an $R$-round randomized $\ampc$ algorithm that computes $\OCTC$ with error at most $\frac{1}{6}$, Theorem \ref{AMPCappoxcertif} and Corollary \ref{1C2CApproxCertif} together imply that
\[ 2S^{6R} \ge C_{\frac{1}{6}}(\OCTC) \ge \frac{n}{4}.  \]
This implies that $R \ge\frac{1}{6}\log_S n - \frac{1}{2} \log_S 2,$
and the proof is complete.
\end{proof}

\section*{Acknowledgments}
We thank Jakub \L\k{a}cki and the co-authors of \cite{Behnezhad} for clarifications on the $\ampc$ model and valuable feedback on our specifications of the model.  

Moses Charikar was supported by a Simons Investigator Award, a Google Faculty Research Award and an Amazon Research Award.  Weiyun Ma was supported by a Stanford Graduate Fellowship.  Li-Yang Tan was supported by NSF grant CCF-1921795.

\bibliographystyle{alpha}
\bibliography{ref}

\end{document}